\documentclass[preprint,12pt]{elsarticle}

\usepackage{amssymb}
\usepackage{amsmath}
\usepackage{amsthm}

 \usepackage{lineno}

\usepackage{physics}
\usepackage{subcaption}
\usepackage{siunitx}
\usepackage{url}
\usepackage{booktabs}
\usepackage{multirow}

\usepackage{tikz}
\usepackage{eso-pic}
\usepackage{xcolor}

\newtheorem{corollary}{Corollary}[section]

\newtheorem{proposition}{Proposition}[section]
\theoremstyle{definition}

\theoremstyle{remark}

\journal{Acta Astronautica}

\renewcommand\fbox{\fcolorbox{black}{gray!10}}

\newcommand\acceptedmanuscripttext{%
	\scriptsize
	\textbf{Accepted Manuscript.} 
	This manuscript has been accepted for publication in
	\textit{Acta Astronautica}.\\
	It has not undergone final copy-editing, typesetting, or proofreading and may be outdated.\\[5pt]
	The final Version of Record is available at
	\url{https://doi.org/10.1016/j.actaastro.2026.06.004} and is distributed under the terms of a CC BY-NC-ND 4.0 license.\\[5pt]
	Please cite the published version as:\\
	\textbf{A. Nunes, S. Brás, P. Batista, Nonlinear backstepping with saturation for low-thrust station-keeping of libration point orbits, Acta Astronautica 248 (2026) 324--342.}
}

\AddToShipoutPicture*{%
	\AtPageUpperLeft{%
		\begin{tikzpicture}[remember picture,overlay]
			\node[
			anchor=north,
			yshift=-10pt
			] at (current page.north)
			{%
				\fbox{%
					\parbox{\dimexpr\textwidth-2\fboxsep-2\fboxrule}{%
						\acceptedmanuscripttext
					}%
				}%
			};
		\end{tikzpicture}
	}%
}

\begin{document}

\begin{frontmatter}



\title{Nonlinear backstepping with saturation for low-thrust station-keeping of libration point orbits}


\author[ISR]{António Nunes\corref{ANcor}} 
\ead{antonio.w.nunes@tecnico.ulisboa.pt}
\cortext[ANcor]{Corresponding author at: Institute for Systems and Robotics, Instituto Superior Técnico, Universidade de Lisboa, Portugal}

\author[ESA]{Sérgio Brás} 
\author[ISR]{Pedro Batista} 

\affiliation[ISR]{organization={Institute for Systems and Robotics, Instituto Superior Técnico, Universidade de Lisboa},
	addressline={Av. Rovisco Pais 1}, 
	postcode={1049-001 },
	city={Lisbon},
	country={Portugal}}

\affiliation[ESA]{organization={European Space Agency},
	addressline={Keplerlaan 1, PO Box 299}, 
	postcode={2200AG},
	city={Noordwijk},
	country={The Netherlands}}

\begin{abstract}
This paper presents a novel nonlinear backstepping control law for continuous, low-thrust station-keeping in the Earth-Moon system. Quasi-periodic libration point orbits are targeted under a high-fidelity model of the dynamics. Almost global uniform exponential stability guarantees are attained, as shown through Lyapunov's stability theory. Saturation of the actuators is formally included in the controller design, such that these guarantees hold even in the event of saturation. The relationship between saturation threshold, control gains, and deviation is studied and an optimal procedure for gain selection is discussed. The control solution is tested numerically through a Monte Carlo analysis over representative application cases, subject to operational errors, constraints, and external perturbations. Station-keeping under actuation saturation is validated considering a conservative threshold for typical electric propulsion systems.
\end{abstract}



\begin{keyword}
Orbital station-keeping \sep Nonlinear backstepping \sep Actuation saturation \sep Libration point orbits \sep High-fidelity ephemeris model \sep Electric propulsion



\end{keyword}

\end{frontmatter}



\section{Introduction}
\label{sec:intro}

 The interest on the operation of autonomous spacecraft in cislunar space has surged over recent years, following the renovated efforts of international space agencies, such as NASA and ESA, towards the re-consolidation of an active human presence on the Moon. In particular, the Artemis Program targets manned missions to the lunar surface as soon as 2027, following a 50 year hiatus since the pioneer endeavors of the Apollo program. These initiatives foment the development of novel spacecraft guidance, navigation, and control systems that are better prepared for the challenges ahead. 
 In regards to control, specifically, the desire to maximize mission span gives impetus to the design of increasingly efficient strategies that match the advancements in state-of-the-art propulsion systems.

Historically, most work has concentrated on the station-keeping of periodic orbits about the equilibrium points of the Circular Restricted Three-Body Problem (CR3BP) -- a well-known first approximation for the motion of spacecraft under the attraction of two massive celestial bodies, assumed to follow fixed circular orbits. Foundational contributions introduce concepts from feedback and linear optimal control \cite{farquhar1970control,breakwell1974Stationkeeping} and exploit natural dynamical symmetries through Floquet and invariant manifold theories, considering a linearization of the dynamics \cite{wiesel1983ModalControl,simo1986StationkeepingInvManif,simo1987OptimalControlHaloOrbits}. A particularly dominant line of research in classical literature is devoted to the study of impulsive control techniques that consider sudden changes in velocity. These significantly simplify the inclusion of optimization efforts, which is evidently relevant for the maximization of fuel efficiency and mission lifespan. For example, \textit{target-point} methods \cite{dwivedi1975,howell1993StationkeepingPointMethod} compute correction maneuvers through the minimization of a cost function weighting the magnitude of the actuation command with the projected satellite deviations over a number of discrete future instants, also under a linear viewpoint. Such techniques have also been complemented with global search, orbit continuation, and crossing control methods \cite{folta2010StationKeeping,pavlak2012Strategy,folta2013StationKeeping}, aiming to leverage optimal maneuvers with trajectory design by targeting desired conditions several revolutions downstream.

Despite the research interest of the aforementioned approaches, the insufficient degree of realism provided by the CR3BP for most practical applications means that the use of higher-fidelity models of cislunar dynamics is required. More specifically, it is typically necessary for these models to depict the motion of the celestial bodies through real ephemeris data, earning them the classification of high-fidelity ephemeris models (HFEMs). In addition, these models usually incorporate the gravitational contributions of other celestial bodies besides the two primaries, as well as relevant perturbations acting on the spacecraft. Station-keeping under HFEMs has also been tackled, for the most part, through impulsive control techniques that stem from those applied to the CR3BP or its derivatives. For this reason, it is common to find target-point \cite{zhang2022StationKeepingHFEM}, multiple-shooting optimization \cite{muralidharan2020ControlSF,ghorbani2013ContinuousAndImpulsive}, and crossing control \cite{petersen2019JWST} strategies in literature. Alternatively, other authors have assessed impulsive approaches anchored on discrete sliding mode control \cite{lian2014SlidingMode}, discrete linear quadratic regulation (LQR) \cite{zhang2022StationKeepingHFEM,lian2014SlidingMode}, Cauchy-Green tensor targeting \cite{Guzzetti2017stationkeeping}, and the exploitation of stretching directions \cite{muralidharan2022StretchingDirections}.

However, the demanding control commands requested by discrete solutions at each impulse make it imperative to utilize high-trust propulsion systems, typically based on chemical or cold-gas thrusters. Ultimately, such strategies preclude the use of novel electrical propulsion technologies that have shown to be superior in terms of efficiency, albeit limited to commands of relatively low magnitude  \cite{shastry2025ElectricPropulsion, Herman2016IonPropulsion}. To address this concern, recent research has focused on assessing the use of continuous strategies for the station-keeping tasks of future missions.
These solutions are also varied and have been applied to both the CR3BP and HFEMs. To this end, some authors have focused on the use of well-established control approaches, such as LQR \cite{qi2019ContinuousThrust,qi2022ContinuousThrust,nazari2014LQR_Backstepping,ghorbani2013ContinuousAndImpulsive}, introducing optimality from a linearization standpoint to reduce computational demands. Robust alternatives consider the use of  $H_2$ and $H_\infty$ control \cite{jones1993H2,kulkrani2006}, to maximize disturbance rejection. Nonlinear control has also been implemented, namely through output regulation for formation flying \cite{akiyama2018outputRegulation}, backstepping to achieve formal guarantees for stability \cite{Nunes2025backstepping}, and nonlinear programming for improved optimality, however requiring the calculation of the system's Hamiltonian \cite{ghorbani2013ContinuousAndImpulsive}. Linear programming (LP) has been studied as an extension of classic, impulsive multiple-shooting approaches, through the construction of large-scale optimization problems \cite{ulybyshev2015LinearProgramming}.
More recently, other works have sought modern alternatives, such as model predictive control (MPC) \cite{Elobaid2022MPC,aerospace9120798}, to incorporate also physical constraints in the design of the control response, albeit at an increase in computational demands. A survey of continuous (and impulsive) station-keeping strategies, including but not limited to the ones detailed above, is found in \cite{shirobokov2017Survey}.

Despite the wide variety of continuous station-keeping techniques in current literature, achieving a cohesive solution that harmonizes numerical simplicity, formal stability guarantees, and the inclusion of physical constraints remains an open challenge. The goal of this paper is to address this challenge and to develop a computationally efficient strategy that bridges theory with practice, balancing typical limitations of electric propulsion systems with rigorous guarantees of convergence. To this end, this paper proposes a novel control solution for orbital station-keeping, based on continuous nonlinear backstepping, that formally incorporates an actuation saturation constraint at the design stage. To do so, a base formulation without saturation is initially presented and almost global uniform exponential stability guarantees under HFEM dynamics are formalized. Then, by modifying the base control law, structural conditions on the control gains are derived through Lyapunov's theory on stability, showing that it is possible to further respect the saturation constraint without a sacrifice to the stability guarantees attained. This strategy goes a step beyond current literature as: \textit{(i)} saturation is formally included in the controller design phase and does not put into question stability guarantees under the full, nonlinear HFEM; and, simultaneously, \textit{(ii)} computational demands are kept at bay, as optimal gains may be calculated offline and require no on-board optimization efforts. For these reasons, the proposed approach provides meaningful advancements to the state-of-the-art in low-thrust station-keeping -- where the inclusion of actuation saturation is typically seen as an \textit{a posteriori} concern, destroying stability guarantees, or requires computationally expensive real-time optimization to be met, e.g. via LP or MPC strategies. Given their current relevance, we focus the application of the proposed solution to quasi-periodic orbits (QPOs) of the Earth-Moon system, whose dynamics are modeled through an HFEM considering meaningful external perturbations acting on the spacecraft. For improved realism, further operational constraints and errors are introduced at the navigation and control levels. The proposed strategy is evaluated through the means of numerical simulations over test cases of relevance and compared to other continuous and impulsive strategies from the literature. Moreover, the practical effects of formally including actuation saturation are evaluated, and a procedure for optimal gain and insertion point selection is discussed.


The remainder of this work is structured as follows. In Section~\ref{sec:dyn}, the CR3BP and HFEM dynamics of the Earth-Moon system are briefly covered, drawing attention to key differences in the types of trajectories that may be found. Section~\ref{sec:con} provides the formal derivation of the controller, starting with the base nonlinear backstepping law and its extension considering the inclusion of actuation saturation. Stability guarantees are assessed and formalized making recourse of Lyapunov's theory. The control law is then evaluated numerically in Section~\ref{sec:res} in three steps. First, the base control law is compared with continuous and impulsive alternatives from literature, considering objective benchmarks subject to a Monte Carlo analysis. Then, its adequacy at rejecting relevant external perturbations is assessed and station-keeping performance is stacked against the unperturbed case. Lastly, the meanders of formally including actuation saturation are discussed in practical terms, offering key insights on the gain decision process and insertion point selection through numerical means. Control under saturation is further validated through simulations. Finally, in Section~\ref{sec:conc}, the most meaningful takeaways from this research and the proposed method are summarized, and possible paths for future investigation are discussed.

\section{Dynamics}
\label{sec:dyn}

As previously detailed, this work deals exclusively with station-keeping under realistic HFEM dynamics, aiming to meet stringent demands of ongoing and future missions within cislunar space. Nonetheless, it is meaningful to briefly cover also the CR3BP formulation for its dynamical insights and characteristic scales. These are later used to scale the HFEM model and simplify the representation of the target trajectories in space. Moreover, it is relevant to discuss the differences in the types of solutions that exist in both dynamical models and how they translate from one to another -- a key problem that has been discussed extensively in the literature \cite{Nunes2026TrajectoryDesign}.

\subsection{CR3BP}
\label{subsec:cr3bp}

Restricted three-body problems (R3BPs) are a subclass of three-body dynamics that consider one of the masses to be negligible in comparison with the other two. In that case, the motion of the two massive bodies, named primaries, may be interpreted as a solution of an isolated two-body problem (2BP) \cite{szebehely1967TheoryOfOrbits}. Of particular relevance is the case where the primaries are assumed to describe circular orbits about their shared barycenter, which gives rise to the CR3BP. This provides a model for spacecraft motion near two celestial bodies that is amply considered in literature for its mathematical simplicity but dynamical complexity and richness. Moreover, it constitutes a reasonable first approximation for most two-body configurations of interest in the Solar System, namely the Earth-Moon, Sun-Earth, or Sun-Jupiter binaries.

We briefly introduce the typical CR3BP formulation in order to cover relevant topics that will be of use for the more realistic dynamical model considered in this work -- interested readers are directed to \cite{szebehely1967TheoryOfOrbits,ross3BPbook} for a more thorough exposition of the CR3BP dynamics. As the bodies are assumed to describe circular orbits about the shared barycenter, it is common to consider a \textit{synodic} reference frame that aligns both primaries along the $X$ axis and rotates along $Z$ to match the angular speed along their respective orbits. Finally, $Y$ completes the right-handed reference frame. Moreover, the distance between primaries is usually scaled to unity and their orbital period to $2\pi$, resulting in unit angular velocity. Denoting, without loss of generality, the most massive primary with subscript $1$ and the other by subscript $2$, their masses are usually scaled according to a problem-specific constant $\mu$ such that ${m_1=1-\mu}$ and $m_2=\mu$.  For the Earth-Moon system, considered in this work, $\mu=0.01215$. As a result of the aforementioned procedure, the location of the primaries, modeled as point-masses, is fixed in the synodic frame to $\mathbf{r}_1=(-\mu,0,0)$ and $\mathbf{r}_2=(1-\mu,0,0)$. In that case, the effective potential that governs the motion of the third point-mass (the spacecraft), whose position we denote by $\mathbf{r}^T(t)=\begin{bmatrix}
	x(t) & y(t) & z(t)
\end{bmatrix}$, at each instant $t$, may be established in the synodic reference frame as
\begin{equation}
	\label{eq:eff_potential}
	U = \frac{x^2(t)+y^2(t)}{2} + \frac{1-\mu}{\norm{\mathbf{r}-\mathbf{r}_1}} + \frac{\mu}{\norm{\mathbf{r}-\mathbf{r}_2}},
\end{equation}
being comprised of a Coriolis acceleration and two gravitational contributions \cite{ross3BPbook}. Note that the positions of the primaries constitute singularities in the dynamics, where $U$ is not defined.

Adopting the dot notation to represent derivatives with respect to time, the equations of motion (EoM) of the spacecraft in the synodic reference frame, over the domain of definition of $U$, become
\begin{equation}
	\label{eq:CR3BP_dynamics}
	\begin{aligned}
		\dot{\mathbf{r}}(t) &= \mathbf{I}_3\mathbf{v}(t), \\
		\dot{\mathbf{v}}(t) &= \boldsymbol{\Omega} \mathbf{v}(t) + \grad{U(\mathbf{r}(t))},
	\end{aligned}
	\qquad \text{with} \qquad
	\boldsymbol{\Omega} = \begin{bmatrix}
		0 & 2 & 0\\
		-2 & 0 & 0\\
		0 & 0 & 0
	\end{bmatrix},
\end{equation}
where $\mathbf{v}^T(t) = \begin{bmatrix}
	\dot x(t) & \dot y(t) & \dot z(t)
\end{bmatrix}$ denotes the spacecraft's velocity, $\mathbf{I}_3$ is the $3\times 3$ identity matrix, and $\grad{U(\mathbf{r}(t))}$ is the gradient of the effective potential in Eq.~\eqref{eq:eff_potential}, evaluated at $\mathbf{r}(t)$, \cite{ross3BPbook}.

The CR3BP dynamics in Eq.~\eqref{eq:CR3BP_dynamics} constitute a prominent example of a chaotic system of differential equations in dynamical systems theory, with small differences in initial conditions leading to vastly different (and most times unpredictable) solutions over time. To this end, a general solution for the CR3BP cannot be found \cite{szebehely1967TheoryOfOrbits}. Still, it is possible to show that these EoM have five equilibrium points, so-called Lagrange or libration points, which are ordered $L_1$ through $L_5$ by effective potential.  All of these points lie on the orbital plane of the primaries: the first three along the $X$ axis and the remaining two forming equilateral triangles with the primaries on both sides of said axis. Moreover, given that the dynamics are \textit{autonomous} (i.e., they do not depend explicitly on time), another classical result is that they admit an infinite number of periodic solutions in phase space (position and velocity), as first approached by Poincaré \cite{poincare1892MethodesNouvelles}. A subset of these solutions that is of particular relevance is that of periodic orbits that revolve predominantly about one of the Lagrange points, considered optimal candidates for many missions within the Earth-Moon and Earth-Sun systems, for example, due to their placement with respect to the celestial bodies and communications clearance. Recent examples of spacecraft taking advantage of such orbits in the aforementioned systems include the James-Webb Space Telescope (JWST) \cite{gardner2006JWST}, the Interstellar Mapping and Acceleration Probe (IMAP) \cite{McComas2025IMAP}, and the Chang'e-4 spacecraft mission \cite{li2021ChangE}.

\subsection{HFEM}
\label{subsec:hfem}

When it comes to the HFEM, it is common to consider the influence of $P>2$ celestial bodies -- i.e., beyond the two of the CR3BP. To this end, the dynamics of the spacecraft are typically established in an inertial frame centered at one of the celestial bodies considered. In this work, we select, without loss of generality, $j=1$ to be the Earth, such that the dynamics may be established in the well-known J2000 inertial reference frame. Under the standard assumption that the celestial bodies may be adequately modeled as point-masses and that no external perturbations act on the spacecraft, the resulting (non-actuated) spacecraft EoM may be written as
\begin{equation}
	\label{eq:HFEM_dynamics}
	\begin{aligned}
		\dot{\mathbf{r}}(t) &= \mathbf{I}_3 \mathbf{v}(t), \\
		\dot{\mathbf{v}}(t) &=  -\mu_1\frac{\mathbf{r}(t)}{\norm{\mathbf{r}(t)}^3} + \sum_{j=2}^{P} \mu_j\left(\frac{\mathbf{r}_j(t)-\mathbf{r}(t)}{\norm{\mathbf{r}_j(t)-\mathbf{r}(t)}^3} - \frac{\mathbf{r}_j(t)}{\norm{\mathbf{r}_j(t)}^3}\right),
	\end{aligned}
\end{equation}
where $\mu_j$ denotes gravitational parameter of the $j$-th celestial body and $\mathbf{r}_j$ is its position in the J2000 inertial frame, such that $\mathbf{r}_1(t)=\mathbf{0},~\forall t$.
Similarly to the CR3BP, these dynamics are defined over $\mathbf{r}(t)\in\mathbb{R}^3\setminus\mathcal{S}(t)$ and $\mathbf{v}(t)\in\mathbb{R}^3$, where
\begin{equation}
	\label{eq:set_singularities}
	\mathcal{S}(t):=\left\{ \mathbf{x}\in\mathbb{R}^3: \mathbf{x}=\mathbf{r}_j(t),~  j=1,\dots,P\right\}
\end{equation}
represents the set of position singularities at time $t$, i.e. the location of the celestial bodies considered. For the sake of consistency, we consider the dynamics in Eq.~\eqref{eq:HFEM_dynamics} to be scaled according to the characteristic length, time, and mass scales of the CR3BP, from Section~\ref{subsec:cr3bp}.

We note that, due to the explicit time-dependence of the celestial bodies' position, $\mathbf{r}_j(t)$, no change of coordinates can make the HFEM EoM autonomous for $P>2$. Hence, when compared to the CR3BP, this dynamical model has no constants of motion or relevant symmetries, inhibiting the existence of truly periodic solutions. However, it is also the case that the HFEM is often interpreted as a perturbation of the CR3BP \cite{gomez2002SolarSystemFrequencies}, such that quasi-periodic trajectories may be found in the vicinity of CR3BP periodic orbits, provided that relevant Hamiltonian structures are preserved \cite{AlmanzaSoto2025PersistenceOfStructures}. Exceptions may be found in transition-challenging regions of phase space or for periodic orbits close to resonance with the period of the dominant bodies \cite{park2025CharacterizationOfL2Analogs,sanaga2024ChallengingRegion} but, for most general purposes, a trajectory equivalent to a periodic orbit from the CR3BP may typically be found in the HFEM. We take advantage of this fact to transition periodic solutions from the CR3BP into QPOs in the HFEM, which serve as target trajectories for station-keeping purposes. To do so, a method developed in previous work is utilized \cite{Nunes2026TrajectoryDesign}.

Given a target trajectory, it is useful to write the dynamics in error form with the aim of developing a control strategy that ensures accurate tracking. In this sense, consider a nominal point in phase space that evolves along the target solution with time, denoted $(\mathbf{r}^*(t),\mathbf{v}^*(t))$, and let $\mathbf{z}_1(t) := \mathbf{r}(t)-\mathbf{r}^*(t)$ and $\mathbf{z}_2(t) := \mathbf{v}(t)-\mathbf{v}^*(t)$ be the position and velocity deviations from this point, respectively. Then, we may write the controlled error dynamics as
\begin{equation}
	\label{eq:HFEM_variational}
	\begin{aligned}
		\dot{\mathbf{z}}_1(t) &= \mathbf{z}_2(t),\\
		\dot{\mathbf{z}}_2(t) & = \mathbf{f}_a(t,\mathbf{z}_1(t)) + \mathbf{u}(t),
	\end{aligned}
\end{equation}
where $\mathbf{u}(t)$ is a control acceleration and
\begin{equation*}
	\begin{aligned}
		\mathbf{f}_a(t,\mathbf{z}_1(t)) :=& \sum_{j=1}^{P}\mu_j \left(\frac{\mathbf{r}_j(t)-\mathbf{r}(t)}{\norm{\mathbf{r}_j(t)-\mathbf{r}(t)}^3}- \frac{\mathbf{r}_j(t)-\mathbf{r}^*(t)}{\norm{\mathbf{r}_j(t)-\mathbf{r}^*(t)}^3}\right)\\
		=&\sum_{j=1}^{P}\mu_j \left(\frac{\mathbf{w}_j(t)-\mathbf{z}_1(t)}{\norm{\mathbf{w}_j(t)-\mathbf{z}_1(t)}^3} - \frac{\mathbf{w}_j(t)}{\norm{\mathbf{w}_j(t)}^3}\right)
	\end{aligned}
\end{equation*}
is the acceleration error induced by the deviation from the target trajectory, with $\mathbf{w}_j(t) := \mathbf{r}_j(t)-\mathbf{r}^*(t)$.
Naturally, the set of singularities previously identified in Eq.~\eqref{eq:set_singularities} is maintained. Conversely, we may define the domain of definition of the vector field of Eq.~\eqref{eq:HFEM_variational} by $(\mathbf{z}_1(t), \mathbf{z}_2(t))\in\mathcal{D}(t)$, where
\begin{equation*}
	 \mathcal{D}(t):=\left\{(\mathbf{x},\mathbf{y})\in\mathbb{R}^6: \mathbf{x} \neq \mathbf{w}_j(t),~j=1,\dots,P \right\}
\end{equation*}
corresponds to the set of all points in phase space of Eq.~\eqref{eq:HFEM_dynamics} whose position $\mathbf{r}(t)$ does not coincide with one of the celestial bodies at time $t$, $\mathbf{r}_j(t)$. This notation proves useful when establishing the propositions presented in the following section.

\section{Controller Formulation}
\label{sec:con}

In this work, the orbital station-keeping problem under HFEM dynamics is tackled through a strategy derived from nonlinear control theory and Lyapunov's concepts on stability\footnote{A commendable introduction to these topics is found in \cite{khalil2001NLSystems}.}. In particular, we resort to the well-known backstepping technique to develop a base control law that ensures uniform exponential stability (UES), in the sense of Lyapunov, of the origin of the dynamics in Eq.~\eqref{eq:HFEM_variational}. By \textit{uniform}, we mean that these guarantees hold for all $t\geq t_0$, where $t_0\geq0$ is a given initial instant. Without loss of generality, we assume that time is established relatively to a chosen epoch, such that $t_0=0$. We show that the guarantees provided by the nonlinear control law are \textit{almost global}, in the sense that they hold for all initial conditions and trajectories, apart from a set of measure zero. Later, the domain of analysis is restricted to allow for the formal inclusion of an actuation saturation constraint, without a sacrifice to the exponential nature of the stability guarantees attained over said domain.

\subsection{Base Control Law}
\label{sec:base_law}

In regards to the base control law, we extend the formulation devised in previous work for low-fidelity models \cite{Nunes2025backstepping} to the HFEM dynamics, clarifying the nature of the stability guarantees attained. This is formalized in the following statement.
\begin{proposition}
	\label{prop:base_backstepping}
	Consider the HFEM dynamics in Eq.~\eqref{eq:HFEM_variational}. Let $\mathbf{r}^*(t)\in\mathbb{R}^3\setminus\mathcal{S}(t)$ and $\mathbf{v}^*(t)\in \mathbb{R}^3$ denote the position and velocity states of a target trajectory, over time $t\geq0$. Recall that $\mathbf{z}_1(t) =\mathbf{r}(t)-\mathbf{r}^*(t)$ and ${\mathbf{z}_2(t)=\mathbf{v}(t)-\mathbf{v}^*(t)}$ denote the spacecraft position and velocity deviations, respectively, at each instant. Then, the control law
	\begin{equation}
		\label{eq:base_law}
		\mathbf{u}(t) = - (\mathbf{I}_3 + \mathbf{K}_2 \mathbf{K}_1) \mathbf{z}_1(t) - (\mathbf{K}_1 + \mathbf{K}_2) \mathbf{z}_2(t) - \mathbf{f}_a(t,\mathbf{z}_1(t)),
	\end{equation}
	where $\mathbf{K}_1$ and $\mathbf{K}_2$ are constant (symmetric) positive definite gain matrices, guarantees almost global uniform exponential stability of the origin of Eq.~\eqref{eq:HFEM_variational}.
\end{proposition}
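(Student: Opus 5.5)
The plan is a standard two-step backstepping construction on the error dynamics in Eq.~\eqref{eq:HFEM_variational}. We will treat $\mathbf{z}_2$ as a virtual control for the position subsystem and introduce the backstepping variable
\begin{equation*}
	\mathbf{x}_2(t) := \mathbf{z}_2(t) + \mathbf{K}_1\mathbf{z}_1(t),
\end{equation*}
which is the error between the velocity deviation and the stabilizing virtual input $-\mathbf{K}_1\mathbf{z}_1$. Writing $\mathbf{x}_1 := \mathbf{z}_1$, we then have $\dot{\mathbf{x}}_1 = -\mathbf{K}_1\mathbf{x}_1 + \mathbf{x}_2$ and
\begin{equation*}
	\dot{\mathbf{x}}_2 = \mathbf{f}_a(t,\mathbf{z}_1) + \mathbf{u} + \mathbf{K}_1\mathbf{z}_2 .
\end{equation*}

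Substituting the control law of Eq.~\eqref{eq:base_law}, the term $\mathbf{f}_a$ cancels exactly. Expanding the remaining linear terms gives
\begin{equation*}
	\dot{\mathbf{x}}_2 = -\mathbf{z}_1 - \mathbf{K}_2\mathbf{K}_1\mathbf{z}_1 - \mathbf{K}_2\mathbf{z}_2 = -\mathbf{x}_1 - \mathbf{K}_2\mathbf{x}_2 ,
\end{equation*}
so the closed loop in $(\mathbf{x}_1,\mathbf{x}_2)$ is the linear time-invariant system
\begin{equation*}
	\dot{\mathbf{x}}_1 = -\mathbf{K}_1\mathbf{x}_1+\mathbf{x}_2, \qquad \dot{\mathbf{x}}_2 = -\mathbf{x}_1-\mathbf{K}_2\mathbf{x}_2 .
\end{equation*}

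We will take $V = \tfrac12\|\mathbf{x}_1\|^2+\tfrac12\|\mathbf{x}_2\|^2$. Its derivative is $\dot V = -\mathbf{x}_1^T\mathbf{K}_1\mathbf{x}_1 - \mathbf{x}_2^T\mathbf{K}_2\mathbf{x}_2 \le -2\lambda V$, where $\lambda = \min(\lambda_{\min}(\mathbf{K}_1),\lambda_{\min}(\mathbf{K}_2))>0$, because the cross terms $\mathbf{x}_1^T\mathbf{x}_2$ cancel. This yields exponential decay of $(\mathbf{x}_1,\mathbf{x}_2)$ with rates independent of $t_0$. Since the map $(\mathbf{z}_1,\mathbf{z}_2)\mapsto(\mathbf{x}_1,\mathbf{x}_2)$ is a constant invertible linear transformation, with block-triangular matrix and identity diagonal blocks, the quadratic bounds $c_1\|\mathbf{z}\|^2\le V\le c_2\|\mathbf{z}\|^2$ hold with constants depending only on $\mathbf{K}_1$. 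Uniform exponential stability of the origin in the original coordinates then follows from the standard Lyapunov theorem in \cite{khalil2001NLSystems}.

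The main obstacle is justifying the qualifier \emph{almost global}. The closed loop above is globally defined, but the true vector field, and hence the control law itself through $\mathbf{f}_a$, is only defined on $\mathcal{D}(t)$. A closed-loop trajectory is therefore only meaningful while $\mathbf{z}_1(t)\neq\mathbf{w}_j(t)$ for all $j$. We will argue as follows.

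1. The closed-loop solution $\mathbf{z}_1(t)$ is explicitly given by the linear flow, so it is an analytic function of $t$ and linear in the initial condition $\mathbf{z}(0)$.
2. For each fixed body $j$ and each time $t$, the condition $\mathbf{z}_1(t;\mathbf{z}(0)) = \mathbf{w}_j(t)$ imposes three independent equations on the six-dimensional initial condition. The three equations are independent because the position block $\Phi_{12}(t)$ of the linear flow, which multiplies $\mathbf{z}_2(0)$, is invertible for almost all $t$.
3. Taking the union over $t\ge0$ adds one dimension, so the set of initial conditions whose trajectory ever hits a singularity is contained in a countable union of images of four-dimensional sets under smooth maps. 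Such a set has Lebesgue measure zero in $\mathbb{R}^6$.
4. I will assume $\mathbf{w}_j$ is smooth, which holds since the ephemerides and the target trajectory are smooth, and verify the invertibility of $\Phi_{12}(t)$ at all but isolated times.

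For every initial condition outside this null set, the solution exists for all $t\ge0$ within $\mathcal{D}(t)$ and the exponential bound applies, which establishes the claim.
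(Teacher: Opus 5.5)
Your proposal is correct and matches the paper's proof: your $V$ in the coordinates $(\mathbf{x}_1,\mathbf{x}_2)$ is exactly Eq.~\eqref{eq:base_lyap}, the cross-term cancellation gives the same $\dot V$ as Eq.~\eqref{eq:base_lyap_dot} before invoking Khalil's theorem, and the almost-global part uses the same dimension count (a four-dimensional set under a $C^1$ map into $\mathbb{R}^6$ has measure zero). The only difference is how you parameterize the collision set: the paper uses the collision velocity and time, $(\mathbf{y},t)\mapsto\boldsymbol{\Phi}(0,t)\begin{bmatrix}\mathbf{w}_j^T(t) & \mathbf{y}^T\end{bmatrix}^T$, which needs no invertibility of $\Phi_{12}(t)$ (the three position equations are independent for every $t$ simply because $\boldsymbol{\Phi}(t,0)$ is invertible), so your step 4 is an avoidable detour rather than a gap.
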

\begin{proof}
	We divide the proof into two parts. Firstly, we show that the origin of the controlled system in Eq.~\eqref{eq:HFEM_variational} is uniformly exponentially stable over the domain of definition, i.e. $\mathbf{z}(t) \in \mathcal{D}(t)\subset \mathbb{R}^6$. Then, we show that exponential convergence towards the origin holds for almost all trajectories beginning in $\mathbb{R}^6$. In other words, we show that the set of initial conditions from which trajectories start outside or eventually exit the domain of definition of the vector field have measure zero in $\mathbb{R}^6$.
	
	The proof for the first statement follows from meeting the conditions of \cite[Theorem 4.10]{khalil2001NLSystems}. Namely, it requires the existence of a Lyapunov function $V(t,\mathbf{z}(t))$ that satisfies, $\forall t\geq 0$ and $\mathbf{z}^T(t)=\begin{bmatrix}
		\mathbf{z}_1^T(t) & \mathbf{z}_2^T(t)
	\end{bmatrix}\in\mathcal{D}(t)$, 
	\begin{equation*}
			a \norm{\mathbf{z}(t)}^\gamma \leq V(t,\mathbf{z}(t)) \leq b \norm{\mathbf{z}}^\gamma \quad \text{and} \quad \dot{V}(t,\mathbf{z}(t)) \leq - c \norm{\mathbf{z}(t)}^\gamma,
	\end{equation*}
	where $a$, $b$, $c$, and $\gamma$ are positive constants.
	
	As such, consider the time-invariant candidate Lyapunov function
	\begin{equation}
		\label{eq:base_lyap}
		\begin{aligned}
			V(\mathbf{z}(t)) &= \frac{1}{2}\mathbf{z}_1^T(t) \mathbf{z}_1(t) +  \frac{1}{2}\left[\mathbf{z}_2(t) + \mathbf{K}_1\mathbf{z}_1(t)\right]^T\left[\mathbf{z}_2(t) + \mathbf{K}_1\mathbf{z}_1(t)\right] \\ &= \frac{1}{2}\mathbf{z}^T(t)\mathbf{X}\mathbf{z}(t),
		\end{aligned}
	\end{equation}
	where
	\begin{equation*}
		\mathbf{X} = \begin{bmatrix}
			\mathbf{I}_3 + \mathbf{K}_1 \mathbf{K}_1 & \mathbf{K}_1\\
			\mathbf{K}_1 & \mathbf{I}_3
		\end{bmatrix} =\mathbf{X}^T.
	\end{equation*}
	For $\mathbf{K}_1\succ0$, Eq.~\eqref{eq:base_lyap} clearly satisfies $V(\mathbf{z}(t))>0,~\forall \mathbf{z}(t)\neq \mathbf{0}$, such that 
	\begin{equation}
		\label{eq:X_aux}
		\frac{1}{2}\lambda_\text{min} (\mathbf{X}) \norm{\mathbf{z}(t)}^2 \leq V(\mathbf{z}(t)) \leq \frac{1}{2}\lambda_\text{max}(\mathbf{X})\norm{\mathbf{z}(t)}^2, ~\forall \mathbf{z}(t)\in \mathbb{R}^6\supset \mathcal{D}(t).
	\end{equation}
	For $\mathbf{z}(t)\in\mathcal{D}(t)$, the time-derivative of the candidate Lyapunov function from Eq.~\eqref{eq:base_lyap} may be written as
	\begin{equation*}
			\dot{V}(\mathbf{z}(t)) = \mathbf{z}_1^T(t)\mathbf{z}_2(t) + \left[\mathbf{z}_2(t) + \mathbf{K}_1\mathbf{z}_1(t)\right]^T\left[\mathbf{K}_1\mathbf{z}_2(t)+\mathbf{f}_a(t,\mathbf{z}_1(t))+\mathbf{u}(t)\right],
	\end{equation*}
	after substitution of the dynamics from Eq.~\eqref{eq:HFEM_variational}. The application of the control law leads to the quadratic form
	\begin{equation}
		\label{eq:base_lyap_dot}
		\begin{aligned}
			\dot{V}(\mathbf{z}(t)) &= -\mathbf{z}_1(t)^T \mathbf{K}_1 \mathbf{z}_1(t) - \left[\mathbf{z}_2(t) + \mathbf{K}_1\mathbf{z}_1(t)\right]^T\mathbf{K}_2\left[\mathbf{z}_2(t) + \mathbf{K}_1\mathbf{z}_1(t)\right] \\ &= -\mathbf{z}^T(t)\mathbf{Y}\mathbf{z}(t),
		\end{aligned}
	\end{equation}
	where
	\begin{equation*}
		\mathbf{Y} = \begin{bmatrix}
			\mathbf{K}_1 + \mathbf{K}_1\mathbf{K}_2\mathbf{K}_1 & \mathbf{K}_1\mathbf{K}_2\\
			\mathbf{K}_2\mathbf{K}_1 & \mathbf{K}_2
		\end{bmatrix}=\mathbf{Y}^T.
	\end{equation*}
	Once more, for $\mathbf{K}_1,\mathbf{K}_2\succ0$, Eq.~\eqref{eq:base_lyap_dot} satisfies $\dot{V}(\mathbf{z}(t))<0,~\forall \mathbf{z}(t)\neq \mathbf{0}$, such that
	\begin{equation}
		\label{eq:dot_V_aux}
		\dot{V}(\mathbf{z}(t)) \leq - \lambda_\text{min}(\mathbf{Y})\norm{\mathbf{z}(t)}^2, ~\forall\mathbf{z}(t)\in\mathcal{D}(t).
	\end{equation}
	To this end, the Lyapunov function in Eq.~\eqref{eq:base_lyap} verifies the conditions of \cite[Theorem 4.10]{khalil2001NLSystems} with the positive constants
	\begin{equation*}
		a=\frac{1}{2}\lambda_\text{min}(\mathbf{X}), \quad b=\frac{1}{2}\lambda_\text{max}(\mathbf{X}), \quad c=\lambda_\text{min}(\mathbf{Y}), \quad  \text{and} \quad \gamma=2,
	\end{equation*} 
	such that the origin of the dynamics in Eq.~\eqref{eq:HFEM_variational} is a uniformly exponentially stable equilibrium point, by application of the control law in Eq.~\eqref{eq:base_law}.
	
	To prove the second statement, we first note that the conditions at $t=0$ uniquely define a solution over time, assuming it starts and remains within the domain of definition of the dynamics in Eq.~\eqref{eq:HFEM_variational}. Hence, to show that the UES guarantees are almost global, it is sufficient to show that the set of initial conditions where these guarantees fail has measure zero in $\mathbb{R}^6$. Given the exponential stability of the flow and the linear nature of the controlled dynamics in Eq.~\eqref{eq:HFEM_variational}, we note that the only exceptions are therefore: \textit{(i)} initial conditions that coincide exactly with the singularities' position at $t=0$; or \textit{(ii)} initial conditions leading to an intersection with any singularity in finite time. Clearly, since $\mathcal{D}(0)$ is simply $\mathbb{R}^6$ devoid of a finite amount of point singularities in position, the set of initial conditions that coincide with the singularities at $t=0$ has measure zero in $\mathbb{R}^6$. In regards to the initial conditions that lead to a collision, time-dependence makes the analysis non-trivial. Nonetheless, in \ref{ap:collision_traj_proof} we show that these also constitute a set of measure zero in $\mathbb{R}^6$, proving that the UES guarantees attained are almost global in nature.
\end{proof}

By restricting the domain of analysis, it is possible to completely avoid the singularities in the dynamics and attain full guarantees of exponential stability over the entirety of said domain, as per the following statement.
\begin{corollary}[of Proposition~\ref{prop:base_backstepping}]
	\label{cor:ell_restriction}
	Consider the conditions of Proposition~\ref{prop:base_backstepping} and define the open ellipsoid
	\begin{equation}
		\label{eq:ell_set}
		\mathcal{E} = \left\{\mathbf{y} \in \mathbb{R}^6: \mathbf{y}^T\mathbf{X}\mathbf{y} < \lambda_{\min}(\mathbf{X}) r^2\right\}, \quad \text{with} \quad r = \min_{\substack{j=1,\dots,P\\ t\in[0, t_s]}} \norm{\mathbf{w}_j(t)},
	\end{equation}
	where $t_s>0$ is the final instant of analysis. Then, all trajectories beginning at $\mathbf{z}_0:=\mathbf{z}(0) \in \mathcal{E}$ converge exponentially fast to the origin of Eq.~\eqref{eq:HFEM_variational}, over $t\in[0,t_s]$.
\end{corollary}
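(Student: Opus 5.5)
The plan is a standard invariance (sublevel-set) argument built on the Lyapunov function $V(\mathbf{z})=\frac{1}{2}\mathbf{z}^T\mathbf{X}\mathbf{z}$ from Eq.~\eqref{eq:base_lyap}. Note first that $\mathcal{E}$ is exactly the sublevel set $\{V<\frac{1}{2}\lambda_{\min}(\mathbf{X})r^2\}$. Next, $\mathcal{E}$ avoids every singularity over $[0,t_s]$: if $\mathbf{y}^T=\begin{bmatrix}\mathbf{y}_1^T & \mathbf{y}_2^T\end{bmatrix}\in\mathcal{E}$, then by the left inequality of Eq.~\eqref{eq:X_aux},
\[
\norm{\mathbf{y}_1}^2\le\norm{\mathbf{y}}^2\le \frac{\mathbf{y}^T\mathbf{X}\mathbf{y}}{\lambda_{\min}(\mathbf{X})}<r^2\le\norm{\mathbf{w}_j(t)}^2
\]
for all $j$ and all $t\in[0,t_s]$. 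Hence $\mathbf{y}_1\neq\mathbf{w}_j(t)$, i.e.\ $\mathcal{E}\subset\mathcal{D}(t)$ for every $t\in[0,t_s]$. We assume $r>0$; this holds because $\mathbf{r}^*(t)\notin\mathcal{S}(t)$, the functions $\mathbf{w}_j$ are continuous, and $[0,t_s]$ is compact, so the minimum is attained and is positive.

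The key step is forward invariance of $\mathcal{E}$ on $[0,t_s]$. Take $\mathbf{z}_0\in\mathcal{E}$. Inside $\mathcal{E}$ the closed-loop vector field is smooth, so a unique solution exists locally. Let $T$ be the supremum of times $\tau\le t_s$ such that $\mathbf{z}(t)\in\mathcal{E}$ for all $t\in[0,\tau)$. On $[0,T)$ the trajectory lies in $\mathcal{D}(t)$, so Eq.~\eqref{eq:dot_V_aux} gives $\dot V\le0$ and therefore $V(\mathbf{z}(t))\le V(\mathbf{z}_0)<\frac{1}{2}\lambda_{\min}(\mathbf{X})r^2$. The trajectory thus stays in the compact set $\{V\le V(\mathbf{z}_0)\}$, which sits strictly inside $\mathcal{E}$ and at positive distance from the singular positions. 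By continuity the solution cannot leave $\mathcal{E}$ or cease to exist before $t_s$, so $T=t_s$. Making this continuation argument rigorous is the only delicate point. The concern is finite escape or hitting a singularity, and both are excluded because the closed loop is linear, $\dot{\mathbf{z}}_1=\mathbf{z}_2$ and $\dot{\mathbf{z}}_2=-(\mathbf{I}_3+\mathbf{K}_2\mathbf{K}_1)\mathbf{z}_1-(\mathbf{K}_1+\mathbf{K}_2)\mathbf{z}_2$, whenever the law is defined.

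Finally, the exponential rate follows on $[0,t_s]$. Combining $\dot V\le-\lambda_{\min}(\mathbf{Y})\norm{\mathbf{z}}^2\le-\frac{2\lambda_{\min}(\mathbf{Y})}{\lambda_{\max}(\mathbf{X})}V$ with the comparison lemma gives $V(\mathbf{z}(t))\le V(\mathbf{z}_0)e^{-2\lambda_{\min}(\mathbf{Y})t/\lambda_{\max}(\mathbf{X})}$. Using Eq.~\eqref{eq:X_aux} again, this yields
\[
\norm{\mathbf{z}(t)}\le\sqrt{\lambda_{\max}(\mathbf{X})/\lambda_{\min}(\mathbf{X})}\,\norm{\mathbf{z}_0}\,e^{-\lambda_{\min}(\mathbf{Y})t/\lambda_{\max}(\mathbf{X})}
\]
for all $t\in[0,t_s]$, which is the claimed exponential convergence. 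This is the same estimate as in \cite[Theorem 4.10]{khalil2001NLSystems} with the constants from Proposition~\ref{prop:base_backstepping}, now valid for every trajectory starting in $\mathcal{E}$ rather than almost every one.
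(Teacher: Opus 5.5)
Your proof is correct and follows the same route as the paper. Like the paper, you identify $\mathcal{E}$ as a strict sublevel set of $V$ and show $\mathcal{E}\subset\mathcal{D}(t)$ for all $t\in[0,t_s]$ via $\norm{\mathbf{y}}^2\le \mathbf{y}^T\mathbf{X}\mathbf{y}/\lambda_{\min}(\mathbf{X})<r^2$; you then deduce forward invariance from the decrease of $V$ and invoke the exponential estimate of Proposition~\ref{prop:base_backstepping}. Your continuation argument, the justification that $r>0$, and the explicit rate $\lambda_{\min}(\mathbf{Y})/\lambda_{\max}(\mathbf{X})$ make rigorous steps that the paper states only briefly.
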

\begin{proof}
	Since it has been established that $\mathbf{X}\succ0$, one has, by construction,
	\begin{equation*}
		\mathbf{y}^T\mathbf{X}\mathbf{y}\geq \lambda_{\min}(\mathbf{X})\norm{\mathbf{y}}^2,~\forall \mathbf{y}\in \mathbb{R}^6,
	\end{equation*}
	such that
	\begin{equation*}
		\norm{\mathbf{y}}^2\leq \frac{\mathbf{y}^T \mathbf{X} \mathbf{y} }{\lambda_{\min}(\mathbf{X})}< r^2,~\forall \mathbf{y}\in \mathcal{E}.
	\end{equation*}
	The ellipsoid $\mathcal{E}$ is thus a subset of the largest open $\ell_2$ norm-ball, centered at the origin, that excludes the closest singularity over $t\in[0,t_s]$. It is therefore possible to state that $\mathcal{E}\subset \mathcal{D}(t),~\forall t \in [0, t_s]$. Moreover, $\mathcal{E}$ constitutes a strict sub-level set of the Lyapunov function from Eq.~\eqref{eq:base_lyap}, i.e.
	\begin{equation*}
		\mathcal{E} = \{\mathbf{y}\in \mathbb{R}^6: V(\mathbf{y})< \frac{1}{2}\lambda_{\min}(\mathbf{X})r^2\}.
	\end{equation*}
	
	Satisfying the conditions of Proposition~\ref{prop:base_backstepping} ensures the time-derivative of the Lyapunov function is negative definite in $\mathcal{D}(t),~\forall t\geq 0$, such that $V(\mathbf{z}(t))$ is strictly decreasing for $\mathbf{z}(t)\in\mathcal{E}\subset \mathcal{D}(t)$. Trajectories starting inside $\mathcal{E}$ must thus remain in this sub-level set over ${t\in[0,t_s]}$, i.e. $\mathcal{E}$ is (strictly) \textit{forward-invariant}. Since $\mathcal{E}$ excludes all singularities of the dynamics in Eq.~\eqref{eq:HFEM_variational} over ${t\in[0,t_s]}$ and contains the equilibrium point at the origin, trajectories beginning inside this set are guaranteed to converge exponentially fast to $\mathbf{z}=\mathbf{0}$ during the time interval under analysis, following Proposition~\ref{prop:base_backstepping}.
\end{proof}

We highlight that the constraint in time, $t\in[0,t_s]$, in Corollary~\ref{cor:ell_restriction} is mostly a formality due to the time-dependence and limited availability of planetary ephemeris data. Given that missions usually have a limited duration, this concern is inconsequential in practical terms. Still, for nearly-repeating trajectories such as QPOs, it may be the case that $\min_{j=1,\dots,P}\norm{\mathbf{w}_j(t)}$ may be safely bounded for all $t\geq0$, making it possible to formalize UES guarantees.

\subsection{Control Law with Actuation Saturation}
\label{subsec:con_sat}

In order to improve the applicability of the proposed station-keeping strategy in realistic scenarios, we ponder the introduction of physical constraints to the operation of the actuators. In particular, given that this strategy has been developed with the intent of enabling the use of low-thrust electrical propulsion systems, a logical constraint to be included is that of an actuation saturation threshold. In this paper, we proceed under the assumption that the control command can be oriented freely in $3$-dimensional space and that the saturation constraint acts on the entire control command. In other words, we consider that its magnitude is upper bounded by $\norm{\mathbf{u}(t)}\leq u_\text{sat}$, where $u_\text{sat}>0$ is the established saturation threshold -- a physical characteristic of the specific propulsion technology employed. This assumption is on par with other continuous strategies in the literature which, however, do not typically incorporate the saturation constraint formally in the controller design, e.g. \cite{qi2019ContinuousThrust}.

To meet the saturation constraint, we consider that the original control law is relieved at the level of the linear term, i.e.
\begin{equation}
	\label{eq:relieved_law}
	\mathbf{u}(t) = -\beta(t) \mathbf{K} \mathbf{z}(t) - \mathbf{f}_a(t,\mathbf{z}_1(t)), ~ \text{with} ~ \mathbf{K} := \begin{bmatrix}
		\mathbf{I}_3 + \mathbf{K}_2\mathbf{K}_1 &
		\mathbf{K}_1 + \mathbf{K}_2
	\end{bmatrix},
\end{equation}
where $\beta(t) \in [0,1]$ is chosen such that $\norm{\mathbf{u}(t)}\leq u_\text{sat},~\forall t\geq 0$. Evidently, this approach immediately restricts the analysis to saturation thresholds larger than the maximum foreseeable value of $\norm{\mathbf{f}_a(t,\mathbf{z}_1(t))}$, which is acknowledged as a limitation that grants further discussion at a later point. For this reason, the quality of the upper bound for the nonlinear acceleration error plays an important part in this study, as later discussed.

Given that, by application of the control law from Eq.~\eqref{eq:relieved_law}, the nonlinear term of the EoM in Eq.~\eqref{eq:HFEM_variational} is eliminated, the resulting dynamics are linear with respect to $\mathbf{z}(t)$, i.e.
\begin{equation}
	\label{eq:LTV_dynamics}
	\dot{\mathbf{z}}(t) = \begin{bmatrix}
		\mathbf{0}_3 & \mathbf{I}_3 \\
		-(\mathbf{I}_3 + \mathbf{K}_2 \mathbf{K}_1)\beta(t) & -(\mathbf{K_1} + \mathbf{K}_2)\beta(t)
	\end{bmatrix} \mathbf{z}(t).
\end{equation}
Moreover, if the gain matrices are selected as $\mathbf{K}_1 = k_1 \mathbf{I}_3$ and $\mathbf{K}_2 = k_2 \mathbf{I}_3$, it is possible to separate the dynamics coordinate-wise into three $2$-dimensional systems in the form of
\begin{equation}
	\label{eq:LTV_dynamics_separate}
	\begin{bmatrix}
		\dot{\mathbf{z}}_1^{(i)}(t)\\
		\dot{\mathbf{z}}_2^{(i)}(t)
	\end{bmatrix}= \begin{bmatrix}
		0 & 1\\
		-(1+k_1 k_2)\beta(t) & -(k_1 + k_2)\beta(t)
	\end{bmatrix}\begin{bmatrix}
		\mathbf{z}_1^{(i)}(t)\\
		\mathbf{z}_2^{(i)}(t)
	\end{bmatrix},
\end{equation}
where $\mathbf{y}^{(i)}$ denotes the $i$-th entry of $\mathbf{y}$. In this case, verifying stability of the original system from Eq.~\eqref{eq:LTV_dynamics} is equivalent to verifying it for the individual systems along each coordinate. Since $\beta(t)$ is expected to vary with time to ensure respect for the saturation constraint, the dynamics in Eq.~\eqref{eq:LTV_dynamics_separate} constitute a linear time-varying system (LTV). Hence, in spite of its linear nature, the analysis of the system's poles is insufficient to infer on the stability of Eqs.~\eqref{eq:LTV_dynamics_separate} and \eqref{eq:LTV_dynamics}. 

By substitution of the relieved control law from Eq.~\eqref{eq:relieved_law} in the Lyapunov candidate function in Eq.~\eqref{eq:base_lyap}, assuming $\mathbf{K}_1=k_1\mathbf{I}_3$ and $\mathbf{K}_2=k_2\mathbf{I}_3$, one gets, after some simplifications,
\begin{equation*}
	\dot{V}(\mathbf{z}(t)) = -\frac{1}{2}\mathbf{z}^T(t) \mathbf{U} \mathbf{z}(t),
\end{equation*}
where
\begin{equation*}
	\mathbf{U} = \begin{bmatrix}
		2( k_1 +  k_1^2 k_2)\beta(t) \mathbf{I}_3 & \left[(k_1^2+ 2k_1 k_2 +1)\beta(t) -k_1^2 -1\right]\mathbf{I}_3\\
		\left[ (k_1^2+ 2k_1 k_2 +1)\beta(t) -k_1^2 -1\right]\mathbf{I}_3 & 2\left[(k_1 +  k_2)\beta(t) - k_1\right] \mathbf{I}_3
	\end{bmatrix}
\end{equation*}
is symmetric.
For the purposes of ensuring stability, we are interested in evaluating the limit conditions on $\beta(t)$ that guarantee $\mathbf{U}$ is positive definite. In particular, we may exploit the separability of the original system into its independent $2\times2$ constituents. Since each constituent is symmetric, a sufficient condition for positive definiteness is given by Sylvester's criterion,
\begin{equation*}
	\left\{\begin{aligned}
		& (k_1  + k_1^2 k_2)\beta(t) > 0, \\
		&4 \left[(k_1 + k_1^2 k_2)\beta(t)\right]\left[(k_1+k_2)\beta(t)-k_1\right] - \left[(k_1^2+ 2k_1 k_2 +1)\beta(t) -k_1^2 -1\right]^2>0 .
	\end{aligned}\right.
\end{equation*}
The first condition simply restricts $\beta(t)$ to be a positive value. The second condition further constrains $\beta(t)$ according to a quadratic expression that, after some simplifications, may be rewritten as
\begin{equation}
	\label{eq:beta_crit_quadratic}
	-(k_1^2-1)^2\beta^2(t) + (2k_1^4 + 4k_1k_2 +2)\beta(t) - (k_1^2+1)^2>0.
\end{equation}
Through simple algebraic steps, one finds that the roots of the left-hand side of Eq.~\eqref{eq:beta_crit_quadratic} are
\begin{equation}
	\label{eq:beta_crit_roots}
	\begin{cases}
		\beta_{1,2} = \dfrac{k_1^4 + 2k_1k_2 +1 \pm 2 \sqrt{(k_1 k_2 +1)(k_1k_2 + k_1^4)}}{(k_1^2-1)^2}, & k_1\neq 1,\\
		\beta_1 = \dfrac{1}{1+k_2}, &k_1=1,
	\end{cases}
\end{equation}
which exist for all $k_1,k_2>0$.
Given that the quadratic coefficient in Eq.~\eqref{eq:beta_crit_quadratic} is negative, Sylvester's criterion is thus satisfied for $\beta_1<\beta(t)<\beta_2$ whenever $k_1\neq 1$, assuming $\beta_2>\beta_1$, without loss of generality -- i.e. letting $\beta_1$ identify the negative square-root option. Alternatively, if $k_1=1$, Sylvester's criterion is satisfied for $\beta(t)>\beta_1$, since the linear coefficient in Eq.~\eqref{eq:beta_crit_quadratic} is positive. As shown in \ref{ap:sylvesters_crit}, it is easy to verify that $0<\beta_1<1$ in both cases and that $\beta_2>1$, if $k_1\neq1$. Since we are interested in a relief that lies between $[0,1]$, one concludes that a sufficient condition to guarantee stability under the Lyapunov candidate function from Eq.~\eqref{eq:base_lyap} is that $\beta_\text{crit}<\beta(t)\leq 1$, where $\beta_\text{crit}=\beta_1$ identifies the smallest available root from Eq.~\eqref{eq:beta_crit_roots}, given $k_1,k_2$.

In Fig.~\ref{fig:beta_crit}, the value of $\beta_\text{crit}$ is provided over a region of interest in $k_1$ and $k_2$. The analysis of this figure highlights how, from the point of ensuring stability under the proposed Lyapunov function from Eq.~\eqref{eq:base_lyap}, the linear term of the control law may be relieved significantly with an appropriate choice of control gains. Naturally, as the gains themselves alter the control command, it is not necessarily evident that a larger relief translates directly into a decrease of the final command's magnitude. We leave this discussion for a later point, and formally establish the results obtained as follows, introducing a minimum allowed relief factor, $\beta_\text{min}$, to avoid the limit condition of ${\beta(t)=\beta_\text{crit}}$.

\begin{figure}[htpb]
	\centering
	\includegraphics[width=0.9\textwidth]{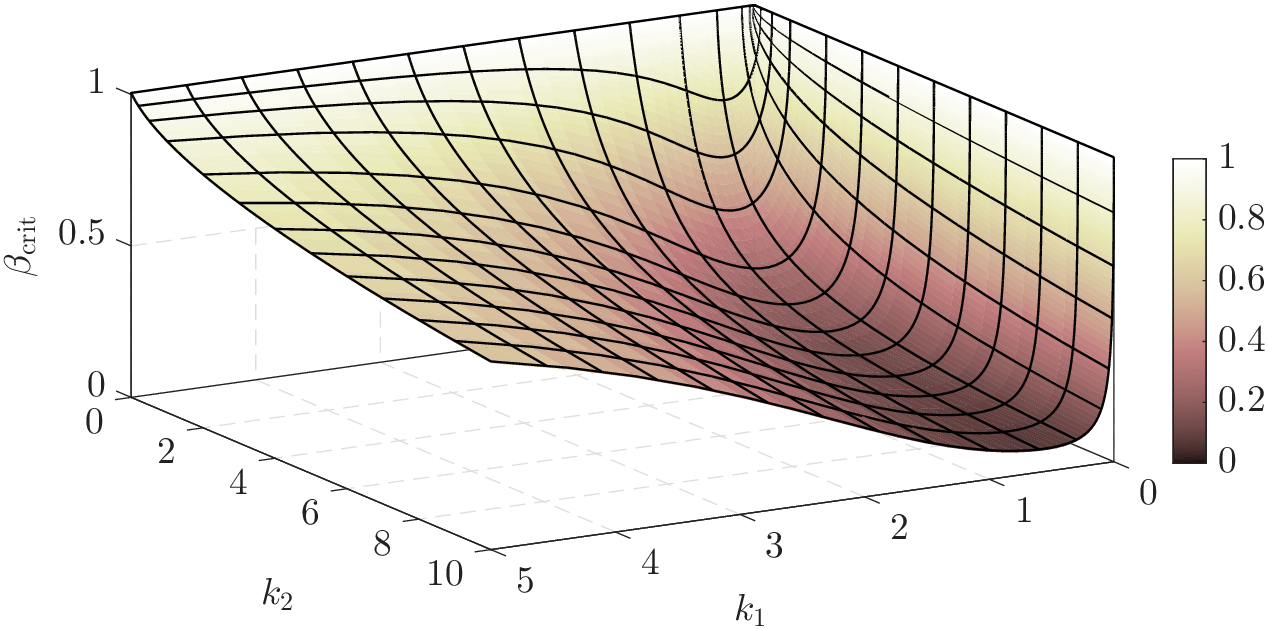}
	\caption{Critical relief, $\beta_\text{crit}$, over $k_1=(0,5]$ and $k_2=(0,10]$.}
	\label{fig:beta_crit}
\end{figure}

\begin{proposition}
	\label{prop:relieved_stability}
	Consider the conditions of Proposition~\ref{prop:base_backstepping}, but under the relieved control law from Eq.~\eqref{eq:relieved_law}, where $\mathbf{K}_1=k_1\mathbf{I}_3$ and $\mathbf{K}_2=k_2\mathbf{I_3}$, with constant $k_1,k_2>0$, and $\beta(t)\geq \beta_\text{min},~\forall t\geq0$, where $\beta_\text{min}\in (\beta_\text{crit},1]$ is a constant and $\beta_\text{crit}$ is the smallest root of Eq.~\eqref{eq:beta_crit_roots}. Then, the application of the relieved control law from Eq.~\eqref{eq:relieved_law} guarantees almost global uniform exponential stability of the origin of the error dynamics in Eq.~\eqref{eq:HFEM_variational}.
\end{proposition}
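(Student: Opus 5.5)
The plan mirrors the proof of Proposition~\ref{prop:base_backstepping}, keeping the time-invariant Lyapunov function $V(\mathbf{z})=\frac{1}{2}\mathbf{z}^T\mathbf{X}\mathbf{z}$ from Eq.~\eqref{eq:base_lyap}. The bounds in Eq.~\eqref{eq:X_aux} do not depend on the control law, so the constants $a=\frac{1}{2}\lambda_{\min}(\mathbf{X})$, $b=\frac{1}{2}\lambda_{\max}(\mathbf{X})$ and $\gamma=2$ carry over unchanged. All the work is in the derivative condition of \cite[Theorem 4.10]{khalil2001NLSystems}. The difficulty is that the matrix $\mathbf{U}$ depends on $\beta(t)$, so we need a constant $c>0$ with $\dot V\le -c\norm{\mathbf{z}}^2$ holding uniformly in time. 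Positive definiteness of $\mathbf{U}$ at each instant is not enough for this.

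To get the uniform bound, I would use the fact that $\mathbf{U}(\beta)$ is affine in the scalar $\beta$. Write $\mathbf{U}(\beta)=\beta\mathbf{A}+\mathbf{B}$ with constant symmetric $\mathbf{A},\mathbf{B}$. The map $\beta\mapsto\lambda_{\min}(\mathbf{U}(\beta))$ is then concave, being a minimum of functions $\beta\,\mathbf{y}^T\mathbf{A}\mathbf{y}+\mathbf{y}^T\mathbf{B}\mathbf{y}$ over unit vectors $\mathbf{y}$ that are affine in $\beta$. On the compact interval $[\beta_\text{min},1]$ its minimum is therefore attained at an endpoint, and
\begin{equation*}
c:=\tfrac{1}{2}\min\{\lambda_{\min}(\mathbf{U}(\beta_\text{min})),\lambda_{\min}(\mathbf{U}(1))\}.
\end{equation*}
Both endpoints lie in $(\beta_\text{crit},1]\subset(\beta_1,\beta_2)$, or in $(\beta_1,\infty)$ when $k_1=1$. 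By the Sylvester analysis leading to Eq.~\eqref{eq:beta_crit_quadratic}, together with the facts from \ref{ap:sylvesters_crit} that $0<\beta_1<1<\beta_2$, the matrix $\mathbf{U}$ is positive definite at both endpoints, so $c>0$. A simpler alternative is plain continuity of $\lambda_{\min}$ on a compact interval where it is everywhere positive. This also makes clear why $\beta_\text{min}$ must be strictly greater than $\beta_\text{crit}$: at $\beta_\text{crit}$ the determinant vanishes, $\lambda_{\min}(\mathbf{U})$ drops to $0$, and uniformity is lost.

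There are two smaller points. First, $\beta(t)$ need only be measurable and bounded for the closed loop to have solutions, so $\dot V$ is understood almost everywhere; the comparison argument of the theorem still goes through. Second, the computation of $\dot V$ uses the exact cancellation of $\mathbf{f}_a$ in Eq.~\eqref{eq:relieved_law}, which is valid wherever $\mathbf{z}(t)\in\mathcal{D}(t)$. This gives UES of the origin over the domain of definition.

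For the \emph{almost global} part, I would reuse the second half of the proof of Proposition~\ref{prop:base_backstepping}. Initial conditions on the singularities form a measure-zero set. For collision trajectories, the closed loop in Eq.~\eqref{eq:LTV_dynamics} is linear time-varying, so its flow is a linear isomorphism $\boldsymbol{\Phi}(t,0)$ of $\mathbb{R}^6$. The set of initial states hitting $\mathbf{w}_j(t)$ at time $t$ is $\boldsymbol{\Phi}(0,t)$ applied to a 3-dimensional affine slice. As $t$ varies this sweeps out at most a 4-dimensional set, which has measure zero in $\mathbb{R}^6$. This is the argument of \ref{ap:collision_traj_proof}, which should apply verbatim with the constant system matrix replaced by the time-varying one. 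I expect the main obstacle to be the uniformity step above, with the measure-zero argument needing only mild regularity of $t\mapsto\boldsymbol{\Phi}(0,t)\mathbf{w}_j(t)$, such as local Lipschitz continuity.
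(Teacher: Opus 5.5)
Your proposal is correct. It shares the paper's skeleton: the same $V$, the same $a$, $b$, $\gamma$, \cite[Theorem~4.10]{khalil2001NLSystems}, and the collision-set argument of \ref{ap:collision_traj_proof}. It differs at the uniformity step, and there your version is the sound one.

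The paper sets $c=\tfrac{1}{2}\lambda_{\min}(\mathbf{U}^*)$ with $\mathbf{U}^*=\mathbf{U}\vert_{\beta=\beta_\text{min}}$. This tacitly assumes $\mathbf{U}(\beta)\succeq\mathbf{U}(\beta_\text{min})$ on $[\beta_\text{min},1]$, i.e.\ that your coefficient matrix $\mathbf{A}$ is positive semidefinite. It is not. The coordinate-wise $2\times2$ constituent of $\mathbf{A}$ has determinant $-(k_1^2-1)^2$, which is exactly the leading coefficient of Eq.~\eqref{eq:beta_crit_quadratic}. So $\mathbf{A}$ is indefinite whenever $k_1\neq1$, and $\lambda_{\min}(\mathbf{U}(\beta))$ can be smallest at $\beta=1$, which is precisely the unsaturated regime.

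For example, take $k_1=0.3$, $k_2=2$ (so $\beta_\text{crit}\approx0.28$) and $\beta_\text{min}=0.8$. Then $\lambda_{\min}(\mathbf{U}(0.8))\approx0.550$ but $\lambda_{\min}(\mathbf{U}(1))\approx0.543$. Whenever $\beta(t)=1$, Eq.~\eqref{eq:dot_V_aux2} therefore fails along the weakest eigendirection of $\mathbf{U}(1)$.

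Your concavity argument, giving $c=\tfrac12\min\{\lambda_{\min}(\mathbf{U}(\beta_\text{min})),\lambda_{\min}(\mathbf{U}(1))\}$, is exactly the repair. The UES conclusion survives either way, since both eigenvalues are positive. However, the rate $\theta$ in Eq.~\eqref{eq:V_dot_aux} should use your constant, and so should Eq.~\eqref{eq:UES_z_bound} and the saturation bound built on it.

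The paper also calls the proof ``completely identical'' to that of Proposition~\ref{prop:base_backstepping}, yet \ref{ap:collision_traj_proof} is written for an LTI state-transition matrix. Your point that the argument transfers to the LTV flow of Eq.~\eqref{eq:LTV_dynamics}, given Lipschitz regularity of $t\mapsto\boldsymbol{\Phi}(0,t)$, is what is actually required.

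One caveat applies equally to you and the paper: both treat $\beta$ as an exogenous function of time. If $\beta$ is computed from $\mathbf{z}$ via Eq.~\eqref{eq:beta_calculation}, the flow is no longer linear in $\mathbf{z}_0$. The measure-zero step would then need a version for nonlinear flows.
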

\begin{proof}
	The proof is completely identical to that of Proposition~\ref{prop:base_backstepping}, changing only Eq.~\eqref{eq:dot_V_aux} to
	\begin{equation}
		\label{eq:dot_V_aux2}
		\dot{V}(\mathbf{z}(t)) \leq -\frac{1}{2}\lambda_\text{min}(\mathbf{U}^*)\norm{\mathbf{z}(t)}^2,~\forall \mathbf{z}(t)\in \mathcal{D}(t),
	\end{equation}
	where $\mathbf{U}^*:=\mathbf{U}\vert_{\beta(t)=\beta_\text{min}}$, such that $c=\frac{1}{2}\lambda_\text{min}(\mathbf{U}^*)$. Under the assumptions of Proposition~\ref{prop:relieved_stability}, one has $c>0$ since $\mathbf{U}^*\succ 0$, as previously discussed.
\end{proof}
\begin{corollary}[of Proposition~\ref{prop:relieved_stability}]
	\label{cor:ell_restriction_2}
	Consider the conditions of Proposition~\ref{prop:relieved_stability} and the ellipsoid $\mathcal{E}$, from Eq.~\eqref{eq:ell_set}. Then, all trajectories beginning at $\mathbf{z}_0 \in \mathcal{E}$ converge exponentially fast to the origin of Eq.~\eqref{eq:HFEM_variational} over $t\in[0,t_s]$.
\end{corollary}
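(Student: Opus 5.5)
The argument repeats the proof of Corollary~\ref{cor:ell_restriction} word for word. The only change is that the decrease of $V$ now comes from the bound in Eq.~\eqref{eq:dot_V_aux2} rather than from Eq.~\eqref{eq:dot_V_aux}.

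The first step is that $V$ in Eq.~\eqref{eq:base_lyap} is unchanged by the relief. It depends only on $\mathbf{K}_1=k_1\mathbf{I}_3$, so the matrix $\mathbf{X}$, and hence the set $\mathcal{E}$ in Eq.~\eqref{eq:ell_set}, are identical to those of Corollary~\ref{cor:ell_restriction}. Reusing the estimate $\norm{\mathbf{y}}^2\leq \mathbf{y}^T\mathbf{X}\mathbf{y}/\lambda_{\min}(\mathbf{X})<r^2$ then gives $\mathcal{E}\subset\mathcal{D}(t)$ for all $t\in[0,t_s]$. In addition, $\mathcal{E}$ is the strict sublevel set $\{V<\tfrac12\lambda_{\min}(\mathbf{X})r^2\}$.

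The second step is forward invariance. By Proposition~\ref{prop:relieved_stability}, with $\beta(t)\geq\beta_\text{min}>\beta_\text{crit}$, we have $\dot V(\mathbf{z})\leq -\tfrac12\lambda_{\min}(\mathbf{U}^*)\norm{\mathbf{z}}^2$ on $\mathcal{D}(t)$, and $\mathbf{U}^*\succ0$. One detail needs care: the matrix $\mathbf{U}$ depends on $\beta(t)$. Using $\mathbf{U}^*$ as a uniform bound requires that $\mathbf{z}^T\mathbf{U}\mathbf{z}$, which is affine in $\beta$ for fixed $\mathbf{z}$, is no smaller on $[\beta_\text{min},1]$ than its value at $\beta_\text{min}$. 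I will take this as established inside Proposition~\ref{prop:relieved_stability}. If it needs checking, it follows from concavity of the quadratic in Eq.~\eqref{eq:beta_crit_quadratic} together with positivity at both endpoints $\beta_\text{min}$ and $1$. With the uniform bound in hand, $V(\mathbf{z}(t))$ is nonincreasing, indeed strictly decreasing away from the origin, for as long as the solution stays in $\mathcal{E}$. A standard contradiction argument then shows it cannot reach the boundary level set within $[0,t_s]$. Let $\tau$ be the first exit time; continuity gives $V(\mathbf{z}(\tau))=\tfrac12\lambda_{\min}(\mathbf{X})r^2>V(\mathbf{z}_0)$, which contradicts monotonicity. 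So $\mathcal{E}$ is forward invariant on $[0,t_s]$.

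The third step concludes. Because the trajectory remains in $\mathcal{E}\subset\mathcal{D}(t)$, the vector field is smooth along it, and solutions exist and are unique on $[0,t_s]$. The inequalities $a\norm{\mathbf{z}}^2\leq V\leq b\norm{\mathbf{z}}^2$ and $\dot V\leq -c\norm{\mathbf{z}}^2$, with $c=\tfrac12\lambda_{\min}(\mathbf{U}^*)$, hold along the trajectory. The comparison lemma then yields $\norm{\mathbf{z}(t)}\leq\sqrt{b/a}\,e^{-ct/(2b)}\norm{\mathbf{z}_0}$, which is the claimed exponential convergence. I expect the invariance step to be the only delicate point, and even there only the uniformity of the $\dot V$ bound over the time-varying $\beta(t)$. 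A secondary requirement is that $\beta(t)$ is measurable or piecewise continuous, so that solutions are well defined; I would state that as a standing assumption.
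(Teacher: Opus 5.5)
Your three steps follow the paper's route. Its proof reruns the argument of Corollary~\ref{cor:ell_restriction}: $\mathcal{E}$ is a strict sublevel set of $V$ contained in $\mathcal{D}(t)$, it is forward invariant, and exponential decay follows, now using the derivative bound of Proposition~\ref{prop:relieved_stability}. Your first-exit-time and comparison-lemma details are fine. You add one point, the uniformity of the $\dot V$ bound over the time-varying $\beta(t)$. That is the right thing to worry about, but your resolution of it is wrong.

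Write $\mathbf{U}=\mathbf{U}_0+\beta\mathbf{U}_1$. Each $2\times2$ block of $\mathbf{U}_1$ has diagonal entries $2(k_1+k_1^2k_2)$ and $2(k_1+k_2)$, and off-diagonal entry $k_1^2+2k_1k_2+1$. Its determinant is therefore $-(k_1^2-1)^2<0$ for $k_1\neq1$, so $\mathbf{U}_1$ is indefinite. For any $\mathbf{z}$ with $\mathbf{z}^T\mathbf{U}_1\mathbf{z}<0$, the form $\mathbf{z}^T\mathbf{U}\mathbf{z}$ strictly decreases in $\beta$. Hence your claim that it is no smaller on $[\beta_\text{min},1]$ than at $\beta_\text{min}$ is false.

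Concretely, take $k_1=0.2$, $k_2=1$, for which $\beta_\text{crit}\approx0.453$. One finds $\lambda_\text{min}(\mathbf{U}\vert_{\beta=1})\approx0.381<\lambda_\text{min}(\mathbf{U}\vert_{\beta=0.95})\approx0.384$. So with $\beta_\text{min}=0.95$ and $\beta(t)=1$, the bound $\dot V\leq-\tfrac12\lambda_\text{min}(\mathbf{U}^*)\norm{\mathbf{z}}^2$ fails along the minimizing eigenvector of $\mathbf{U}\vert_{\beta=1}$. The paper's Eq.~\eqref{eq:dot_V_aux2} contains the same slip, so ``taking it as established inside Proposition~\ref{prop:relieved_stability}'' does not rescue the constant $c=\tfrac12\lambda_\text{min}(\mathbf{U}^*)$.

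Your fallback argument (concavity of Eq.~\eqref{eq:beta_crit_quadratic}, positive at both endpoints) proves something different. It shows $\mathbf{U}(\beta)\succ0$ for each $\beta\in[\beta_\text{min},1]$. That is enough for forward invariance, but it gives no comparison with $\mathbf{U}^*$.

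The repair takes one line. Since $\mathbf{U}$ is affine in $\beta$, on $[\beta_\text{min},1]$ it is a convex combination of $\mathbf{U}^*$ and $\mathbf{U}\vert_{\beta=1}=2\mathbf{Y}$. Hence
\[
\lambda_\text{min}(\mathbf{U})\geq\min\{\lambda_\text{min}(\mathbf{U}^*),\lambda_\text{min}(\mathbf{U}\vert_{\beta=1})\}>0 .
\]
Take $c$ as half this minimum. Your invariance step and comparison-lemma estimate then go through unchanged, and the corollary holds as stated.
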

\begin{proof}
	The proof is analogous to that of Corollary~\ref{cor:ell_restriction}, considering the modifications introduced in Proposition~\ref{prop:relieved_stability}.
\end{proof}

Under the guarantees of Proposition~\ref{prop:relieved_stability}, it is possible to bound the evolution of the deviation magnitude through the proposed Lyapunov function. In particular,  Eq.~\eqref{eq:X_aux} may be reorganized to write, at each instant,
\begin{equation*}
	\norm{\mathbf{z}(t)}^2 \geq \frac{2V(\mathbf{z}(t))}{\lambda_\text{max}(\mathbf{X})}.
\end{equation*}
In that case, substitution into Eq.~\eqref{eq:dot_V_aux2} results in
\begin{equation}
	\label{eq:V_dot_aux}
	\dot{V}(\mathbf{z}(t)) \leq - 2\theta V(\mathbf{z}(t)), \quad \text{where} \quad  \theta = \frac{\lambda_\text{min}(\mathbf{U}^*)}{2\lambda_\text{max}(\mathbf{X})},
\end{equation}
which may be integrated to obtain an upper bound on the Lyapunov function itself with time, i.e.
\begin{equation*}
	V(\mathbf{z}(t)) \leq V(\mathbf{z}_0) e^{-2\theta t}.
\end{equation*}
Recalling the result from Eq.~\eqref{eq:X_aux}, one thus establishes
\begin{equation*}
	\frac{1}{2}\lambda_\text{min}(\mathbf{X})\norm{\mathbf{z}(t)}^2 \leq V(\mathbf{z}(t)) \leq V(\mathbf{z}_0) e^{-2\theta t}\leq \frac{1}{2}\lambda_\text{max}(\mathbf{X}) \norm{\mathbf{z}_0}^2 e^{-2\theta t},
\end{equation*}
which, after reorganizing and applying the square-root to both sides, yields an upper bound for the deviation at each instant,
\begin{equation}
	\label{eq:UES_z_bound}
	\norm{\mathbf{z}(t)} \leq \rho \norm{\mathbf{z}_0}e^{-\theta t}, \quad \text{where} \quad \rho=\sqrt{\frac{\lambda_\text{max}(\mathbf{X})}{\lambda_\text{min}(\mathbf{X})}}.
\end{equation}

Under the assumption of exponential stability, we move to ensuring that the saturation threshold is such that it is possible to find $\beta(t)\geq\beta_\text{min}>\beta_\text{crit}$ to meet $\norm{\mathbf{u}(t)}\leq u_\text{sat}$, for all $t\geq 0$. This may, at first, seem counter-intuitive, given how $u_\text{sat}$ is typically an established value that depends on the particular actuator system selected, rather than being a design variable. However, since $\beta_\text{crit}$ depends on the control gains selected, ensuring $\beta(t) \geq \beta_\text{min}>\beta_\text{crit}$ is better interpreted as a problem of compatibility between $u_\text{sat}$, $k_1$, and $k_2$. With this in mind, focusing on the minimum value of $u_\text{sat}$ allows one to separate the stability guarantees from those in terms of saturation. Later, this allows us to formally show that both concerns may be met simultaneously.

To proceed, we note that, at each instant, the triangle inequality provides an upper bound for the magnitude of the actuation command, i.e.,
\begin{equation*}
	\norm{\mathbf{u}(t)} = \norm{\beta(t)\mathbf{K}\mathbf{z}(t) + \mathbf{f}_a(t,\mathbf{z}_1(t))} \leq \beta \norm{\mathbf{K}\mathbf{z}(t)} + \norm{\mathbf{f}_a(t,\mathbf{z}_1(t))}.
\end{equation*}
Thus, in order to satisfy $\norm{\mathbf{u}(t)}\leq u_\text{sat}, ~\forall t\geq 0$, it is sufficient for
\begin{equation}
	\label{eq:ineq_u_sat_aux}
	 u_\text{sat} \geq \beta(t) \norm{\mathbf{K}\mathbf{z}(t)}+ \norm{\mathbf{f}_a(t,\mathbf{z}_1(t))},~\forall t\geq 0.
\end{equation}
At this point, recall that $\beta(t)$ is a design variable and thus, in the limit, one may take $\beta(t)\to\beta_\text{min}$ if need be, without breaking the assumptions necessary for exponential stability provided by Proposition~\ref{prop:relieved_stability}. This choice evidently provides the most lenient version of Eq.~\eqref{eq:ineq_u_sat_aux}, i.e.,
\begin{equation}
	\label{eq:u_sat_aux}
	u_\text{sat} \geq \beta_\text{min} \norm{\mathbf{K}\mathbf{z}(t)}+ \norm{\mathbf{f}_a(t,\mathbf{z}_1(t))}.
\end{equation}
Generally, there is no way to predict how exactly the deviation vector will evolve over time, given a set of initial conditions. Nonetheless, the exponential stability guarantees previously attained bound its magnitude according to Eq.~\eqref{eq:UES_z_bound},  given an initial value -- which we may assume to be a mission parameter. This is evidently relevant for bounding the contributions that make up the right-hand side of Eq.~\eqref{eq:u_sat_aux}, which is trivial to do for the linear term. Fortunately, it is also possible to find an upper bound for $\norm{\mathbf{f}_a(t,\mathbf{z}_1(t))}$ with respect to $\norm{\mathbf{z}_0}$, such that a sufficient condition to meet Eq.~\eqref{eq:u_sat_aux} is given by
\begin{equation}
	\label{eq:u_sat_ineq}
	u_\text{sat} \geq \beta_\text{min}\ell \rho \norm{\mathbf{z}_0} + \phi(\norm{\mathbf{z}_0}),
\end{equation}
where $\ell:=\sigma_\text{max}(\mathbf{K}) =\sqrt{(k_1+k_2)^2+(1+k_1k_2)^2}>1$ is the maximum singular value of $\mathbf{K}$ and $\phi(\norm{\mathbf{z}_0})$ is the upper bound on the acceleration error, i.e. $\norm{\mathbf{f}_a(t,\mathbf{z}_1(t))}$. As detailed in \ref{ap:f_a_bound}, $\phi(\norm{\mathbf{z}_0})$ is a nonlinear expression in $\norm{\mathbf{z}_0}$, which depends on $k_1$, $k_2$, and $\beta_\text{min}$, but also on the target trajectory. Moreover, its validity assumes that the initial deviation satisfies $\mathbf{z}_0\in\mathcal{B}$, where
\begin{equation}
	\label{eq:deviation_constraint_set}
	\mathcal{B}=\left\{\mathbf{x} \in \mathbb{R}^6: \norm{\mathbf{x}} < \frac{r}{\rho}\right\},
\end{equation}
recalling the definition of $r$ from the ellipsoid in Eq.~\eqref{eq:ell_set}. As explored in \ref{ap:f_a_bound}, satisfying $\mathbf{z}_0 \in \mathcal{B}$ implies $\mathbf{z}_0 \in \mathcal{E}$, such that exponential convergence holds, following Corollary~\ref{cor:ell_restriction_2}. Alternatively, given an initial deviation with $\norm{\mathbf{z}_0}<r$, Eq.~\eqref{eq:deviation_constraint_set} may be interpreted as a restriction to $k_1$, through $\rho$.

We provide some additional comments on Eq.~\eqref{eq:u_sat_ineq}. Firstly, in line with the previous discussion, the saturation threshold should be larger than the upper bound of the nonlinear acceleration error, $\phi(\norm{\mathbf{z}_0})$. In fact, we now see that it is further constrained by the linear contribution $\beta_\text{min}\ell\rho\norm{\mathbf{z}_0}$. Note that both contributions depend on the control gains and, moreover, on $\beta_\text{min}$, which up to this point has been interpreted as a design variable, subject only to $\beta_\text{min}>\beta_\text{crit}$, for the purpose of ensuring exponential stability.
From the perspective of the linear term, taking $\beta_\text{min}$ as close to $\beta_\text{crit}$ as possible clearly leads to the most lenient choice. However, this would bring the minimum exponential convergence rate, $\theta$, to zero, following Eq.~\eqref{eq:V_dot_aux}. Given that the nonlinear acceleration error bound may benefit from stronger exponential convergence guarantees, as approached in \ref{ap:f_a_bound}, perhaps it is beneficial to consider larger values of $\beta_\text{min}$. Clearly, it is non-trivial to predict what the best choice for the minimum relief factor is. In fact, this issue may only be fully discussed when particular test cases are analyzed and $\mathbf{w}_j(t),~j=1,\dots,P,$ is precisely known. For this reason, further discussion is diverted to Section~\ref{sec:res}.

For now, the following statement may be presented.
\begin{proposition}
	\label{prop:saturation_stability}
	Consider the conditions of Proposition~\ref{prop:relieved_stability} and let \linebreak$\mathbf{z}_0\in\mathcal{B}$, with $\mathcal{B}$ from Eq.~\eqref{eq:deviation_constraint_set}, denote the deviation from the target trajectory at $t=0$. Let also $u_\text{sat}$, $k_1$, and $k_2$ be such that Eq.~\eqref{eq:u_sat_ineq} is satisfied for $t\in[0,t_s]$, $t_s>0$, with $\beta_\text{min}>\beta_\text{crit}$. Then, it is possible to choose $\beta(t)$, at each instant $t\in[0,t_s]$, such that the control law from Eq.~\eqref{eq:relieved_law} guarantees, during this time interval: \textit{(i)} exponentially fast convergence towards the origin of the error dynamics in Eq.~\eqref{eq:HFEM_variational}; and \textit{(ii)} that $\norm{\mathbf{u}(t)}\leq u_\text{sat}$.
\end{proposition}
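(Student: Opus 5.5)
The plan is to define $\beta(t)$ explicitly and show, by a continuation argument over $[0,t_s]$, that the choice always lies in $[\beta_\text{min},1]$ and that the trajectory stays inside $\mathcal{E}$. We take the largest admissible relief,
\begin{equation*}
	\beta(t) = \min\left\{1, \frac{u_\text{sat}-\norm{\mathbf{f}_a(t,\mathbf{z}_1(t))}}{\norm{\mathbf{K}\mathbf{z}(t)}}\right\}
\end{equation*}
(with $\beta(t)=1$ when $\mathbf{K}\mathbf{z}(t)=\mathbf{0}$). Any measurable choice of $\beta(t)\in[\beta_\text{min},1]$ satisfying Eq.~\eqref{eq:ineq_u_sat_aux} would also do. By the triangle inequality used before Eq.~\eqref{eq:ineq_u_sat_aux}, this choice gives $\norm{\mathbf{u}(t)}\leq u_\text{sat}$ whenever $\beta(t)\ge 0$. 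So everything reduces to showing $\beta(t)\geq\beta_\text{min}$ for all $t\in[0,t_s]$, which is equivalent to Eq.~\eqref{eq:u_sat_aux} holding along the trajectory.

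The key step is a bootstrap. Let $T^*=\sup\{\tau\le t_s: \beta(t)\ge\beta_\text{min}\ \forall t\in[0,\tau]\}$. First show $T^*>0$: at $t=0$ we have $\norm{\mathbf{z}_0}\le\rho\norm{\mathbf{z}_0}$, and $\phi(\norm{\mathbf{z}_0})$ bounds $\norm{\mathbf{f}_a(0,\mathbf{z}_1(0))}$, so Eq.~\eqref{eq:u_sat_ineq} implies Eq.~\eqref{eq:u_sat_aux} at $t=0$. Continuity of the closed-loop solution then extends this to a short interval, after a small slack argument or directly via the next step.

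On $[0,T^*)$ the hypotheses of Proposition~\ref{prop:relieved_stability} hold. Since $\mathbf{z}_0\in\mathcal{B}\subset\mathcal{E}$, Corollary~\ref{cor:ell_restriction_2} keeps $\mathbf{z}(t)\in\mathcal{E}\subset\mathcal{D}(t)$, so $\mathbf{f}_a$ stays well defined, and Eq.~\eqref{eq:UES_z_bound} gives $\norm{\mathbf{z}(t)}\le\rho\norm{\mathbf{z}_0}e^{-\theta t}$. Hence $\norm{\mathbf{K}\mathbf{z}(t)}\le\ell\rho\norm{\mathbf{z}_0}$ and, by the appendix bound, $\norm{\mathbf{f}_a}\le\phi(\norm{\mathbf{z}_0})$. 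Eq.~\eqref{eq:u_sat_ineq} then yields $\beta_\text{min}\norm{\mathbf{K}\mathbf{z}(t)}+\norm{\mathbf{f}_a}\le u_\text{sat}$, i.e.\ $\beta(t)\ge\beta_\text{min}$ on the closure, by continuity.

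If $T^*<t_s$, the same estimate holds at $T^*$. Since both sides are continuous in $t$ and the trajectory is bounded away from the singularities, the inequality persists slightly beyond $T^*$, contradicting maximality. So $T^*=t_s$, and (i) follows from Corollary~\ref{cor:ell_restriction_2} while (ii) follows from the construction.

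The main obstacle is the circularity: the exponential bound requires $\beta\ge\beta_\text{min}$, while $\beta\ge\beta_\text{min}$ requires the exponential bound. This is especially delicate when Eq.~\eqref{eq:u_sat_ineq} holds only with equality, since then there is no slack for the continuity step at $T^*$. I would handle it by running the continuation on the \emph{a priori} invariant $V(\mathbf{z}(t))\le V(\mathbf{z}_0)$ rather than on $\beta$ directly. The set where $V$ is nonincreasing is closed and, because $\dot V<0$ strictly whenever $\beta\ge\beta_\text{min}$, it is also open. This strict decrease gives the needed slack and avoids relying on equality cases. Beyond this, I need to check that the appendix bound $\phi$ only uses $\norm{\mathbf{z}(t)}\le\rho\norm{\mathbf{z}_0}e^{-\theta t}$ and $\mathbf{z}_0\in\mathcal{B}$, so that it applies on $[0,T^*]$.
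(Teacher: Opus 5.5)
Your bootstrap does not close as written. Both the claim $T^*>0$ and the persistence step at $T^*$ rely on continuity, and continuity only propagates a \emph{strict} inequality. Suppose $\beta_\text{min}\norm{\mathbf{K}\mathbf{z}(T^*)}+\norm{\mathbf{f}_a(T^*,\mathbf{z}_1(T^*))}=u_\text{sat}$. Then nothing you have shown prevents $\beta(t)$ from dropping below $\beta_\text{min}$ immediately after $T^*$. Meanwhile the envelope $\norm{\mathbf{z}(t)}\le\rho\norm{\mathbf{z}_0}e^{-\theta t}$ is only available up to $T^*$.

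The remedy you propose does not repair this. The invariant $V(\mathbf{z}(t))\le V(\mathbf{z}_0)$ only gives $\norm{\mathbf{z}(t)}\le\rho\norm{\mathbf{z}_0}$. That is the acceleration bound with $\delta=\rho\norm{\mathbf{z}_0}$ and no decay. But $\phi(\norm{\mathbf{z}_0})$ in Eq.~\eqref{eq:ap_f_a_bound} is built with $\delta(t)=\rho\norm{\mathbf{z}_0}e^{-\theta t}$. For $\theta>0$ the non-decaying bound can exceed $\phi$; this is exactly the apoapsis-insertion benefit discussed in \ref{ap:f_a_bound}. So Eq.~\eqref{eq:u_sat_ineq} no longer implies Eq.~\eqref{eq:u_sat_aux} along the trajectory.

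The missing idea is to break the circularity by construction rather than by continuation, which is what the paper's (terse) proof does. Restrict the design to $\beta(t)\in[\beta_\text{min},1]$ from the outset. For example, take $\beta(t)=\max\{\beta_\text{min},\min\{1,(u_\text{sat}-\norm{\mathbf{f}_a})/\norm{\mathbf{K}\mathbf{z}}\}\}$, or simply $\beta\equiv\beta_\text{min}$.
\begin{itemize}
\item Corollary~\ref{cor:ell_restriction_2} then applies unconditionally on $[0,t_s]$. This gives the exponential envelope and hence $\norm{\mathbf{f}_a(t,\mathbf{z}_1(t))}\le\phi(\norm{\mathbf{z}_0})$.
\item Eq.~\eqref{eq:u_sat_ineq} then yields $\beta_\text{min}\norm{\mathbf{K}\mathbf{z}(t)}+\norm{\mathbf{f}_a}\le u_\text{sat}$. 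So the lower clamp never binds, and your formula is recovered.
\item The triangle inequality then gives $\norm{\mathbf{u}(t)}\le u_\text{sat}$.
\end{itemize}

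If you prefer to keep the bootstrap, the slack you were missing is already there. Take $\mathbf{z}_0\neq\mathbf{0}$.
\begin{itemize}
\item For $t>0$, $\norm{\mathbf{K}\mathbf{z}(t)}\le\ell\rho\norm{\mathbf{z}_0}e^{-\theta t}<\ell\rho\norm{\mathbf{z}_0}$, since $\theta>0$.
\item At $t=0$, $\norm{\mathbf{K}\mathbf{z}_0}\le\ell\norm{\mathbf{z}_0}<\ell\rho\norm{\mathbf{z}_0}$, since $\rho>1$ for $k_1>0$.
\end{itemize}
Since $\beta_\text{min}>0$, Eq.~\eqref{eq:u_sat_aux} therefore holds strictly wherever the envelope holds. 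Continuity then does extend it past $T^*$.
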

\begin{proof}
	The proof for the first statement follows directly from Corollary~\ref{cor:ell_restriction_2}, since $\mathbf{z}_0\in\mathcal{B}\implies \mathbf{z}_0\in \mathcal{E}$, as discussed in \ref{ap:f_a_bound}. Thus, exponential stability is guaranteed over $t\in[0,t_s]$ for any relief satisfying $\beta(t)\geq \beta_\text{min}$ during this time interval, with $\beta_\text{min}\in(\beta_\text{crit},1]$. This validates the upper bound on $\norm{\mathbf{f}_a(t,\mathbf{z}_1(t))}$, also presented in \ref{ap:f_a_bound}.
	
	As previously demonstrated, if Eq.~\eqref{eq:u_sat_ineq} is satisfied over $t\in[0,t_s]$, the minimum relief factor necessary to meet $\norm{\mathbf{u}(t)}\leq u_\text{sat}$ during this period also satisfies ${\beta(t)\geq\beta_\text{min}}$. Hence, there is always at least one choice for $\beta(t),~\forall t\in[0,t_s]$ that ensures $\norm{\mathbf{u}(t)}\leq u_\text{sat}$, without a sacrifice to the exponential stability guarantees.
\end{proof}

The relief factor necessary to meet $\norm{\mathbf{u}(t)}=u_\text{sat}$ when saturation occurs may be calculated, at each instant, by solving the quadratic equation in $\beta(t)$,
\begin{equation}
	\label{eq:beta_calculation}
	u^2_\text{sat} = \norm{\mathbf{u}(t)}^2 \Longleftrightarrow h(\beta(t)):=a \beta^2(t) + b \beta(t) + c = 0, 
\end{equation}
where $a=\norm{\mathbf{K}\mathbf{z}}^2$, $b=2\mathbf{f}_a^T\mathbf{z}$, and $c = \norm{\mathbf{f}_a}^2 - u^2_\text{sat}$, omitting dependencies. Under Proposition~\ref{prop:saturation_stability}, the assumption that Eq.~\eqref{eq:u_sat_ineq} is satisfied guarantees Eq.~\eqref{eq:ineq_u_sat_aux} is also satisfied, such that $h(\beta_\text{min})=\norm{\mathbf{u}(t)}_{\beta(t)=\beta_\text{min}}^2-u^2_\text{sat}\leq0$.~Moreover, $h(1)= \norm{\mathbf{u}(t)}_{\beta(t)=1}^2-u^2_\text{sat}>0$, since saturation is considered only when the unrelieved control law (i.e., with $\beta(t)=1$) saturates. Since $a>0, \forall \mathbf{z}\neq \mathbf{0}$, it must be the case that Eq.~\eqref{eq:beta_calculation} always has a unique\footnote{Neglecting the limit case where $h(\beta_\text{min})=0$, for which $\beta_\text{min}$ is also a root.} solution $\beta(t)\in[\beta_\text{min},1]$.

\section{Numerical Results}
\label{sec:res}

Having established the formalities of the proposed nonlinear control law, with and without the inclusion of actuation saturation, we move to evaluating its performance under various application scenarios, considering typical operational challenges and external perturbations. To this end, we initially assess the adequacy of the base control law from Eq.~\eqref{eq:base_law} at relevant station-keeping tasks, under operational errors and constraints but subject to no external perturbations. Through a Monte Carlo analysis of relevant benchmarks, we compare its performance with continuous and impulsive alternatives from the state-of-the art. Then, the effects of external perturbations are evaluated and stacked against the unperturbed baseline. Finally, actuation saturation is formally included and the final control law from Eq.~\eqref{eq:relieved_law} is validated numerically, offering insights for the optimal selection of control gains and insertion point.

\subsection{Preliminaries}
\label{subsec:res_preliminaries}
The controller will consider the HFEM dynamics in Eq.~\eqref{eq:HFEM_variational} under the point-mass gravitational influence of the Earth, Moon, and Sun, such that $P=3$.  To better replicate a real mission, several operational challenges to the navigation and control of the spacecraft are considered, following standard practice from the literature \cite{pavlak2012Strategy,qi2019ContinuousThrust,zhang2022StationKeepingHFEM}. At the initial instant, an orbital insertion error is pondered, such that the spacecraft begins its mission with a deviation relative to the target trajectory. This error is defined randomly, considering normal distributions with zero mean and covariance $\sigma_{I,r}$ and $\sigma_{I,v}$ in position and velocity, respectively, along each direction. Moreover, we assume that there are random navigation errors that affect the measurement of $\mathbf{z}_1(t)$ and $\mathbf{z}_2(t)$, which are also modeled through normal distributions with zero mean and covariance $\sigma_{N,r}$ and $\sigma_{N,v}$, respectively. Similarly, the application of a given control command is considered to have a normally-distributed error with zero mean and covariance $\sigma_{C}$, defined as a percentage of the total command magnitude. In addition, a minimum threshold to the control command norm, $u_\text{min}$, is prescribed to model limitations of electrical propulsion systems. For now, no upper saturation threshold is established, as this will be considered later when the control law with actuation saturation is tested. Finally, given typical operational constraints, measurements of the spacecraft's position and velocity are assumed to be available only once every $T_M$ days.

Given the measurement interval $T_M$, the on-board computer must propagate the dynamics according to its internal model between successive measurements. We assume this model operates on the same basis as the controller, such that the dynamics are propagated considering only the influence of the Earth, Moon, and Sun, modeled as point-masses. Fig.~\ref{fig:nav_errors_example} provides an illustrative example of the predicted and true state evolution for a spacecraft following a given target trajectory, after a large insertion error. At the end of each predicted segment, resulting from the integration of the dynamics by the on-board computer from the last available measurement, the calculated state of the spacecraft is expected to differ from the true state. The main contributors to this discrepancy include the navigation errors superimposed on said measurement, limitations of the dynamical model considered on-board, control errors, and external disturbances. After each measurement interval, the internal spacecraft state is updated via new measurements and the controller acts so as to bring the predicted state closer to the target trajectory.

\begin{figure}[htpb]
	\centering
	\includegraphics[width=\textwidth]{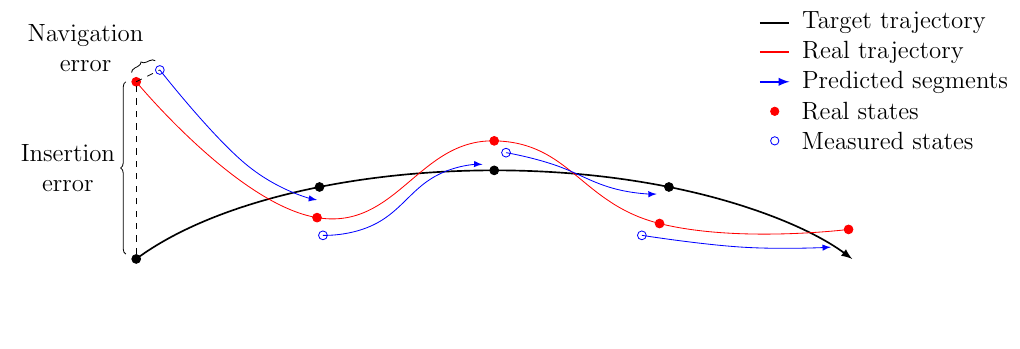}
	\caption{Illustrative example of the real and predicted trajectory of a controlled spacecraft, subject to operational errors and limitations.}
	\label{fig:nav_errors_example}
\end{figure}

To more objectively evaluate the controller performance, we ponder the analysis of two metrics defined as
\begin{equation*}
	E_v = \int_{t_i}^{t_f} \norm{\mathbf{u}(t)} dt \qquad \text{and} \qquad E_e = \int_{t_i}^{t_f} \norm{\mathbf{u}(t)}^2 dt,
\end{equation*}
which measure the total velocity variation enacted by the actuators and the corresponding energy expenditure, respectively, between time instants $t_i$ and $t_f$, in physical units. In addition, we establish
\begin{equation*}
	\text{env}~\norm{\mathbf{z}_1} = \max_{t\in[t_i',t_f]} \norm{\mathbf{z}_1(t)} \qquad \text{and} \qquad 	\text{env}~\norm{\mathbf{z}_2} = \max_{t\in[t_i',t_f]} \norm{\mathbf{z}_2(t)}
\end{equation*}
to assess the upper envelopes of the position and velocity deviation, after the initial insertion error has been (sufficiently) corrected -- which requires an appropriate choice of $t_i'$. Finally, the maximum control command magnitude, $\max \norm{\mathbf{u}}:=\max_{t\in[t_i,t_f]} \norm{\mathbf{u}(t)}$, and total actuator idle time\footnote{In this work, the controller is idle whenever the command's magnitude is below $u_\text{min}$.}, $T_\text{idle}$, are also recorded. Given that the navigation and control errors are defined randomly, all established metrics are evaluated statistically through a Monte Carlo analysis, considering $N_s$ simulations under similar conditions.

The true dynamics of the spacecraft are propagated through an $8$-th order adaptive step Runge-Kutta-Verner method with absolute and relative tolerances set to $10^{-12}$. The exact dynamical model used will depend on whether or not external perturbations are included. For speed, the dynamics considered internally by the on-board computer are propagated through an $8$-th order adaptive step Runge-Kutta-Fehlberg method with more lenient tolerances of $10^{-8}$. The simulations carried out in this work were implemented in Python, making recourse of TU Delft's Astrodynamics Toolbox (Tudat) \cite{tudatspace}, which provides access to the above mentioned integration schemes and serves as the interface with NASA's SPICE planetary ephemeris data for the motion of the celestial bodies considered.

An open-source implementation of the nonlinear backstepping control law proposed, which may be tested under the application cases to follow, is available at \url{https://github.com/antoniownunes/NL_SK_mwe}. The reader is invited to consult this reference for a better understanding of the control solution's implementation and the interface with the Tudat package.

\subsection{Target Trajectories}

In this work, we restrict the analysis to station-keeping about libration point QPOs in the HFEM, given their well-established value for cislunar operations and interplanetary missions. In particular, we consider two trajectories from the Northern Halo and Lyapunov families of QPOs about the L2 Lagrange point -- a strategic outpost amply studied in the literature, key for initiatives such as the Artemis Program \cite{folta2010StationKeeping}. Namely, they are direct counterparts to equivalent periodic orbits of the CR3BP, cataloged in \cite{jpl_3bp_orbits}, with the approximate orbital periods of $14.98$ and $15.86$ days, respectively. These trajectories are transitioned into the HFEM over $25$ revolutions, totaling just over $1$ year in both cases, starting from the 1st of January, 2025, without loss of generality. To do so, the approach presented in \cite{Nunes2026TrajectoryDesign} is pursued, considering four patch points per revolution and targeting a relative segment-to-segment continuity tolerance of $10^{-12}$. The exact QPOs to be targeted reflect the complete dynamical model used when testing the control solution, and thus depend on whether or not external perturbations are considered. Nonetheless, their shape is mostly equivalent between the unperturbed and perturbed test cases, and thus the QPOs designed considering only the point-mass influence of the Earth, Moon, and Sun may be interpreted as a baseline. These are presented in Fig.~\ref{fig:nom_QPOs}, in the synodic reference frame centered at the Earth-Moon barycenter, that rotates with these bodies and pulsates to fix their positions to the corresponding locations of the CR3BP, previously discussed in Section~\ref{subsec:cr3bp}. The projections of both QPOs, the transitioned patch points, and the location of the Moon and L2 point are also evidenced.

\begin{figure}[htpb]
	\centering
	\begin{subfigure}{.48\textwidth}
		\centering
		\includegraphics[width=\linewidth]{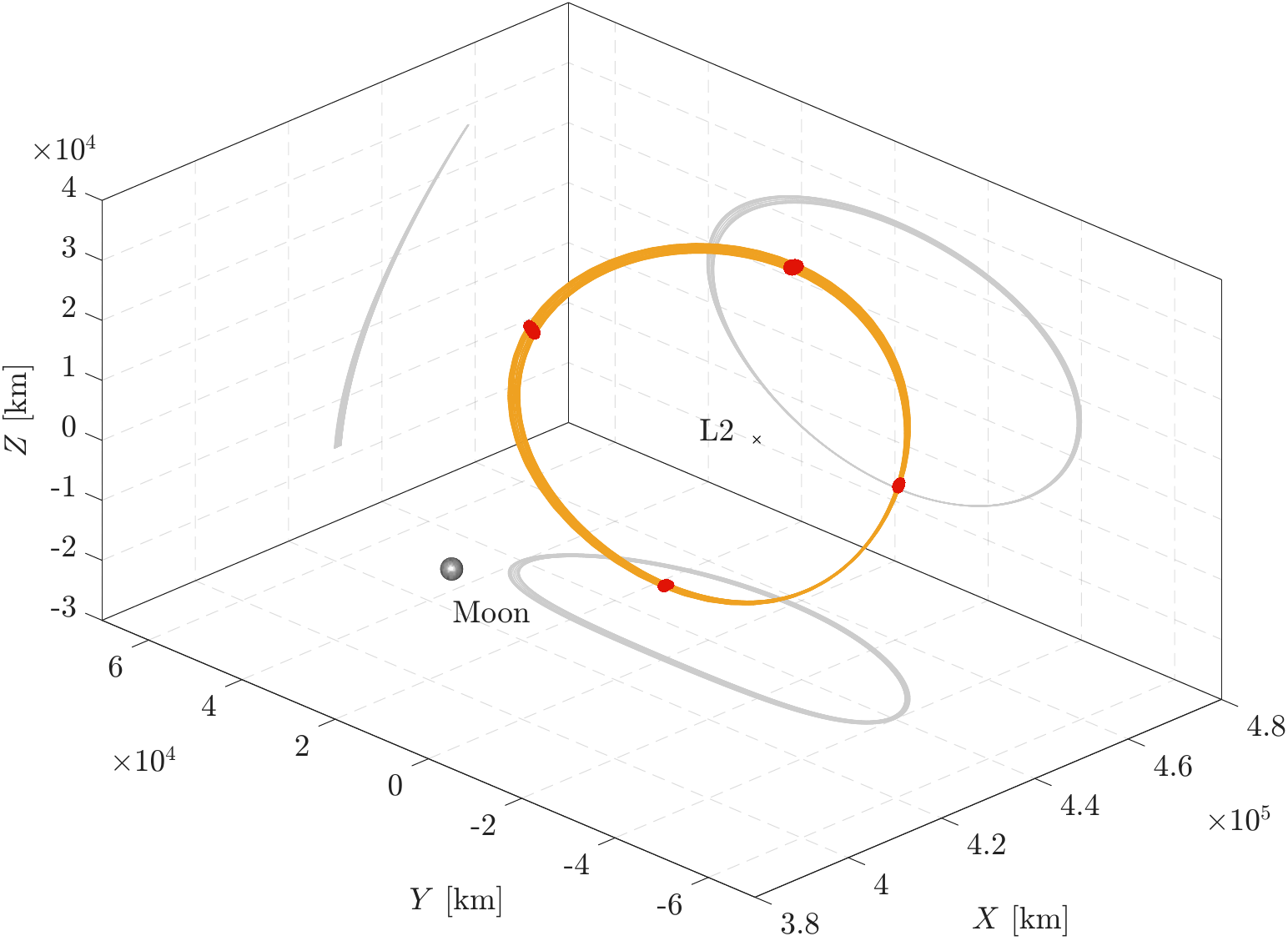}
		\caption{L2 Northern Halo QPO.}
		\label{fig:L2_halo}
	\end{subfigure}%
	\hfill
	\begin{subfigure}{.48\textwidth}
		\centering
		\includegraphics[width=\linewidth]{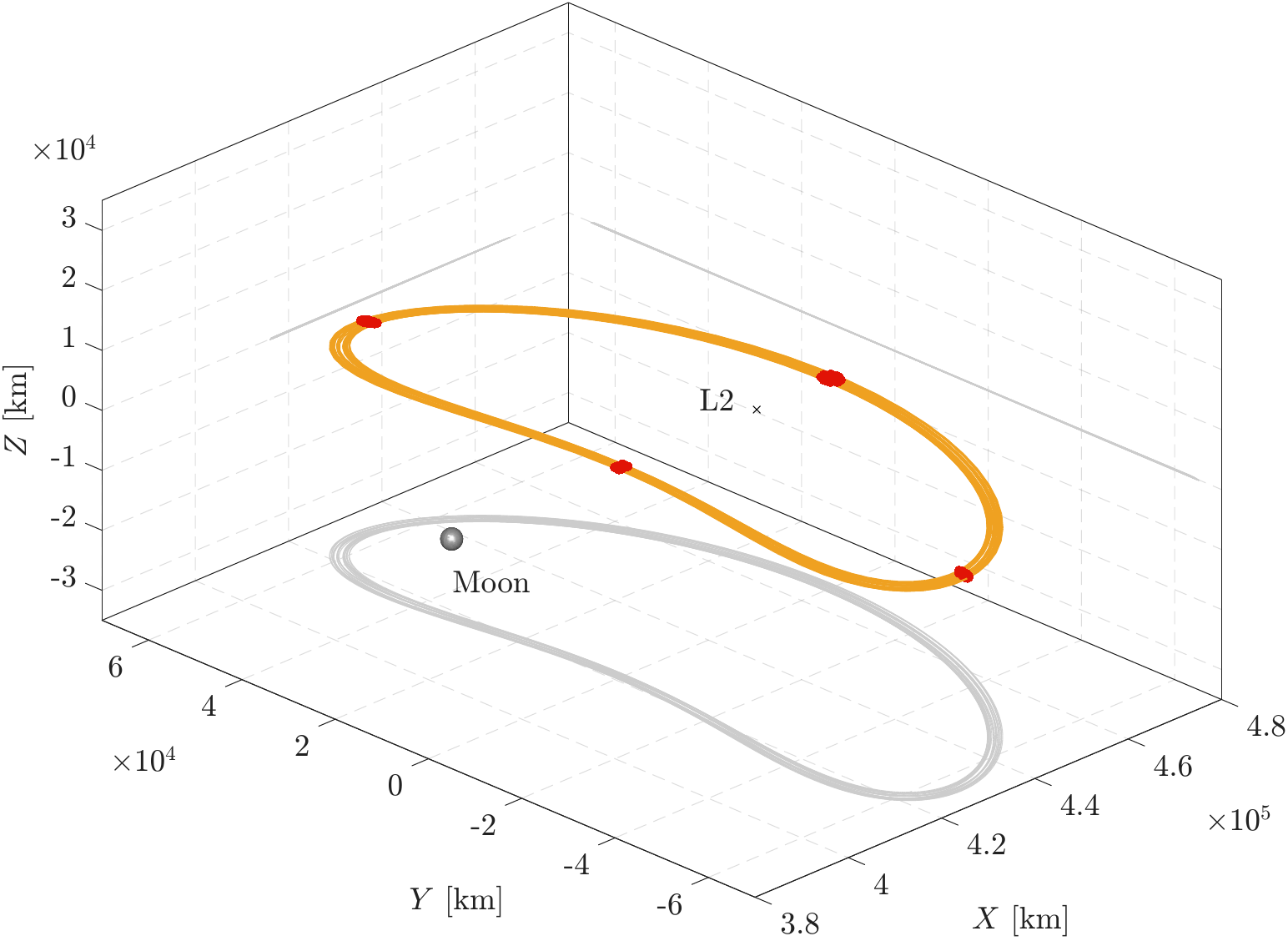}
		\caption{L2 Lyapunov QPO.}
		\label{fig:L2_lyap}
	\end{subfigure}
	\caption{Nominal QPOs in the Earth-Moon synodic reference frame. Designed considering the point-mass influence of the Earth, Moon, and Sun. The Moon is to scale.}
	\label{fig:nom_QPOs}
\end{figure}

Note that the resulting trajectories are similar to those considered in \cite{qi2019ContinuousThrust} and \cite{pavlak2012Strategy}, respectively. For station-keeping, the former considers a continuous LQR implementation, striving for optimality under a linear approximation of the dynamics. The latter presents a two-stage impulsive approach based on multiple-shooting, intertwining control maneuvers with trajectory design for loose reference following. By considering the QPOs in Fig.~\ref{fig:nom_QPOs}, it is possible to compare the station-keeping performance of the proposed approach with both continuous and impulsive alternatives available in the literature. 

\subsection{Base Control Law}

Firstly, the control law with no actuation saturation from Eq.~\eqref{eq:base_law} is employed for station-keeping of the QPOs from Fig.~\ref{fig:nom_QPOs}, so as to evaluate the performance of the base nonlinear backstepping technique. At this point, no external perturbations are considered, such that both the on-board and true dynamics consider only the point-mass influences of the Earth, Moon, and Sun.

In regards to the evaluation of the base control law's performance at station-keeping about the L2 Northern Halo QPO, from Fig.~\ref{fig:L2_halo}, the various operational errors and constraints previously defined in Section~\ref{subsec:res_preliminaries} were selected in accordance with \cite{qi2019ContinuousThrust}, and are collected in Table~\ref{tab:L2_Halo_op_params}. This allows for an almost direct comparison with the results obtained for the continuous LQR alternative from the literature. In particular, a relatively large insertion error in position is considered, to evaluate the adequacy of the proposed controller at eliminating challenging deviations.

\begin{table}[h]
	\centering
	\caption{Operational errors and constraints for the Monte Carlo analysis of station-keeping about the L2 Northern Halo QPO.}
	\label{tab:L2_Halo_op_params}
	\makebox[\linewidth][c]{%
	\begin{tabular}{llllllll}
		\hline
		Parameter & $\sigma_{I,r}$ & $\sigma_{I,v}$ & $\sigma_{N,r}$ & $\sigma_{I,r}$ & $\sigma_C$ & $u_\text{min}$ & $T_M$\\
		\hline
		Value & $100~\si{\kilo \meter}$ & $1~\si{\centi \meter \per \second}$ & $1~\si{\kilo \meter}$ & $1~\si{\centi \meter \per \second}$ & $2\%$ & $0.1~\si{\micro \meter \per \second \squared}$ & $2~$days\\
		\hline
	\end{tabular}
	}
\end{table}

The results of the Monte Carlo trials totaling 100 runs are presented in Table~\ref{tab:MC_L2_Halo_res}, considering the benchmarks previously established, under the choice of $t_i=0$ and ${t_f=365}$ days. The average values and standard deviation of all runs are presented, as denoted by $\bar{\chi}$ and $\sigma_{\chi}$, respectively. The envelope metrics disregard the first 50 days of simulation, i.e. we take $t_i'=50$ days. Later, the adequacy of this selection is confirmed. For further analysis, two sets of trials are conducted, corresponding to the selection of apoapsis and periapsis as the orbital insertion point. The control gains are selected as $k_1=0.5=k_2$, stemming from a thorough preliminary analysis \cite{Nunes2025backstepping}, where a parametric study was carried out to assess the effect of varying $k_1$ and $k_2$ on the system response. A compilation of the system responses resulting from these trials is also presented in Figs.~\ref{fig:MC_L2_Halo_apo} and \ref{fig:MC_L2_Halo_per} for runs beginning at apoapsis and periapsis, respectively. These are given in terms of the magnitudes of the position deviation, velocity deviation, and control command, over the first 100 days of simulation. Note that the results obtained will naturally differ for other control gain choices.



\begin{table}[h]
	\centering
	\caption{Performance metrics from Monte Carlo analysis of station-keeping about the L2 Northern Halo QPO, considering $t_i=0$, $t_i'=50$, and $t_f=365$ days. The base control strategy is employed, taking $k_1=0.5=k_2$.}
	\label{tab:MC_L2_Halo_res}
	\begin{tabular}{ll ll ll l}
		\hline
		& & \multicolumn{2}{l}{Insertion at apoapsis} & \multicolumn{2}{l}{Insertion at periapsis} & \\
		\cmidrule(lr){3-4} \cmidrule(lr){5-6}
		& & $\bar{\chi}$ & $\sigma_{\chi}$ & $\bar{\chi}$ & $\sigma_{\chi}$ & Units\\
		\hline
		\parbox[t]{3mm}{\multirow{6}{*}{\rotatebox[origin=c]{90}{$k_1=0.5=k_2$}}}& $E_v$ & 9.7492 & 1.4101 & 10.852 & 1.7565 & $[\si{\meter\per\second}]$ \\
		& $E_e$ & 9.5794 & 7.5280 & 27.374 & 24.311 & $[\si{\milli \meter\squared \per \second \cubed}]$\\
		& $\text{env}~\norm{\mathbf{z}_1}$ & 19.942 & 3.3587 & 20.371 & 3.5290 & $[\si{\kilo\meter}]$\\
		& $\text{env}~\norm{\mathbf{z}_2}$ &  6.9743 & 1.0458 & 7.1908 & 1.3137 & $[\si{\centi\meter\per\second}]$\\
		& $\max \norm{\mathbf{u}}$ & 3.5943 & 1.8429 & 11.512 & 6.5257 & $[\si{\micro \meter \per \second \squared}]$\\
		& $T_\text{idle}$ & 50.765 & 9.9694 & 50.595 & 9.8931 & $[\text{days}]$\\
		\hline
	\end{tabular}
\end{table}
From the analysis of Table~\ref{tab:MC_L2_Halo_res}, it is possible to conclude that the total velocity variation enacted by the actuators under the proposed base station-keeping strategy is of the same order of magnitude of continuous approaches found in literature. In particular, it is on par with the LQR strategy under comparison \cite{qi2019ContinuousThrust}, where an average yearly cost of $\sim15~\si{\meter\per\second}$ was achieved under similar conditions. The same can be said regarding the controller idle time, which ranges from $\sim30$ to $\sim60$ days in \cite{qi2019ContinuousThrust}, depending on whether a dead-band is included or not. As these figures may vary significantly with the gain choice, we make no comments on perceived improvements in terms of the metrics established. Nonetheless, it is noteworthy to highlight that, despite having no optimality considerations imbued in its design, the nonlinear control law proposed in this work is able to match the performance of the LQR from literature. Ultimately, these metrics serve to underline that capturing the nonlinear dynamics without approximations may outweigh the benefits of linear optimality when deviations are large. In addition, while the remaining values cannot be compared directly, it is reasonable to state that the deviation envelopes and maximum actuation command are within a tolerable threshold, when compared to similar works in literature. Comparison between the results relating to insertion at apoapsis and periapsis reveals, unsurprisingly, that the choice of insertion point yields no significant difference to the final deviation envelopes and controller idle time, but may be meaningful in terms of actuation demands. In particular, the maximum requested command magnitude is nearly tripled when the spacecraft is inserted at periapsis, which is similarly felt in terms of control energy expenditure, $E_e$. The effect on the total velocity variation, $E_v$, is minimal, which we postulate to be a result of dilution over the large time period of analysis.

\begin{figure}[htpb]
	\centering
	\begin{subfigure}{.48\textwidth}
		\centering
		\includegraphics[width=\linewidth]{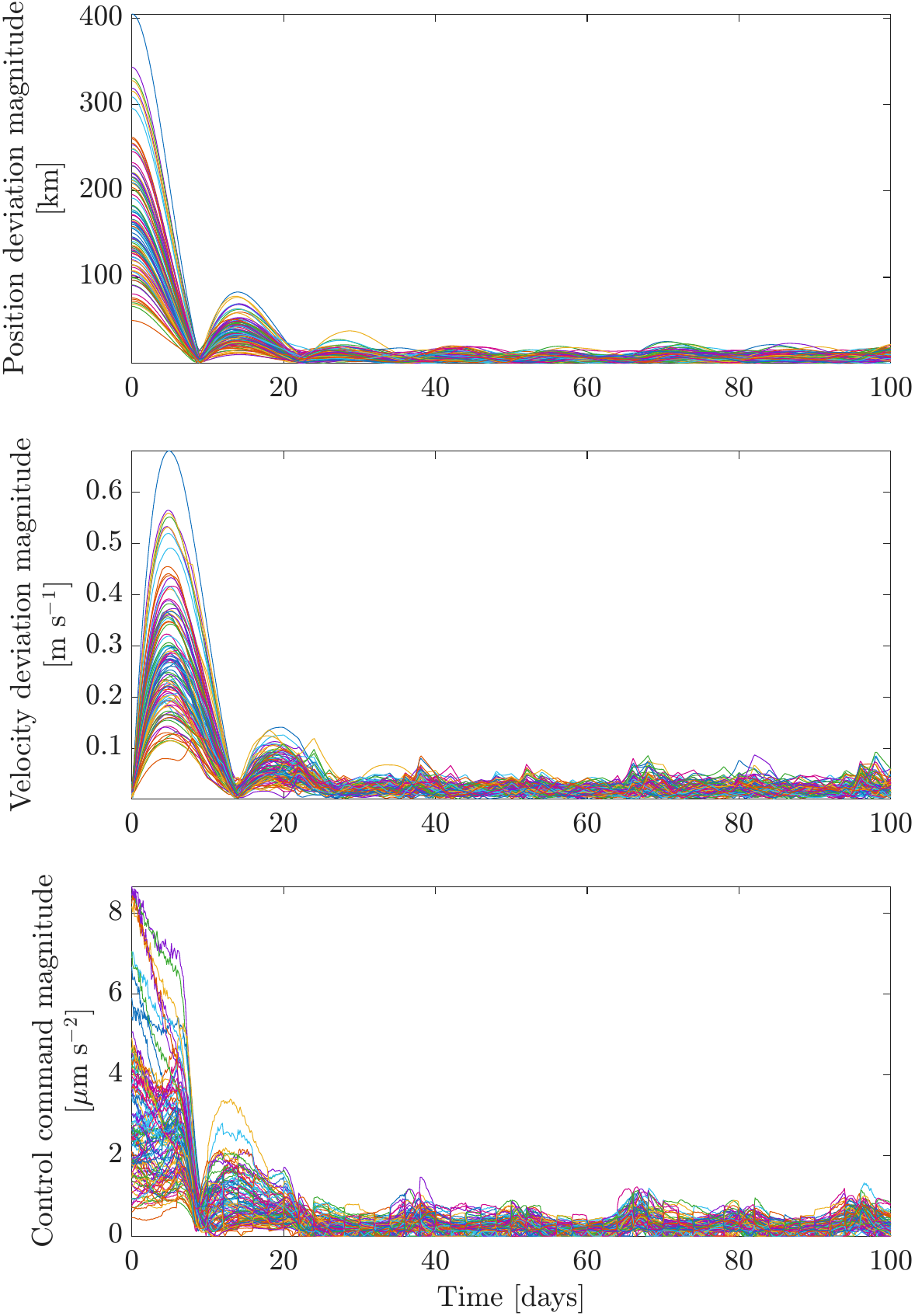}
		\caption{Insertion at apoapsis.}
		\label{fig:MC_L2_Halo_apo}
	\end{subfigure}%
	\hfill
	\begin{subfigure}{.48\textwidth}
		\centering
		\includegraphics[width=\linewidth]{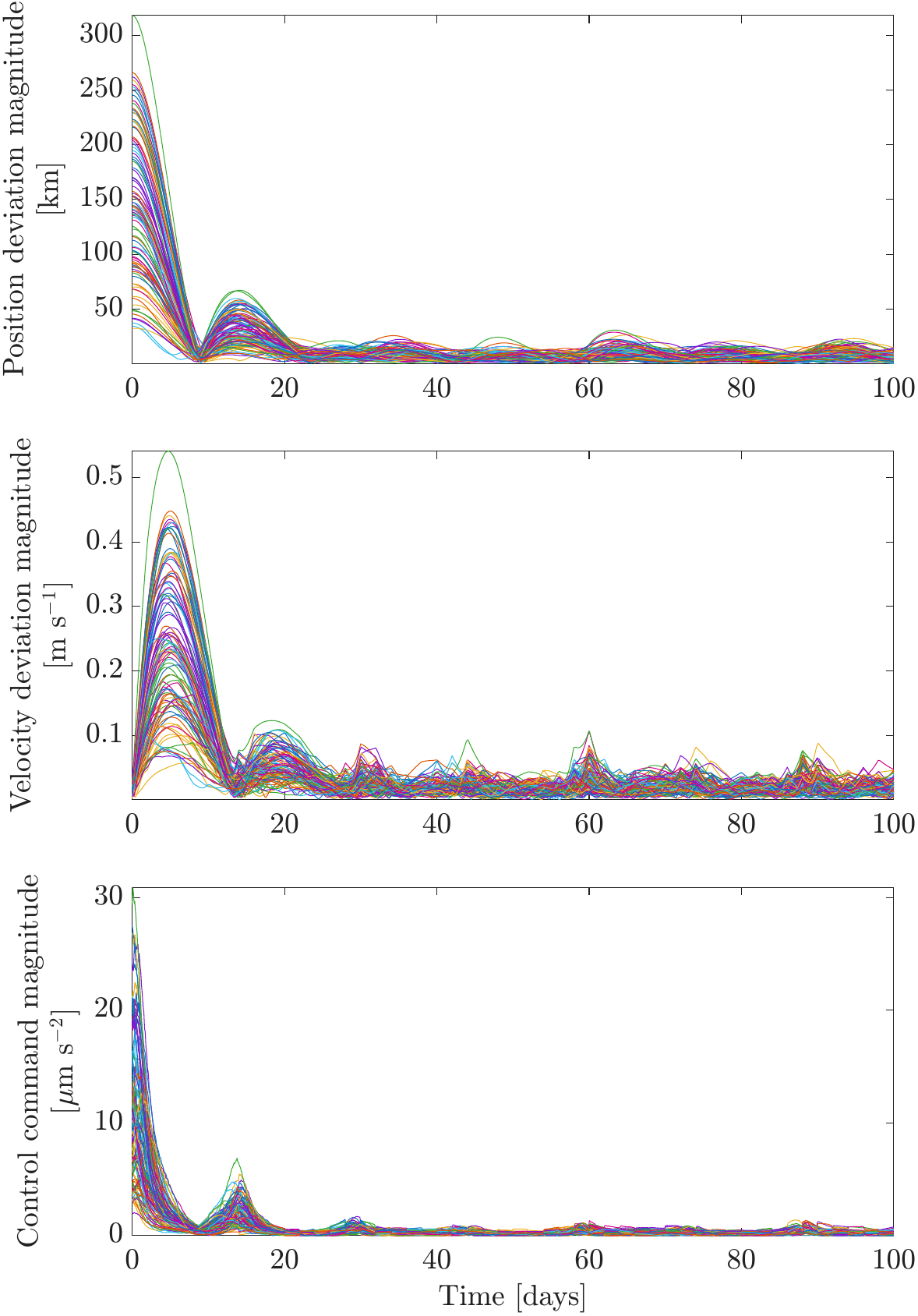}
		\caption{Insertion at periapsis.}
		\label{fig:MC_L2_Halo_per}
	\end{subfigure}
	\caption{System responses of the Monte Carlo analysis of station-keeping about the L2 Northern Halo QPO, over the first $100$ days of simulation, starting at apoapsis and periapsis. The base control strategy is employed, taking $k_1=0.5=k_2$.}
	\label{fig:MC_L2_Halo_traj}
\end{figure}

The observation of Fig.~\ref{fig:MC_L2_Halo_traj} reveals that the choice of insertion point also has little effect on the settling time of the response, following the large initial error, and gives confidence to the selection of $t_i'=50$ days for the calculation of the deviation envelopes in Table~\ref{tab:MC_L2_Halo_res}. It is also possible to confirm the overall larger actuation requests to eliminate the insertion error when the spacecraft starts at periapsis, when compared to apoapsis. In fact, the evolution of the control command magnitude also suggests that the region near periapsis is dynamically more challenging, evidenced by the more demanding control efforts that form oscillations in the plots. These oscillations are not in phase between the two sets of trials due to the different insertion point, but match in terms of orbital placement. No meaningful difference is found in regards to the magnitudes of the position and velocity deviations, which reach similar maximum values and follow a similar overall evolution. In both scenarios, a large overshoot in velocity deviation is visible and may be attributed to the particular gain pair selected in this analysis, as previously approached in \cite{Nunes2025backstepping}. Still, it is possible to confirm that no simulation diverges, attesting to the convergence guarantees postulated in Section~\ref{sec:con} for the proposed controller.

In regards to station-keeping about the L2 Lyapunov QPO, from Fig.~\ref{fig:L2_lyap}, the analysis performed in this work considers the insertion error to be absent. Since the adequacy of the proposed strategy at eliminating large deviations from the target trajectories has already been analyzed in the previous test case, this choice allows one to isolate the yearly performance metrics of station-keeping about the second QPO. In particular, we aim to compare the proposed solution with a well-established impulsive-based alternative, found in \cite{pavlak2012Strategy}.
In this sense, the operational errors and constraints are selected in accordance with the ones considered in \cite{pavlak2012Strategy}, which are summarized in Table~\ref{tab:L2_Lyap_op_params}. However, given the absence of information regarding $u_\text{min}$, due to the impulsive nature of the approach found in the literature, we recover the value used previously, from Table~\ref{tab:MC_L2_Halo_res}. In addition, we evaluate the control performance for three choices of measurement interval, between the previous case of $T_M=2$ days and a more challenging value of $T_M = 8$ days. The latter option matches the interval between the impulsive maneuvers of the shooting strategy employed in \cite{pavlak2012Strategy}, which are applied twice per revolution.

\begin{table}[h]
	\centering
	\caption{Operational errors and constraints for the Monte Carlo analysis of station-keeping about the L2 Lyapunov QPO.}
	\label{tab:L2_Lyap_op_params}
	\begin{tabular}{lllllllll}
		\hline
		Parameter & $\sigma_{I,r}$ & $\sigma_{I,v}$ & $\sigma_{N,r}$ & $\sigma_{I,r}$ & $\sigma_C$ & $u_\text{min}$ & $T_M$\\
		\hline
		Value & $0.0$ & $0.0$ & $1~\si{\kilo \meter}$ & $1~\si{\centi \meter \per \second}$ & $1\%$ & $0.1~\si{\micro \meter \per \second \squared}$ & $2,4,8~$days\\
		\hline
	\end{tabular}
\end{table}

Similarly to the previous test case, the period of analysis is taken as 1 year, leading to $t_i=0$ and $t_f=365$ days for the time instants considered in the evaluation of the performance metrics. Since the insertion error has been neglected, the deviation envelope benchmarks consider $t_i'=0$. For the same reason, it is sufficient to perform the simulations for only one insertion point, which is chosen to be apoapsis. The results stemming from the application of the proposed base control law over 100 trials are presented in Table~\ref{tab:MC_L2_Lyap_res}, for the different choices of $T_M$ considered.

\begin{table}[h]
	\centering
	\caption{Performance metrics from Monte Carlo analysis of station-keeping about the L2 Lyapunov QPO for different navigation intervals, $T_M$, considering $t_i=0=t_i'$ and $t_f=365$ days. The base control strategy is employed, taking $k_1=0.5=k_2$, with insertion at apoapsis.}
	\label{tab:MC_L2_Lyap_res}
	\begin{tabular}{l ll ll ll l}
		\hline
		& \multicolumn{2}{l}{$T_M=2$ days} & \multicolumn{2}{l}{$T_M=4$ days} & \multicolumn{2}{l}{$T_M=8$ days} & \\
		\cmidrule(lr){2-3} \cmidrule(lr){4-5} \cmidrule(lr){6-7}
		& $\bar{\chi}$ & $\sigma_{\chi}$ & $\bar{\chi}$ & $\sigma_{\chi}$  & $\bar{\chi}$ & $\sigma_{\chi}$ & Units\\
		\hline
		$E_v$ & 8.9105 & 0.6805  & 18.818  & 1.6036  & 76.478  & 10.893 & $[\si{\meter\per\second}]$ \\
		$E_e$ & 5.1157  & 0.8481  & 21.643  & 3.7689  & 420.95  & 137.03 & $[\si{\milli \meter\squared \per \second \cubed}]$\\
		$\text{env}~\norm{\mathbf{z}_1}$ & 26.202  & 4.7603  & 56.379  & 12.470  & 234.40  & 56.356 & $[\si{\kilo\meter}]$\\
		$\text{env}~\norm{\mathbf{z}_2}$ & 10.363  & 1.9405  & 21.515  & 5.4819  & 89.733  & 26.170 & $[\si{\centi\meter\per\second}]$\\
		$\max \norm{\mathbf{u}}$ & 2.2384  & 0.3933  & 4.4809  & 0.9133  & 20.106  & 5.7246 & $[\si{\micro \meter \per \second \squared}]$\\
		$T_\text{idle}$ & 62.030  & 9.8522  & 24.874  & 7.3808  & 6.7983  & 3.7609 & $[\text{days}]$\\
		\hline
	\end{tabular}
\end{table}

The results in Table~\ref{tab:MC_L2_Lyap_res} highlight the necessity of sufficiently frequent measurements to update the prediction of the spacecraft state that is fed to the control law. For $T_M = 2$ days, the results reflect the previous analysis carried out for the L2 Northern QPO on nearly all grounds -- control benchmarks see a significant decrease given the absence of a large insertion error, but deviation envelopes and actuator idle time see only slight changes. Even for $T_M= 4$ days, around half of the time interval considered in the impulsive approach covered in \cite{pavlak2012Strategy}, the results remain adequate. In particular, the average yearly velocity cost is on par with the impulsive alternative from literature, where $\sim15-18~\si{\meter\per\second}$ was achieved, under similar conditions, depending on the specific optimization strategy employed. However, for larger measurement intervals, namely $T_M = 8$ days, the accumulation of navigation, control, and numerical errors translate into a significant increase on all accounts of the control and deviation benchmarks, and a decrease in control idle time -- results which may not be meaningfully improved simply through a different gain selection. In such cases, impulsive approaches such as the one in \cite{pavlak2012Strategy}, which optimize a current maneuver by propagating the HFEM dynamics forward in time with very high-accuracy, unsurprisingly take the upper edge. As previously explored, preliminary mission analysis must counterweight this information with the increase in computational demands associated to such an approach, the thrust capabilities of the actuator system employed, and the formal operational guarantees attained. In this sense, if measurements are sufficiently frequent, the proposed nonlinear strategy may prove beneficial because it avoids expensive numerical optimization steps and is able to provide formal guarantees of convergence.

To better interpret the aforementioned results, Fig.~\ref{fig:MC_L2_Lyap} presents the corresponding system responses for the three sets of trials with different values of $T_M$, which may be discerned by their color. In addition, the overall envelopes of each plotted quantity, i.e. the maximum value between all runs for a given $T_M$, at each instant of time, are represented as shaded areas of the same color. The results further confirm the adequacy of the proposed strategy when the measurement interval is small, and its inefficacy when measurements are not sufficiently frequent. As previously discussed, this may be attributed to the accumulation of errors in the position and velocity deviation predictions, which are further exacerbated by the control action itself, which works with incorrect spacecraft state predictions -- creating a snowball effect that grows worse with time. This shortcoming has previously been observed in the literature for other continuous station-keeping strategies and can be ultimately identified as a drawback of such approaches. Potential improvements may be achieved by considering an on-board propagation model with a higher degree of accuracy. However, preliminary analysis has shown that such improvements are very marginal and that the navigation errors and measurement interval clearly dominate the discrepancy between the predicted and true satellite states, which motivate extra control action. Nonetheless, despite the increase in control benchmarks, we remark that all simulations remain in relative proximity to the target trajectory. One may postulate that the formal stability guarantees provided by the control law play an important role in avoiding divergence in these numerical trials.

\begin{figure}[htpb]
	\centering
	\includegraphics[width=\textwidth]{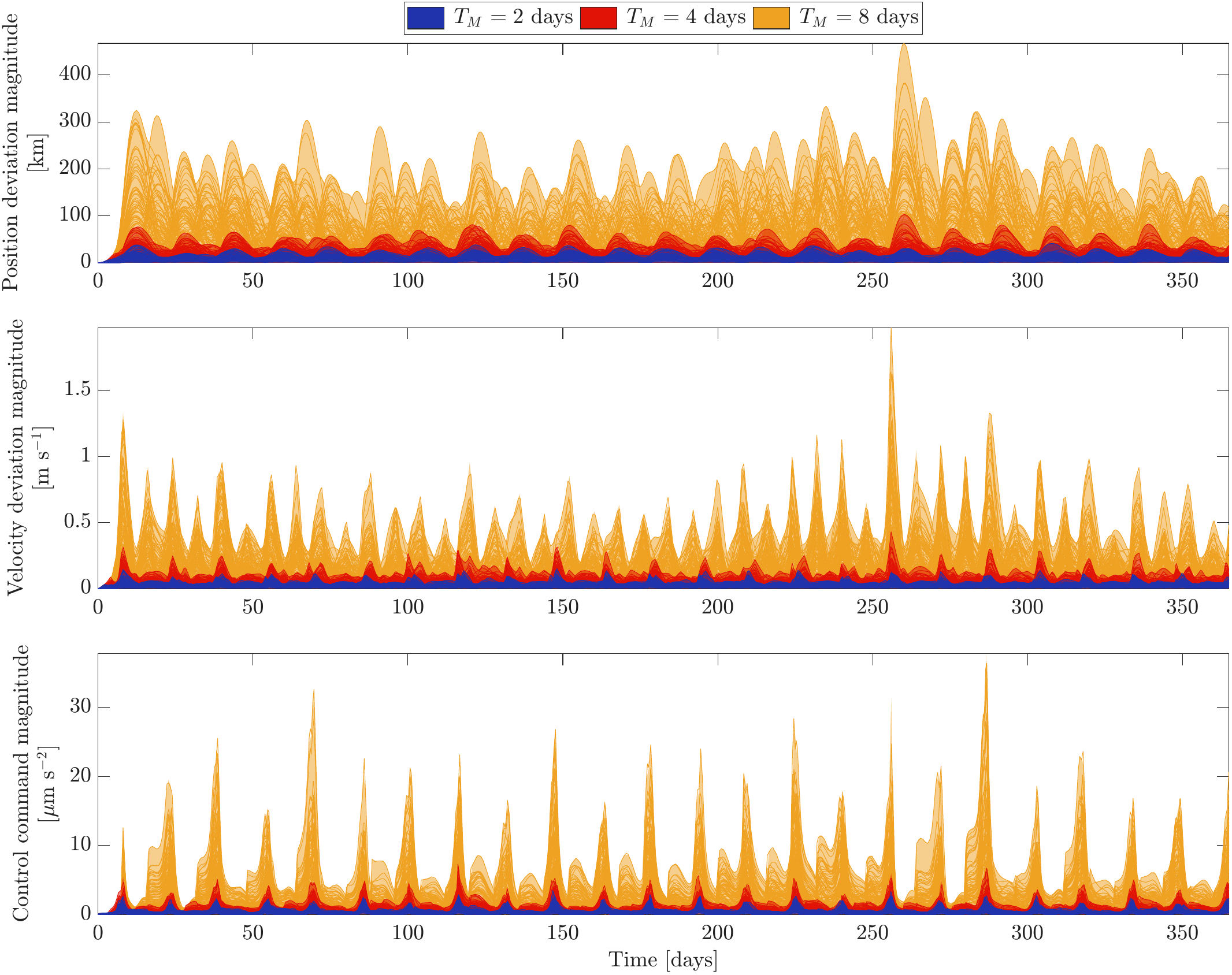}
	\caption{System responses of the Monte Carlo analysis of station-keeping about the L2 Lyapunov QPO, over $1$ year of simulation, starting at apoapsis. The base control strategy is employed, taking $k_1=0.5=k_2$. Shaded regions discern between simulations considering different measurement intervals, $T_M$.}
	\label{fig:MC_L2_Lyap}
\end{figure}

\subsection{External Perturbation Rejection}
\label{subsec:res_external_perts}

In order to further improve the realism of the application, we ponder the inclusion of external perturbations, to be considered at the level of the true dynamics and trajectory design\footnote{In \cite{qi2019ContinuousThrust}, the importance of considering the perturbations when designing the nominal trajectories is underlined. Such efforts avoid unnecessary control action acting against the dynamics. This does not increase the computational cost of the controller, as trajectory design is typically done before deployment, with the highest possible fidelity model.}, but not by the on-board controller. In particular, three contributions are considered: \textit{(i)} spherical harmonics in the gravitational fields of the Earth and Moon; \textit{(ii)} solar radiation pressure (SRP); and \textit{(iii)} gravitational perturbations due to the other relevant bodies in the Solar System. In regards to the inclusion of spherical harmonics in the gravitational fields of the Earth and Moon, we ponder the use of an 8th order model -- a typical standard considered in literature \cite{williams2017NRHO,davis2017Stationkeeping,Spreen2023BaselineNRHO}. For the SRP perturbation, it is necessary to establish additional physical properties of the spacecraft. While this particularizes the analysis presented in this work, it is also the case that it may be interpreted as a proof of concept that may be easily extended to other configurations, for which similar results may be expected. In this work, we assume that the spacecraft has a mass of $m= 500~\si{\kilogram}$, mean area of radiation incidence of $A= 10~\si{\meter \squared}$, and coefficient of reflectivity $C_r=1.8$. The SRP perturbation is calculated with a typical cannonball model, considering possible occlusion due to the Earth and Moon, for improved realism. As for the external body perturbations, only the point-mass contribution of Jupiter is pondered. As will be seen shortly, the remaining bodies of the solar system exert negligible forces on the spacecraft. All perturbing effects are modeled and calculated making recourse of Tudat's capabilities.

For this study, we return to the L2 Northern Halo QPO. In Fig.~\ref{fig:pert_L2_Halo}, the magnitude of the aforementioned perturbations over the path of the target trajectory is presented and compared to the point-mass (PM) influences of the Earth, Moon, and Sun, in a logarithmic scale. This serves to illustrate that, for this particular QPO, the spherical harmonics (SH) in the gravitational field of the Moon are, by far, the most meaningful perturbation to the dynamics -- on par with the point-mass gravitational influence of the Sun. This is followed by the SRP perturbation, around 2 orders of magnitude lower. Note that, for this perturbation's profile, the discontinuities evidence occlusion with respect to the Earth and/or Moon. Finally, the perturbation stemming from the spherical harmonics in Earth's gravitational model and the point-mass contribution of Jupiter present near negligible effects. Additional analysis, here omitted, shows that the contributions of the remaining celestial bodies lie below that of Jupiter and are thus negligible, practically speaking. Moreover, we note that the L2 Lyapunov QPO is affected in a similar manner by the perturbations considered. However, both QPOs occupy a similar region in space and thus the results obtained may differ vastly for other nominal trajectories.

\begin{figure}[htpb]
	\centering
	\includegraphics[width=\textwidth]{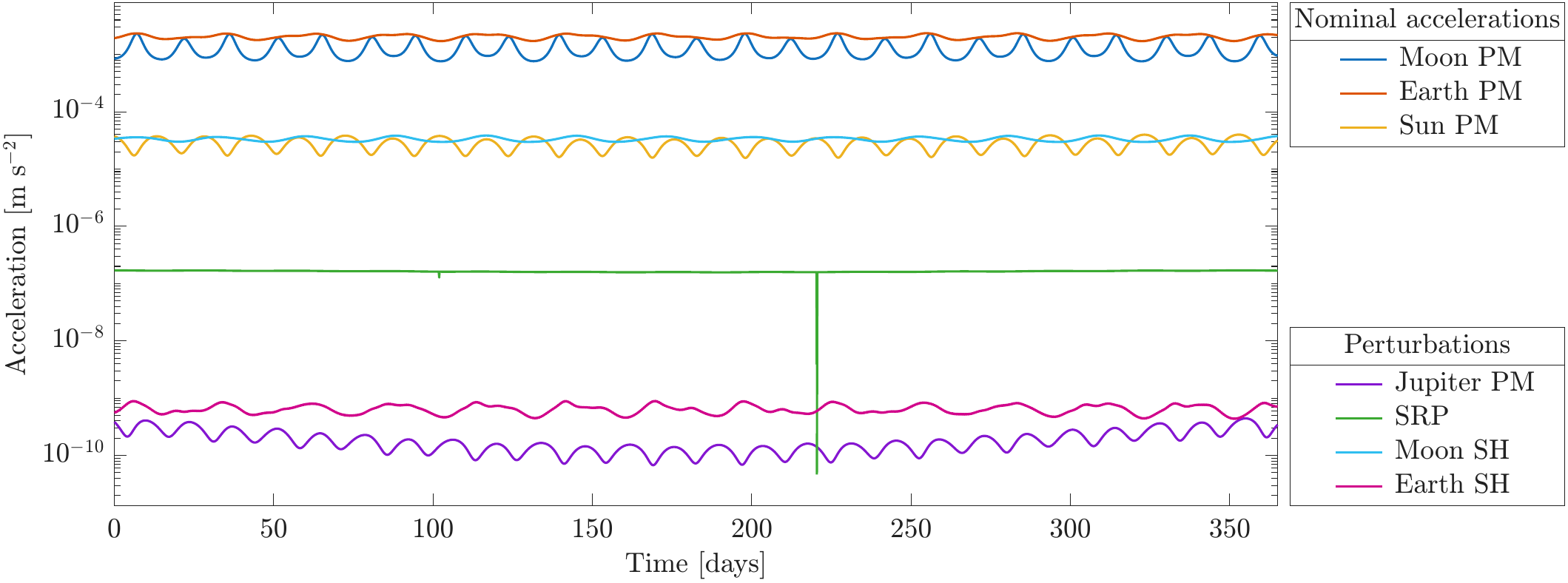}
	\caption{Perturbations felt by a spacecraft following the L2 Northern Halo QPO, over 1 year of simulation, compared to nominal accelerations from the point-mass gravitational contributions of the Earth, Moon, and Sun.}
	\label{fig:pert_L2_Halo}
\end{figure}

To analyze the perturbation rejection capabilities of the base nonlinear control law employed, a Monte Carlo analysis is carried out considering the established external perturbations, recovering also the operational constraints and errors previously presented in Table~\ref{tab:L2_Halo_op_params}. The corresponding results, for missions starting at apoapsis and periapsis are represented in Table~\ref{tab:MC_L2_Halo_pert_res}. We note that, once more, all simulations correctly converge towards a region of proximity to the target trajectory, obeying, on average, the position and velocity envelopes detailed in the table.



\begin{table}[h]
	\centering
	\caption{Performance metrics from Monte Carlo analysis of station-keeping about the L2 Northern Halo QPO under external perturbations, considering $t_i=0$, $t_i'=50$, and $t_f=365$ days. The base control strategy is employed, taking $k_1=0.5=k_2$.}
	\label{tab:MC_L2_Halo_pert_res}
	\begin{tabular}{l ll ll l}
		\hline
		& \multicolumn{2}{l}{Insertion at apoapsis} & \multicolumn{2}{l}{Insertion at periapsis} & \\
		\cmidrule(lr){2-3} \cmidrule(lr){4-5}
		& $\bar{\chi}$ & $\sigma_{\chi}$ & $\bar{\chi}$ & $\sigma_{\chi}$ & Units\\
		\hline
		$E_v$ & 12.513 & 1.4244  & 13.755  & 1.6947  & $[\si{\meter\per\second}]$ \\
		$E_e$ & 12.987  & 8.1150  & 30.123  & 26.160  & $[\si{\milli \meter\squared \per \second \cubed}]$\\
		$\text{env}~\norm{\mathbf{z}_1}$ & 23.117  & 2.6172  & 23.209  & 2.9132  & $[\si{\kilo\meter}]$\\
		$\text{env}~\norm{\mathbf{z}_2}$ & 7.1769  & 1.2194  & 7.1893  & 1.1387  & $[\si{\centi\meter\per\second}]$\\
		$\max \norm{\mathbf{u}}$ & 4.1748  & 2.0565  & 12.552  & 6.5583  & $[\si{\micro \meter \per \second \squared}]$\\
		$T_\text{idle}$ & 32.627  & 7.5591  & 32.430  & 5.4897  & $[\text{days}]$\\
		\hline
	\end{tabular}
\end{table}

The benchmarks obtained highlight that, even under realistic external perturbations, the proposed control law is adequate at eliminating the large insertion error and, moreover, at maintaining the spacecraft in a region of proximity to the target trajectory. To this end, when compared with the results from Table~\ref{tab:MC_L2_Halo_res}, similar deviation envelopes are observed. In terms of actuation, the control effort benchmarks $E_v$ and $E_e$ see a moderate increase of approximately $30\%$. The actuator idle time sees a decrease of similar magnitude in both sets of trials when compared to the unperturbed analysis. This comparison shows that, when external perturbations are introduced, the controller continues to fulfill the proposed tasks but expends moderately more actuation to do so. Nonetheless, the performance observed is still in line with other continuous strategies from literature considering similarly perturbed application scenarios, such as \cite{qi2019ContinuousThrust}, where a yearly cost of $\sim 18~\si{\meter \per \second}$ is achieved under comparable conditions. These results point to a measure of inherent robustness of the proposed nonlinear control law.

\subsection{Control Law with Actuation Saturation}
\label{subsec:res_sat}

Having established the adequacy of the base nonlinear control law at the station-keeping task about two relevant QPOs in the Earth-Moon system, we move to quantifying the effects of formally including an actuation saturation threshold in the controller design. As previously discussed, the approach pursued in this work relies on the coupling of the controller gains and the saturation threshold to formalize exponential stability guarantees in a region of proximity to the target trajectory. In this sense, we recall that the initial deviation from the target orbit must belong to the set $\mathcal{B}$ from Eq.~\eqref{eq:deviation_constraint_set}, which depends on the evolution of the target trajectory itself, namely on its closest approach to the celestial bodies considered over the time frame of analysis. As the target trajectories considered in this work are QPOs, we assume that, for a sufficiently large period of simulation, the number of revolutions that make up the nominal solutions are enough to fully capture their spread in space. For this reason, while the formal guarantees hold only over the time-domain of study, we argue that they may easily be extended for longer periods of time, should a mission require so.

As previously mentioned, the selection of the control gains must be performed harmoniously with the value of the actuation saturation threshold. To better illustrate the coupling between parameters, we study the minimum saturation threshold, $u_\text{sat,min}$, that respects Eq.~\eqref{eq:u_sat_ineq} tightly, given $k_1$ and $k_2$. We focus this analysis on the L2 Northern Halo QPO and an established expected value for the initial deviation, derived from the operational errors from Table~\ref{tab:L2_Halo_op_params}. More specifically, we take $\norm{\mathbf{z}_0}\approx 1 \times 10^{-3}$ as a conservative guess which corresponds to the $99.9\%$ quantile of the insertion error, as determined through statistical analysis. Naturally, as the controller operates subject to operational constraints and errors, there are no formal guarantees on the magnitude of the deviation when new measurements are available and the deviation is updated. This could seemingly put into question the balance between control gains and saturation threshold that was carefully tuned with the expected deviation at insertion. We later discuss possible solutions to this problem but, for now, it is assumed that the insertion error constitutes the maximum deviation from the target orbit. Considering the established value for $\norm{\mathbf{z}_0}$, the maximum $\rho$ for which $\mathbf{z}_0\in\mathcal{B}$ is satisfied imposes $k_1<10.775$. As such, we limit our analysis to the conservative figure of $k_1\leq10$.

We recall that a meaningful decision in regards to the calculation of $u_\text{sat,min}$, from Eq.~\eqref{eq:u_sat_ineq}, lies in the minimum allowed relief factor, subject to $\beta_\text{min}>\beta_\text{crit}$ for each $k_1$ and $k_2$. In order to evaluate the effect of this parameter, we ponder its definition according to an auxiliary parameter, $\eta\in(0,1]$, such that
\begin{equation*}
	\beta_\text{min} = \beta_\text{crit} + \eta (1-\beta_\text{crit}).
\end{equation*}
In this way, $\eta=0$ identifies the critical condition, where the exponential nature of the stability guarantees is lost, as the minimum exponential rate of convergence formally goes to zero -- asymptotic stability is, nonetheless, maintained. Contrarily, increasing $\eta$ ensures the minimum rate of convergence is larger, at the expense of allowing less relief to the control command. In the limit, for $\eta=1$, the relieved control law from Eq.~\eqref{eq:relieved_law} returns to its base form, in Eq.~\eqref{eq:base_law}.

In Fig.~\ref{fig:u_sat_L2_Halo_apo}, the level curves for the ratio $u_\text{sat,min}/\norm{\mathbf{z}_0}$ are provided, for increasing choices of $\eta$, considering insertion at apoapsis. The results obtained are unique to the L2 Northern Halo QPO and the start epoch and time period of analysis selected, which are recalled from previous sections. As such, they may differ greatly for other target trajectories. Still, further analysis shows that changes in start epoch and time-period do not yield significantly different results. As previously discussed, the last plot, for $\eta=1$, may be interpreted as a baseline -- it corresponds to the total application of the original backstepping control law from Eq.~\eqref{eq:base_law} and, thus, the level curves constitute the maximum expected actuation command of the base controller, under the exponential stability guarantees that it provides. On the contrary, the plot for $\eta=0$ represents the limit case where the relief to the original control law is free to reach the maximum theoretical limit for exponential stability.

\begin{figure}[htpb]
	\centering
	\includegraphics[width=\linewidth]{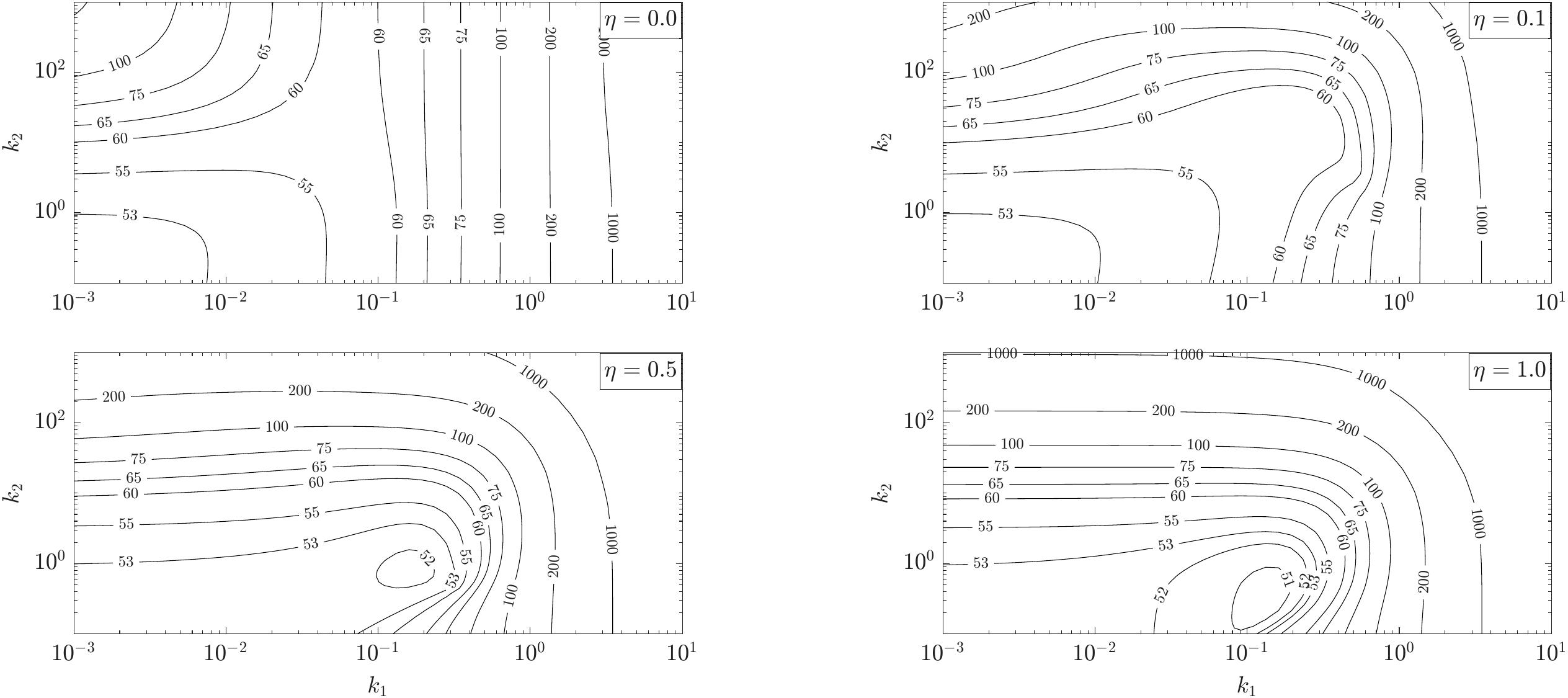}
	\caption{Level curves for the minimum saturation threshold $\frac{u_\text{sat,min}}{\norm{\mathbf{z}_0}}$, for the L2 Northern Halo QPO and $\norm{\mathbf{z}_0}\approx1\times 10^{-3}$, considering insertion at apoapsis.}
	\label{fig:u_sat_L2_Halo_apo}
\end{figure}

One immediate conclusion from the observation of Fig.~\ref{fig:u_sat_L2_Halo_apo} is that, for small $k_2$ and large $k_1$, all plots approach the same profile, irrespective of $\eta$ -- ultimately, $\rho$ explodes and the linear term completely dominates the minimum saturation threshold in Eq.~\eqref{eq:u_sat_ineq}. A similar observation is made regarding small $k_2$ and small $k_1$, however due to the elimination of the linear contribution of the control command -- cases where the minimum saturation threshold is dictated by the nonlinear acceleration term. On the contrary, for moderate choices of $k_1$, it is possible to observe that $\eta$ plays an important role by balancing the opposing nature of the relief tolerance and convergence rate. By direct comparison between the plots for $\eta=1$ and $\eta=0$, it is possible to see that, while the overall minimum value of $u_\text{sat,min}$ is slightly higher for the latter, the region where it remains lower than $100\norm{\mathbf{z}_0}$, for example, is significantly increased. This means that, generally speaking, for the same saturation threshold, the introduction of a relief factor to the control law may provide more lenience in the selection of the control gains. While this is not indicative of a faster response (because, even if larger gains are allowed, the control command is being relieved, after all), it does suggest a larger margin for response shaping through gain selection. Conversely, it might be the case that, for a given gain pair, it is only possible to respect a saturation threshold if a relief to the control law is considered -- see, for example, how $k_2=100$ allows only for $u_\text{sat}>100\norm{\mathbf{z}_0}$ when $\eta=1$, but as low as $u_\text{sat}=60\norm{\mathbf{z}_0}$ when $\eta=0$. As the figure suggests, perhaps a choice of $\eta$ between the limit values of $0$ and $1$ may be better suited, since it provides a balance between the benefits stemming from a larger relief, as in $\eta=0$, and of a faster convergence rate, for $\eta=1$. Clearly, the optimal choice of $\eta$ for each gain pair is generally not evident. We return attention to this matter shortly.

In a similar fashion, Fig.~\ref{fig:u_sat_L2_Halo_per} presents the curve levels for the same function, but considering insertion at periapsis. Once more, the results provided are valid only for the L2 Northern Halo QPO, the selected start epoch, and time period considered. 
We start by highlighting that, in the case of $\eta=0$, the level curves obtained are exactly the same as the previous case from Fig.~\ref{fig:u_sat_L2_Halo_apo}, considering insertion at apoapsis. This is because, when the maximum relief is allowed, the minimum guaranteed rate of convergence goes to zero (the exponential nature of the stability guarantees is lost), such that the nonlinear term that bounds the acceleration error in Eq.~\eqref{eq:u_sat_ineq} is irrespective of the insertion point -- this is further discussed in \ref{ap:f_a_bound}. Similarly, the conclusions previously drawn for very large and very small values of $k_1$ also hold in the case of trajectories starting at periapsis. As for the differences between the two cases, we observe from the analysis of the $\eta=1$ plot that the benefits stemming from a faster convergence rate, previously important for decreasing $u_\text{sat,min}$ over moderate choices of $k_1$, are no longer significant when beginning at periapsis. This is a mere consequence of starting in a region where the nonlinear acceleration error is at a maximum. For this reason, from the perspective of the saturation threshold, there is no benefit for increasing $\eta$ when starting at periapsis, as $\eta=0$ provides the optimal $u_\text{sat,min}$ regions. Operational considerations may require a faster convergence speed and thus mandate the selection of a larger value for $\eta$, but this is considered to be outside of the scope of this work.
\begin{figure}[htpb]
	\centering
	\includegraphics[width=\linewidth]{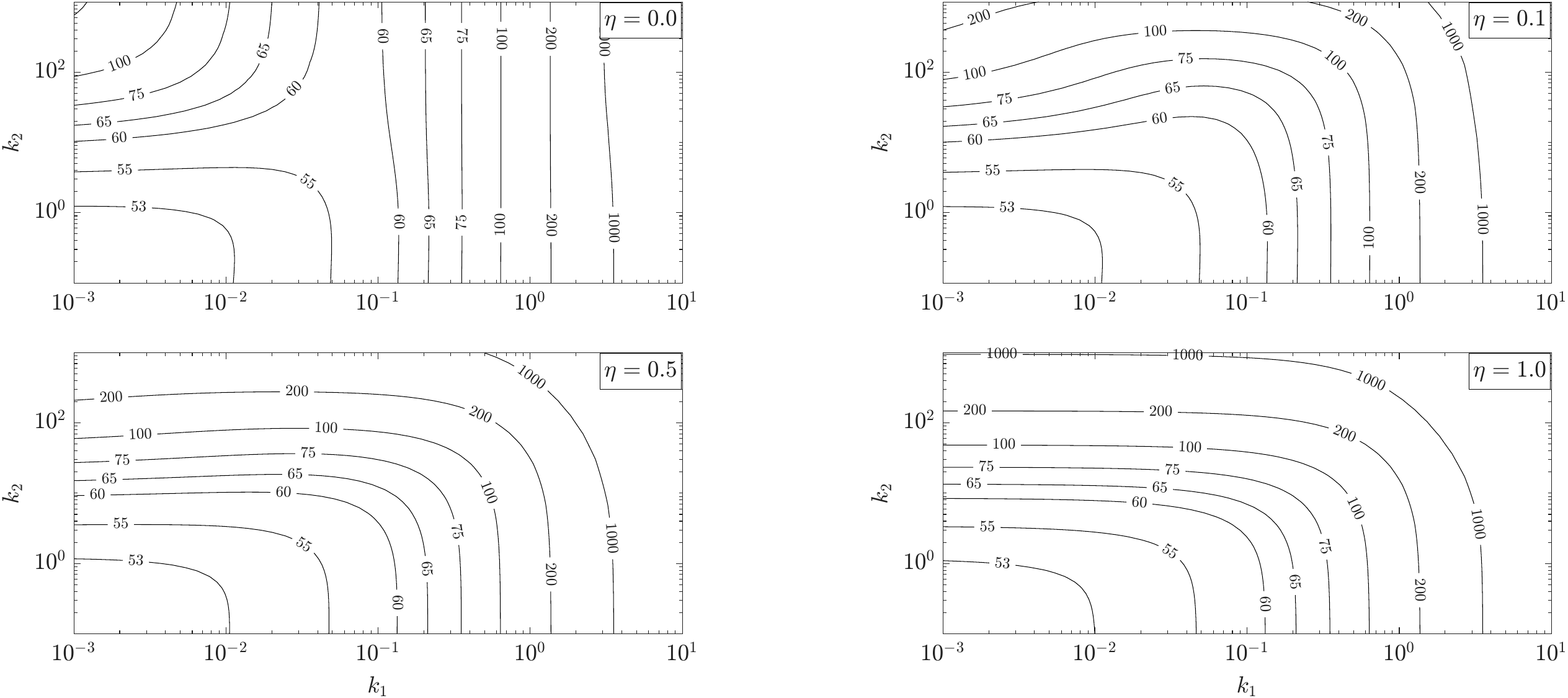}
	\caption{Level curves for the minimum saturation threshold $\frac{u_\text{sat,min}}{\norm{\mathbf{z}_0}}$, for the L2 Northern Halo QPO and $\norm{\mathbf{z}_0}\approx1\times 10^{-3}$, considering insertion at periapsis.}
	\label{fig:u_sat_L2_Halo_per}
\end{figure}

The conclusions drawn from the study of Figs.~\ref{fig:u_sat_L2_Halo_apo} and \ref{fig:u_sat_L2_Halo_per} may thus be summarized as follows: when possible, selecting insertion at apoapsis provides access to the more lenient saturation thresholds for the same control gains. Conversely, given a saturation threshold, insertion at apoapsis ensures that the gain selection may be made from a larger pool of options, with $\eta\to 0$ leading to the broadest pool. However, slightly increasing $\eta$ may prove beneficial to access desirable gain configurations by taking advantage of the faster minimum convergence rate. On the contrary, if insertion at periapsis is mandatory, optimal conditions are always reached for the maximum possible relief, i.e. taking $\eta\to 0$.

Given that $\eta$ is, in principle, a design choice, perhaps it is relevant to optimize its value for each gain pair $(k_1,k_2)$, namely with the intent of minimizing $u_\text{sat,min}$. Keeping the aforementioned considerations in mind, we perform this analysis only for insertion at apoapsis. For optimization purposes, we make recourse of the nonlinear least-squares routine provided by MATLAB's Optimization Toolbox \cite{matlabOptimizationToolbox}. The level curves resulting from this procedure are presented in Fig.~\ref{fig:u_sat_min_opt}. In addition, Figs.~\ref{fig:eta_opt} and \ref{fig:beta_min_opt} present the corresponding contour plots of the optimal $\eta$ and $\beta_\text{min}$ values, respectively.

\begin{figure}[htpb]
	\centering
	\includegraphics[width=\textwidth]{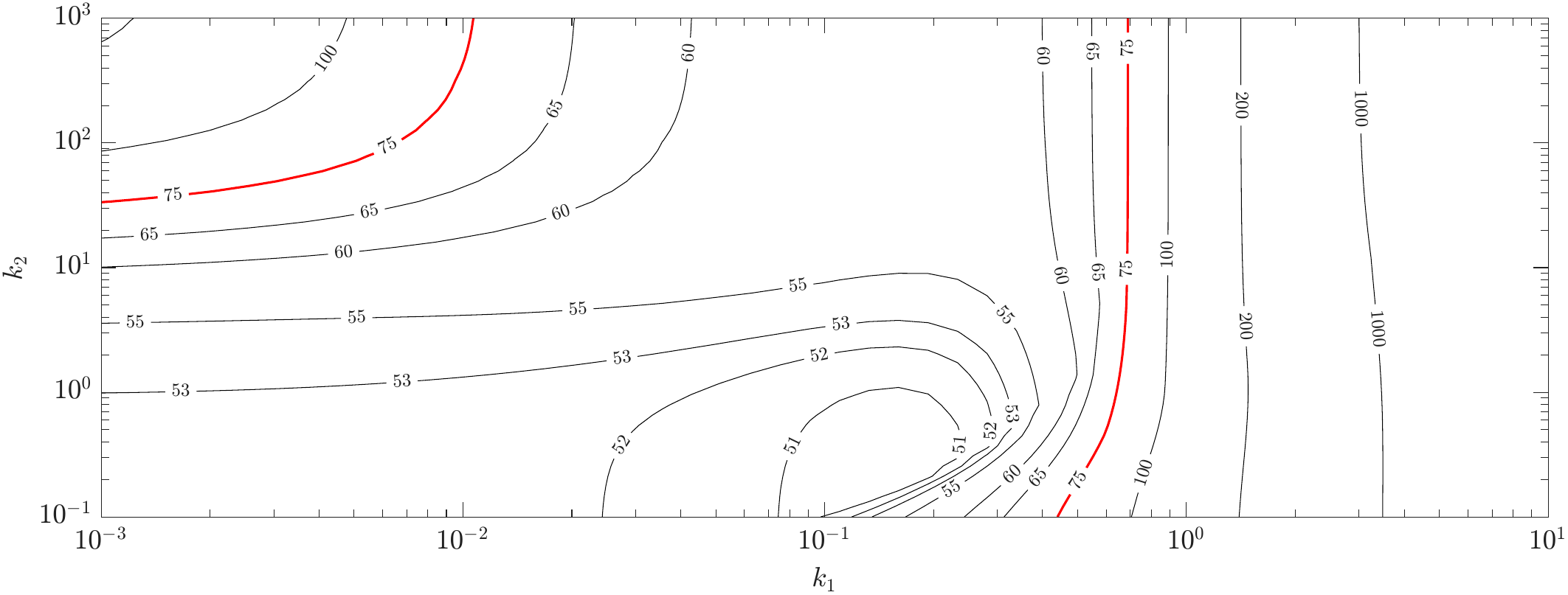}
	\caption{Level curves of the optimal minimum actuation saturation threshold $\frac{u_\text{sat,min}}{\norm{\mathbf{z}_0}}$, for the L2 Northern Halo QPO and $\norm{\mathbf{z}_0}\approx1\times 10^{-3}$, considering insertion at apoapsis. The highlighted level curve corresponds to a value of $u_\text{sat,min}\approx 0.1~\si{\newton}$ for a $500~\si{\kilogram}$ spacecraft.}
	\label{fig:u_sat_min_opt}
\end{figure}

\begin{figure}[htpb]
	\centering
	\begin{subfigure}{.48\textwidth}
		\centering
		\includegraphics[width=\linewidth]{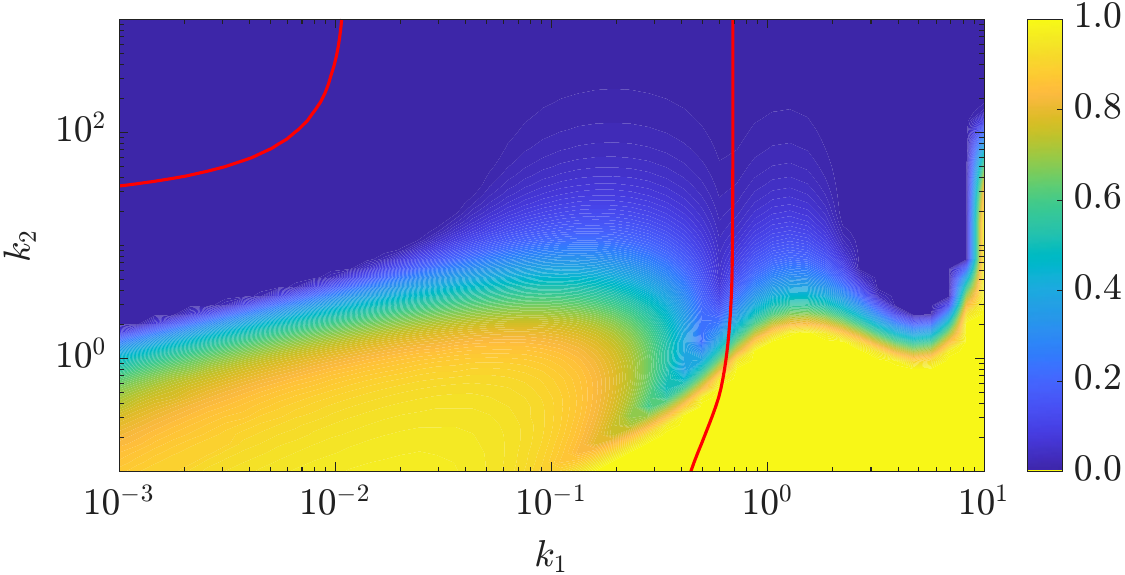}
		\caption{Optimal $\eta$ factor.}
		\label{fig:eta_opt}
	\end{subfigure}%
	\hfill
	\begin{subfigure}{.48\textwidth}
		\centering
		\includegraphics[width=\linewidth]{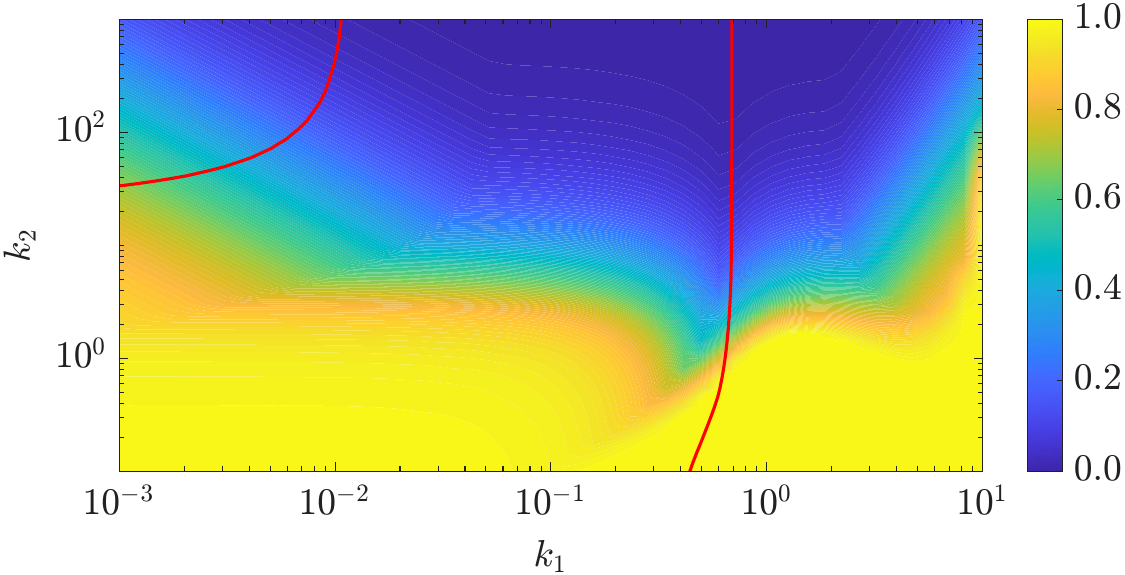}
		\caption{Maximum relief, $\beta_\text{min}$.}
		\label{fig:beta_min_opt}
	\end{subfigure}
	\caption{Optimal $\eta$ factor and maximum relief, $\beta_\text{min}$, leading to the results in Fig.~\ref{fig:u_sat_min_opt}, for the L2 Northern Halo QPO and $\norm{\mathbf{z}_0}\approx 1\times 10^{-3}$, considering insertion at apoapsis. The highlighted curve corresponds to the conditions yielding $u_\text{sat,min}\approx 0.1~\si{\newton}$, for a $500~\si{\kilogram}$ spacecraft.}
	\label{fig:beta_params_opt}
\end{figure}

For a given saturation threshold, Fig.~\ref{fig:u_sat_min_opt} may be used to retrieve optimal gain pairs that ensure $u_\text{sat}$ is respected without a sacrifice to the exponential stability guarantees, even if the controller saturates. In this work, we consider the saturation value to be $0.1~\si{\newton}$ -- a conservative figure in comparison with typical limits of hall-effect and ion based electric propulsion systems \cite{shastry2025ElectricPropulsion, Herman2016IonPropulsion} and the values typically considered in the literature for similar works \cite{McGuire2017LowThrustElectric, qi2019ContinuousThrust,du2023LowThrustER3BP}. Given the mass of the spacecraft detailed in Section~\ref{subsec:res_external_perts}, this threshold roughly corresponds to $u_\text{sat}\approx 75 \norm{\mathbf{z}_0}$, in non-dimensional units. In that case, the control gains may be selected from the region delimited by the level curve highlighted in Fig.~\ref{fig:u_sat_min_opt} and the corresponding actuation relief limit from Fig.~\ref{fig:beta_min_opt}. We highlight the degree of conservatism of the proposed implementation as, for example, the choice of $k_1=0.5=k_2$ -- previously considered in the analysis of the base control law, in Fig.~\ref{fig:MC_L2_Halo_apo} -- lies inside the specified region but the control command never reaches the saturation threshold. In fact, this is verified for most choices of $(k_1,k_2)$ inside the established region. Still, we argue that the formal projection of a minimum saturation threshold benefits from the proposed analysis since significant reliefs may be obtained for various choices of the gains, as illustrated in Fig.~\ref{fig:beta_min_opt}. This offers a decrease in terms of $u_\text{sat,min}$ when compared to a naive analysis that simply bounds the expected actuation command magnitude with no further considerations, which would lead to the level curves previously obtained for $\eta=1.0$.


While it has been established that the previous results for ${k_1=0.5=k_2}$ already formally respect the aforementioned saturation constraint, it is still relevant to search for illustrative scenarios where saturation is actually observed. For this purpose, we select another control gain pair from the limit condition identified by the $u_\text{sat,min}=75\norm{\mathbf{z}_0}$ level curve from Fig.~\ref{fig:u_sat_min_opt}.
In particular, the pair $(k_1,k_2)=(0.6962,300)$ is purposefully selected for its large value of $k_2$. This choice is made to excite the controller and force the occurrence of saturation for validation. In practice, however, a real mission would likely benefit from more modest gains, carefully selected by leveraging performance, robustness, and the saturation threshold. Once more, a Monte Carlo analysis totaling $100$ runs is carried out considering the operational constraints and errors from Table~\ref{tab:L2_Halo_op_params}, the external perturbations detailed in Section~\ref{subsec:res_external_perts} and, moreover, the established saturation threshold of $u_\text{sat}\approx 0.1~\si{\newton}$ -- equivalent to $2\times10^{-4}~\si{\meter \per \second}$ for the $500~\si{\kilogram}$ spacecraft considered. The corresponding system responses are plotted in Fig.~\ref{fig:MC_L2_Halo_apo_sat}, for the first $50$ days of simulation, with a detailed zoom on the control profile over the first hour. Note that, given the error superimposed on the control profile, the actual saturation threshold may differ slightly between each run.

\begin{figure}[htpb]
	\centering
	\includegraphics[width=\linewidth]{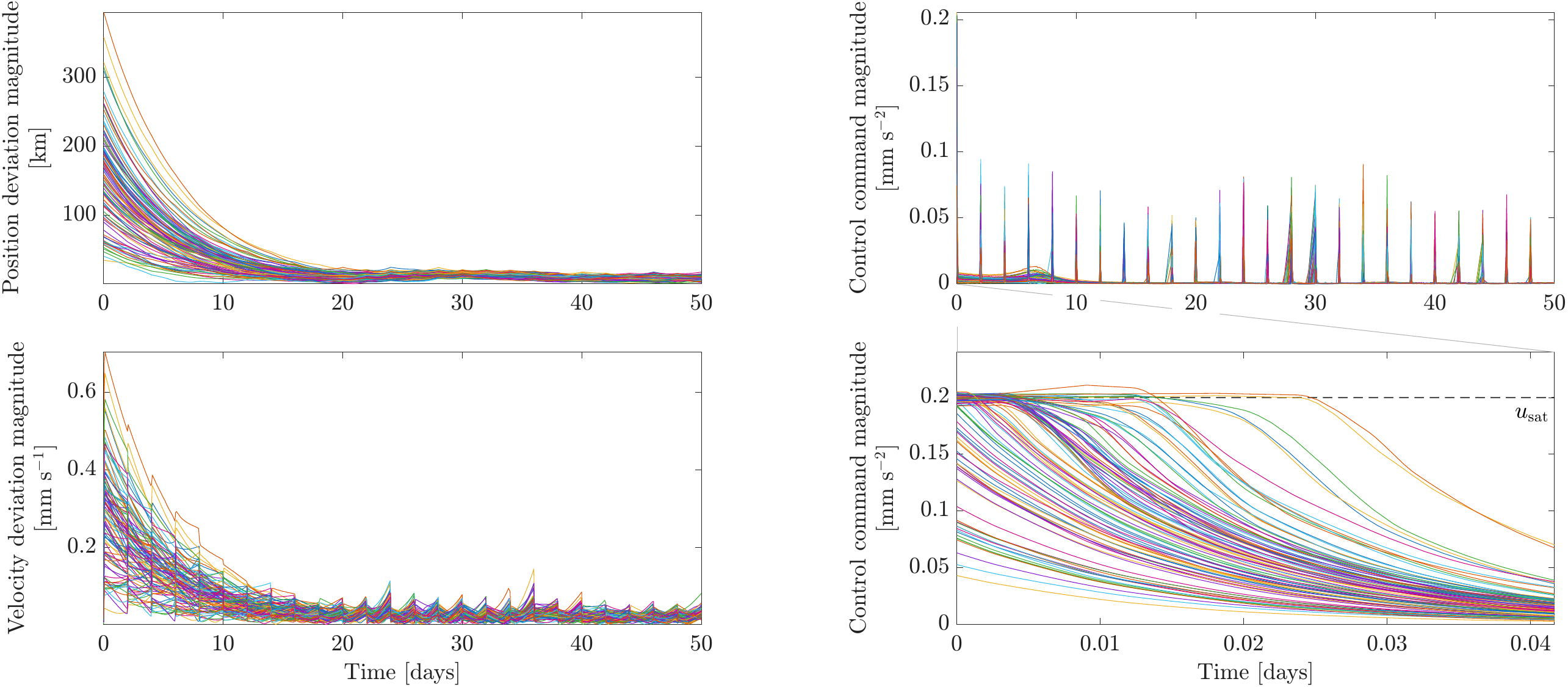}
	\caption{System responses of the Monte Carlo analysis of station-keeping about the L2 Northern Halo QPO under external perturbations, over the first $50$ days of simulation, starting at apoapsis. The control law with saturation is employed, taking $k_1=0.6962$ and $k_2=300$, considering a saturation threshold of $u_\text{sat}=2\times10^{-4}~\si{\meter \per \second}$.}
	\label{fig:MC_L2_Halo_apo_sat}
\end{figure}

The results presented in Fig.~\ref{fig:MC_L2_Halo_apo_sat} bring confirmation to the respect for the actuation saturation constraint, without a sacrifice to the convergence towards the nominal trajectory -- in fact, the large insertion error appears to be corrected equally as fast as the previous results from Fig.~\ref{fig:MC_L2_Halo_apo}. Given that the control gains have been chosen purposefully to force the occurrence of saturation, each new measurement leads to a pronounced increase in control command, manifested in the form of spikes on the respective plot. Still, the observation of the detailed zoom on the first hour of simulation shows that the largest spike, at insertion, respects the saturation constraint (up to the actuation error of $2\%$) and that some of the trials do actually saturate. Further analysis on the total deviation with time also shows that its magnitude is largest at insertion, confirming that the proposed gain choice yields a conservative envelope on the exponential stability and saturation guarantees, that thus hold for every segment of the orbit. Future work should assess if it is beneficial to recompute new control gains as new measurements become available, in hopes of optimizing the control response taking advantage of the initially decreasing deviation from the target trajectory.

\section{Conclusions}
\label{sec:conc}

This work proposes a nonlinear control law for station-keeping about libration point QPOs of the Earth-Moon system, stemming from the application of a backstepping technique. The base control law is shown to yield almost global guarantees for UES under a realistic, high-fidelity model of the dynamics. By pursuing a continuous strategy, as opposed to classic impulsive-based alternatives, the use of highly-efficient emerging technologies in terms of electric propulsion is made possible. A relief to the linear contributions of the base control law is pondered in order to formally adhere to a saturation constraint, improving applicability to these low-thrust systems. The improved formulation is shown to uphold exponential stability guarantees even in the event of saturation, under mild assumptions on the magnitude of the deviation.

Numerical simulations are carried out to validate the adequacy of the base control solution at tracking two different QPOs, considering different operational constraints and errors, as well as relevant external perturbations. Moreover, its performance is compared with other continuous and impulsive alternatives from literature, through a Monte Carlo analysis of various benchmarks in control expenditure, deviation envelope, and control idle time. This study attests to the quality of the proposed solution, which is able to achieve a performance that is on par with its continuous peers, even those actively striving for optimization. In this regard, the proposed strategy provides benefits in terms of computational efficiency.

The ramifications of the formal inclusion of saturation constraint are numerically studied for an L2 Northern Halo QPO, namely through the analysis of the theoretical lower bound on the saturation threshold with varying control gains. In particular, the maximum relief for exponential stability and insertion location are carefully analyzed. It is found that insertion at apoapsis provides the broadest and most adequate control gain pool, for each saturation threshold. Moreover, the best choice of maximum allowed relief is found to heavily depend on the specific gain pair selected, leveraging convergence speed with the time-dependence of the nonlinear acceleration error bound. To this end, the relief at each gain pair is numerically optimized so as to minimize the corresponding lower bound on the saturation threshold.  By considering a conservative saturation value from literature on electric space propulsion systems, an illustrative example considering a gain pair resulting from these optimization efforts is further assessed through simulations. Saturation is observed when answering a large insertion error, though without a sacrifice to the convergence capabilities of the proposed approach.

Future work should assess taking into account actuator performance in the established benchmarks to better reflect differences between high- and low-thrust options and expose key differences with classic impulsive station-keeping strategies. In regards to control under saturation, it is relevant to assess the definition of the control gains at each new measurement, to take even further advantage of the time-dependence of the proposed analysis. Moreover, given that thrust in space is essentially based on mass transfer, it would be relevant to assess a more complete propulsion model that considers mass depletion over time, which is expected to introduce relevant, non-negligible variations in thrust magnitude. Different approaches to relieve the original backstepping control law should also be evaluated, namely considering the attenuation of the nonlinear acceleration term or separate adjustments to the position and velocity deviation terms of the linear contribution. Finally, extensions to account for faults at the level of the sensors and actuators should be investigated, leveraging existing work on fault-tolerant control with applications to nonlinear backstepping (e.g., \cite{Bounemeur2023Fuzzy,Bounemeur2025FuzzySat}).

\appendix

\section{Sparsity of Colliding Trajectories}
\label{ap:collision_traj_proof}

In the proof of Proposition~\ref{prop:base_backstepping}, it is stated that the set of initial conditions leading to a collision in finite time with any celestial body -- a singularity in the error dynamics in Eq.~\eqref{eq:HFEM_variational} -- constitutes a set of measure zero in $\mathbb{R}^6$ . In this section, we provide a formal proof for this statement.

Under the nonlinear control law from Eq.~\eqref{eq:base_law}, the HFEM error dynamics in Eq.~\eqref{eq:HFEM_variational} become
\begin{equation}
	\label{eq:ap_lin_dyn}
	\begin{aligned}
		\dot{\mathbf{z}}_1(t) &= \mathbf{z}_2(t),\\
		\dot{\mathbf{z}}_2(t) & = - (\mathbf{I}_3 + \mathbf{K}_2 \mathbf{K}_1) \mathbf{z}_1(t) - (\mathbf{K}_1 + \mathbf{K}_2) \mathbf{z}_2(t),
	\end{aligned}
\end{equation}
over $\mathbf{z}(t)\in\mathcal{D}(t)$. For fixed gains, the system is linear time-invariant and its response, from the initial conditions $\mathbf{z}_0$, obeys
\begin{equation}
	\label{eq:ap_z_STM}
	\mathbf{z}(t) = \boldsymbol{\Phi}(t,0)\mathbf{z}_0 ,
\end{equation}
where $\boldsymbol{\Phi}(t,0)$ denotes the state-transition matrix (STM) of Eq.~\eqref{eq:ap_lin_dyn}, between time $0$ and $t$. As such, we may define the \textit{collision set}, i.e. the set of initial states that lead to a collision in finite time, as
\begin{equation*}
	\mathcal{C} = \left\{ \mathbf{z}_0\in \mathbb{R}^6~\vert ~\exists t\geq0, \exists j \in \{1,\dots,P\}, ~\text{such that}~ \left[\boldsymbol{\Phi}(t,0)\mathbf{z}_0 \right]^{(1:3)}= \mathbf{w}_j(t) \right\},
\end{equation*}
where $\mathbf{x}^{(1:3)}$ denotes the first three entries of $\mathbf{x}$, such that $\mathbf{z}_1(t)=[\boldsymbol{\Phi}(t,0)\mathbf{z}_0]^{(1:3)}$, following Eq.~\eqref{eq:ap_z_STM}.
We thus want to show that $\mathcal{C}$ has measure zero in $\mathbb{R}^6$.

We assume that each celestial body moves smoothly in configuration space and remains sufficiently far from each other. Moreover, the target trajectory is assumed to be collision-free, i.e. that it also remains sufficiently far from every celestial body. In that case, for each $j=1,\dots,P$, define the map
\begin{equation*}
	\boldsymbol{\psi}_j : \mathcal{S}_j\to\mathbb{R}^6, \quad \boldsymbol{\psi}_j(\mathbf{x},t) = \boldsymbol{\Phi}^{-1}(t,0)\mathbf{x}=\boldsymbol{\Phi}(0,t)\mathbf{x}, 
\end{equation*}
where
\begin{equation*}
	\mathcal{S}_j = \left\{ (\mathbf{x},t)\in\mathbb{R}^6\times [0,\infty): \mathbf{x}^{(1:3)}=\mathbf{w}_j(t)\right\}
\end{equation*}
denotes the set of all state-time pairs whose position coincides with the $j$-th singularity at time $t$.  In that case, we may take advantage of the properties of the STM to map all state-time pairs identifying a collision, encoded in each $\mathcal{S}_j$, towards their respective initial conditions at $t=0$. In other words, the collision set may be written compactly as
\begin{equation}
	\label{eq:ap_C}
	\mathcal{C} = \bigcup_{j=1}^{P} \boldsymbol{\psi}_j(\mathcal{S}_j).
\end{equation}Clearly, each $\mathcal{S}_j$ is a $C^1$ manifold of dimension $m=4$ in $\mathbb{R}^6\times[0,\infty)$, as it represents a smooth surface in velocity and time, under the position constraint. Moreover, the map $\boldsymbol{\psi}_j$ is also $C^1$, given the smooth and continuous nature of the linear dynamics in Eq.~\eqref{eq:ap_lin_dyn} and the invertibility of the STM, by definition \cite[Property 4.10]{rugh1996LinearSystems}.

Following \cite[Corollary 6.11]{lee2013introduction}, the image of a $C^1$ map $f: M\to N$ between manifolds $M$ and $N$ with respective dimensions $m$ and $n$, satisfying ${m<n}$, has measure zero in $N$. As such, each image $\boldsymbol{\psi}_j(\mathcal{S}_j)$ has measure zero in $\mathbb{R}^6$. Given that a countable union of measure zero sets also satisfies this property \cite[Proposition C.18]{lee2013introduction}, the collision set in Eq.~\eqref{eq:ap_C} has measure zero in $\mathbb{R}^6$, q.e.d.

\section{Addendum on the Critical Relief Factor}
\label{ap:sylvesters_crit}

When analyzing the stability guarantees provided by the relieved control law from Eq.~\eqref{eq:relieved_law}, through the Lyapunov candidate function in Eq.~\eqref{eq:base_lyap}, Sylvester's criterion is invoked, considering $\mathbf{K}_1=k_1\mathbf{I}_3$ and $\mathbf{K}_2=k_2\mathbf{I}_3$, with $k_1,k_2>0$. It is found that the relief factor, $\beta(t)$, must be positive and satisfy a set of conditions that may be written compactly as
\begin{equation*}
		\beta_1<\beta(t)<\beta_2,~\forall t\geq 0,
\end{equation*}
where
\begin{equation*}
	(\beta_1,\beta_2) = \begin{cases}
		\left(\dfrac{A-B}{C},  \dfrac{A+B}{C}\right), & k_1\neq 1,\\
		\left(\dfrac{1}{1+k_2}, \infty \right), & k_1=1,
	\end{cases}
\end{equation*}
with
\begin{equation*}
	A = k_1^4+2k_1 k_2+1, \quad B = 2\sqrt{(k_1 k_2+1)(k_1k_2+k_1^4)}, \quad \text{and} \quad C = (k_1^2-1)^2.
\end{equation*}
Clearly, $A,B,C>0$, for all $k_1,k_2>0$.

The discussion in Section~\ref{subsec:con_sat} relies on the assumption that $0<\beta_1<1$ and $\beta_2>1$, for all $k_1,k_2>0$. This is trivial to verify in the case of $k_1=1$. For $k_1\neq 1$, observe that
\begin{equation}
	\label{eq:ap_A_C}
	A-C = 2k_1k_2+2k_1^2>0 \implies A>C,
\end{equation}
such that $\beta_2>1$ is immediately satisfied, since $B>0$.
Moreover, given that $A>0$ is also satisfied, one has
\begin{equation*}
	A^2-B^2 = (k_1^4-1)^2>0 \implies A>B,
\end{equation*}
which leads to the conclusion that $\beta_1>0$. Finally, Eq.~\eqref{eq:ap_A_C} ensures ${A-C>0}$, such that
\begin{equation*}
		B^2-(A-C)^2 = 4k_1k_2(k_1^2-1)^2>0 \implies B> A-C \implies C>A-B,
\end{equation*}
leading to $\beta_1<1$. We therefore have $0<\beta_1<1$ and $\beta_2>0$ for all $k_1,k_2>0$, q.e.d.

\section{Acceleration Error Bound}
\label{ap:f_a_bound}

As discussed in Section~\ref{subsec:con_sat}, it is necessary to bound the deviation-induced acceleration error in the dynamics of Eq.~\eqref{eq:HFEM_variational} in order to derive analytical conditions for the respect of an actuation saturation constraint. In this section, we develop a nonlinear bounding expression for this contribution, with respect to the initial deviation magnitude, aiming for low conservatism. We do so by first exploiting key facts regarding $\ell_1$ and $\ell_2$ norms, namely by noticing
\begin{equation*}
	\begin{aligned}
		\norm{\mathbf{f}_a(t,\mathbf{z}_1(t))}&\leq \norm{\mathbf{f}_a(t,\mathbf{z}_1(t))}_1 =\sum_{j=1}^{P}\mu_j\norm{\frac{\mathbf{w}_j(t)-\mathbf{z}_1(t)}{\norm{\mathbf{w}_j(t)-\mathbf{z}_1(t)}^3} - \frac{\mathbf{w}_j(t)}{\norm{\mathbf{w}_j(t)}^3}}_1\\
		&\leq\sum_{j=1}^{P}\mu_j\left(\norm{\frac{\mathbf{z}_1(t)}{\norm{\mathbf{w}_j(t)-\mathbf{z}_1(t)}^3}}_1 + \norm{\frac{\mathbf{w}_j(t)}{\norm{\mathbf{w}_j(t)-\mathbf{z}_1(t)}^3} - \frac{\mathbf{w}_j(t)}{\norm{\mathbf{w}_j(t)}^3}}_1 \right),
	\end{aligned}
\end{equation*}
where $\norm{\mathbf{x}}_1$ denotes the $\ell_1$ norm of $\mathbf{x}$.
Doing so evidences that the magnitude of the acceleration error satisfies
\begin{equation}
	\label{eq:ap_f_a_bound_preliminary}
	\norm{\mathbf{f}_a(t,\mathbf{z}_1(t))}\leq\sum_{j=1}^{P}\mu_j \Big(\norm{\mathbf{z}_1(t)}_1 g_j(\mathbf{z}_1(t)) + \norm{\mathbf{w}_j(t)}_1 h_j(\mathbf{z}_1(t)) \Big),
\end{equation}
where
\begin{equation*}
	g_j(\mathbf{z}_1(t)) := \frac{1}{\norm{\mathbf{w}_j(t)-\mathbf{z}_1(t)}^3}, ~\mathbf{z}_1(t) \neq \mathbf{w}_j(t)
\end{equation*}
and
\begin{equation*}
	h_j(\mathbf{z}_1(t)) := \abs{g_j(\mathbf{z}_1(t)) - \frac{1}{\norm{\mathbf{w}_j(t)}^3}}, ~\mathbf{z}_1(t) \neq \mathbf{w}_j(t),~\mathbf{w}_j(t)\neq \mathbf{0}.
\end{equation*}
In that case, attention shifts to bounding each $g_j$ and $h_j$ functions. Naturally, given the nature of $h_j$, it is necessary for the target trajectory to be collision free, i.e. $\mathbf{w}_j(t)\neq\mathbf{0},~j=1,\dots,P$, over the time domain of analysis, which is assumed to be the case. Moreover, both functions require $\mathbf{z}_1(t)\neq\mathbf{w}_j(t),$ ${j=1,\dots,P},$ to be satisfied during that period. In this work, we enforce this condition by restricting the analysis to a region of proximity to the nominal trajectory, i.e. ${\norm{\mathbf{z}_1(t)}\leq \delta < \norm{\mathbf{w}_j(t)}}$, where $\delta$ is a design parameter to be detailed shortly. Through the triangle inequality, this means that
\begin{equation*}
	\norm{\mathbf{w}_j(t)}-\delta\leq\norm{\mathbf{w}_j(t)-\mathbf{z}_1(t)}\leq\norm{\mathbf{w}_j(t)}+\delta.
\end{equation*}
Under these conditions, it is clear that
\begin{equation}
	\label{eq:ap_g_bound}
	\frac{1}{(\norm{\mathbf{w}_j(t)}+\delta)^3}  \leq g_j(\mathbf{z}_1(t))\leq \frac{1}{(\norm{\mathbf{w}_j(t)}-\delta)^3}.
\end{equation}
Moreover, one may also write
\begin{equation*}
	\begin{aligned}
		h_j(\mathbf{z}_1(t))&= \max\left(g_j(\mathbf{z}_1(t))-\frac{1}{\norm{\mathbf{w}_j(t)}^3},\frac{1}{\norm{\mathbf{w}_j(t)}^3} - g_j(\mathbf{z}_1(t))\right)\\
		&\leq \max\left(\frac{1}{(\norm{\mathbf{w}_j(t)}-\delta)^3} -\frac{1}{\norm{\mathbf{w}_j(t)}^3},\frac{1}{\norm{\mathbf{w}_j(t)}^3} -\frac{1}{(\norm{\mathbf{w}_j(t)}+\delta)^3} \right).
	\end{aligned}
\end{equation*}
Since $f(x):=x^{-3}$ is a convex function over $x>0$, any secant between two points $a,b>0$ lives above the function's curve. In particular, for $a=x-y$ and $b=x+y$, with $x>y>0$, the middle point along the secant obeys
\begin{equation*}
	\frac{f(x+y)+f(x-y)}{2} \geq f(x) \implies \frac{1}{(x-y)^3}-\frac{1}{x^3}\geq \frac{1}{x^3}- \frac{1}{(x+y)^3}.
\end{equation*}
In that case, since $\norm{\mathbf{w}_j(t)}>\delta>0$, one has
\begin{equation*}
	\frac{1}{(\norm{\mathbf{w}_j(t)}-\delta)^3} -\frac{1}{\norm{\mathbf{w}_j(t)}^3} \geq \frac{1}{\norm{\mathbf{w}_j(t)}^3} -\frac{1}{(\norm{\mathbf{w}_j(t)}+\delta)^3}, 
\end{equation*}
such that
\begin{equation}
	\label{eq:ap_h_bound}
	h_j(\mathbf{z}_1(t)) \leq \frac{1}{(\norm{\mathbf{w}_j(t)}-\delta)^3}-\frac{1}{\norm{\mathbf{w}_j(t)}^3}.
\end{equation}

Evidently, the least conservative upper bounds in Eqs.~\eqref{eq:ap_g_bound} and \eqref{eq:ap_h_bound} are obtained for the lowest choice of $\delta$. If the conditions of Corollary~\ref{cor:ell_restriction_2} hold, recall that the exponential stability guarantees provide an upper bound on the magnitude of the total deviation vector, according to Eq.~\eqref{eq:UES_z_bound}, over ${t\in[0,t_s]}$, where $t_s>0$ is the final simulation time. Therefore, to ensure $\delta \geq\norm{ \mathbf{z}_1(t)}$ is satisfied over the period of analysis, the best choice for $\delta$ is, at each instant,
\begin{equation}
	\label{eq:ap_delta_optimal}
	\norm{\mathbf{z}_1(t)}\leq\norm{\mathbf{z}(t)}\leq\rho\norm{\mathbf{z}_0}e^{-\theta t} ~\leadsto~ \delta=\rho\norm{\mathbf{z}_0}e^{-\theta t},
\end{equation}
where $\rho=\sqrt{\frac{\lambda_\text{max}(\mathbf{X})}{\lambda_\text{min}(\mathbf{X})}}$, recalling the matrix $\mathbf{X}$ from the definition of the Lyapunov function in Eq.~\eqref{eq:base_lyap}, and $\theta$ is the minimum exponential convergence rate. In that case, for $\delta < \norm{\mathbf{w}_j(t)}$ to hold over $t\in[0, t_s]$, a sufficient condition to be met by the initial deviation is to belong to the open $\ell_2$ norm-ball
\begin{equation}
	\label{eq:ap_set_restriction}
	\mathcal{B} = \left\{\mathbf{x} \in \mathbb{R}^6: \rho \norm{\mathbf{x}}< r \right\}, \quad \text{with} \quad r=\min_{\substack{j=1,\dots,P\\ t\in[0, t_s]}} \norm{\mathbf{w}_j(t)}.
\end{equation}
In fact, since $ \mathbf{z}_0^T \mathbf{X} \mathbf{z}_0  \leq \lambda_{\max}(\mathbf{X}) \norm{\mathbf{z}_0}^2$ holds for $\mathbf{X}\succ 0$, one verifies that
\begin{equation*}
	\mathbf{z}_0\in\mathcal{B} \implies \mathbf{z}_0^T \mathbf{X} \mathbf{z}_0  < \lambda_{\max}(\mathbf{X})\frac{r^2}{\rho^2}= \lambda_{\min}(\mathbf{X})r^2 \implies \mathbf{z}_0 \in \mathcal{E},
\end{equation*}
such that the conditions of Corollary~\ref{cor:ell_restriction_2} are indeed satisfied. As such, any trajectory beginning at $\mathbf{z}_0\in\mathcal{B}$ is sure to evolve exponentially fast towards $\mathbf{z}=\mathbf{0}$, validating Eq.~\eqref{eq:ap_delta_optimal}. Moreover, while Eq.~\eqref{eq:ap_set_restriction} acts to restrict $\norm{\mathbf{z}_0}$, it may also be interpreted as a restriction to the control gain $k_1$ instead, through $\rho$, given an initial deviation $\norm{\mathbf{z}_0}$. In that case, $\norm{\mathbf{z}_0}<r$ must still be satisfied, i.e. $r$ must be sufficiently large. In practice, this means that the target trajectory must remain sufficiently far away from every celestial body, which usually represents only a mild assumption that is satisfied by most QPOs.

For $\mathbf{z}_0 \in \mathcal{B}$, substitution of Eqs.~\eqref{eq:ap_g_bound}, \eqref{eq:ap_h_bound}, and \eqref{eq:ap_delta_optimal} into Eq.~\eqref{eq:ap_f_a_bound_preliminary} leads to a final bound on the acceleration error in the form of
\begin{equation}
	\label{eq:ap_f_a_bound}
	\norm{\mathbf{f}_a(t,\mathbf{z}_1(t))} \leq \phi(\norm{\mathbf{z}_0}) := \max_{t\in[0,t_s]}\sum_{j=1}^P \mu_j \psi_j(\norm{\mathbf{z}_0},t), ~\forall t\in[0,t_s],
\end{equation}
with
\begin{equation*}
		\psi_j(\norm{\mathbf{z}_0},t) =
		\frac{\norm{\mathbf{w}_j(t)}_1}{(\norm{\mathbf{w}_j(t)}-\rho\norm{\mathbf{z}_0}e^{-\theta t})^3} - \frac{\norm{\mathbf{w}_j(t)}_1}{\norm{\mathbf{w}_j(t)}^3} + 
		\frac{\sqrt{3}\rho\norm{\mathbf{z}_0}e^{-\theta t}}{(\norm{\mathbf{w}_j(t)}-\rho\norm{\mathbf{z}_0}e^{-\theta t})^3},
\end{equation*}
where Cauchy-Schwarz inequality was used to write
\begin{equation*}
	\norm{\mathbf{z}_1(t)}_1\leq\sqrt{3}\norm{\mathbf{z}_1(t)}\leq \sqrt{3}\norm{\mathbf{z}(t)}\leq \sqrt{3} \rho \norm{\mathbf{z}_0}e^{-\theta t}.
\end{equation*}
This need not be done for $\norm{\mathbf{w}_j(t)}_1$ since it is precisely known for a given target trajectory, depending solely on the motion of the celestial bodies considered, which may be retrieved through their corresponding ephemeris. In that case, the value of Eq.~\eqref{eq:ap_f_a_bound} may be promptly determined through numerical means for a given initial deviation magnitude and choice of $k_1$, $k_2$, and $\beta_\text{min}$, which define $\rho$ and $\theta$.

We draw attention to a key fact about Eq.~\eqref{eq:ap_f_a_bound} and its relation with the minimum rate of convergence, $\theta$, specifically when tracking QPOs. Due to the nearly-repeating nature of these solutions, it is oftentimes the case that orbital insertion may be pondered at different locations along the trajectory, namely apoapsis or periapsis with respect to the closest celestial body. This has direct consequences at the level of each $\mathbf{w}_j(t)$, namely changing their respective initial value, which in turn alters the behavior of $\psi_j$ and the bound of the nonlinear acceleration error. In the limit case where $\theta=0$, it becomes clear that the insertion point is irrelevant, given how time-dependence is lost for every $\psi_j$. In that case, the closest approach to the nearest celestial body (or bodies) is expected to dominate $\phi(\norm{\mathbf{z}_0})$. However, if $\theta\neq 0$, it may be the case that a careful selection of the insertion point benefits $\phi(\norm{\mathbf{z}_0})$ by exploiting the time-dependence stemming from the exponential convergence guarantees. As an example, if the contribution due to the closest celestial body dominates Eq.~\eqref{eq:ap_f_a_bound}, insertion at apoapsis with respect to that body may prove ideal, as convergence towards the target orbit is expected to decrease the corresponding $\psi_j$ contribution before the next periapsis passing.

To illustrate the aforementioned considerations, consider a simple (non-realistic) example where $P=1$ and $\mathbf{w}_1(t)$ satisfies
\begin{equation}
	\label{eq:w_j_example}
	\norm{\mathbf{w}_1(t)}_1 = \norm{\mathbf{w}_1(t)}= 1+0.1\cos(t+\alpha).
\end{equation}
In a way, changing $\alpha$ is equivalent to selecting the insertion point along a periodic trajectory, with $\alpha=0$ identifying apoapsis and $\alpha=\pi$ periapsis. In that case, Fig.~\ref{fig:psi_example} provides the evolution of $\psi_1(t)$ over the equivalent of 5 revolutions, for $\alpha=0$ in Fig.~\ref{fig:psi_apo} and $\alpha=\pi$ in Fig.~\ref{fig:psi_per}, under different minimum convergence rates, $\theta$, considering $\mu_1=1$, $\rho=1$, and $\norm{\mathbf{z}_0}=0.1$.

\begin{figure}[htpb]
	\centering
	\begin{subfigure}{.48\textwidth}
		\centering
		\includegraphics[width=\linewidth]{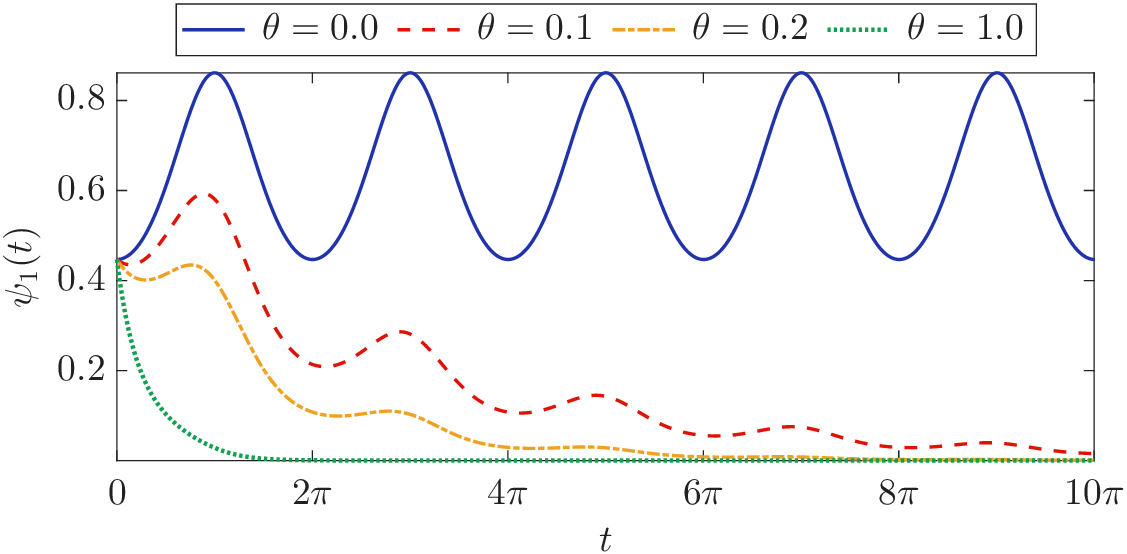}
		\caption{Insertion at apoapsis ($\alpha=0$).}
		\label{fig:psi_apo}
	\end{subfigure}%
	\hfill
	\begin{subfigure}{.48\textwidth}
		\centering
		\includegraphics[width=\linewidth]{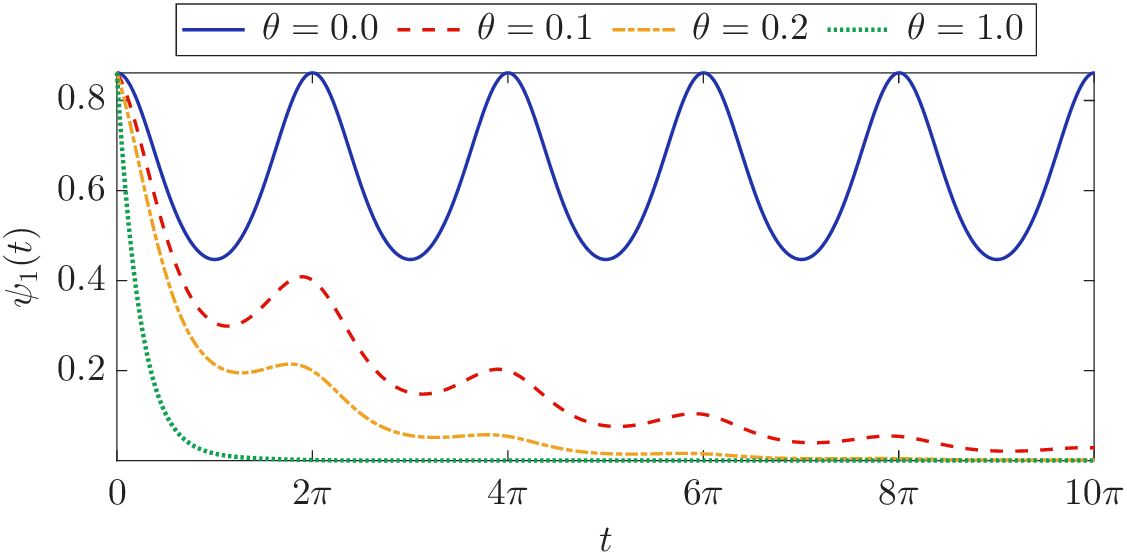}
		\caption{Insertion at periapsis ($\alpha=\pi$).}
		\label{fig:psi_per}
	\end{subfigure}
	\caption{Simple example of nonlinear acceleration error bound for a single contribution $\mathbf{w}_1(t)$, satisfying Eq.~\eqref{eq:w_j_example}, with $\mu_1=1=\rho$ and $\norm{\mathbf{z}_0}=0.1$, for various values of minimum rate of convergence, $\theta$.}
	\label{fig:psi_example}
\end{figure}

From the analysis of Fig.~\ref{fig:psi_example}, it becomes evident that the rate of convergence is only of significance if insertion is performed at a point where $\psi_1$ is reduced which, in this simple scenario, corresponds to apoapsis. In that case, it is possible to observe that even a relatively small exponential decay rate is sufficient to meaningfully decrease the maximum value of $\psi_1(t)$ achieved. For the test case presented, taking $\theta=0.2$ is already sufficient to minimize ${\phi=\max_t \mu_1\psi_1(t)}$, with larger values of $\theta$ yielding no further benefits. More realistic application scenarios are sure to introduce complexity and variety in $\psi_j(t)$. Still, it is oftentimes the case that a single celestial body (typically, the closest) dominates Eq.~\eqref{eq:ap_f_a_bound}, such that the insights provided by this simple example prove to be of broader relevance, as explored in Section~\ref{subsec:res_sat}.

\section*{Acknowledgments}
This work was supported by LARSyS FCT funding \linebreak(UID/50009/2025: DOI 10.54499/UID/50009/2025;\linebreak LA/P/0083/2020: DOI 10.54499/LA/P/0083/2020), and by\linebreak NEURASPACE Project, Contract No. 9, under Regulation (EU) 2021/241 of the European Parliament and of the Council of February 12, 2021 and the Portuguese Recovery and Resilience Program (PRR), in component 05 - Capitalization and Business Innovation, under Notice No. 425 01/C05-i01/2021 of the Regulation of Mobilizing Agendas/Alliances for re-industrialization.

\bibliographystyle{elsarticle-num} 
\bibliography{refs.bib}

\begin{thebibliography}{10}
\expandafter\ifx\csname url\endcsname\relax
  \def\url#1{\texttt{#1}}\fi
\expandafter\ifx\csname urlprefix\endcsname\relax\def\urlprefix{URL }\fi
\expandafter\ifx\csname href\endcsname\relax
  \def\href#1#2{#2} \def\path#1{#1}\fi

\bibitem{farquhar1970control}
R.~W. Farquhar, {The Control and Use of Libration-point Satellites}, Technical
  report, NASA (1970).

\bibitem{breakwell1974Stationkeeping}
J.~Breakwell, A.~Kamel, M.~Ratner, {Station-keeping for a translunar
  communication station}, Celestial Mechanics 10 (1974) 357--373.

\bibitem{wiesel1983ModalControl}
W.~{Wiesel}, W.~{Shelton}, {Modal control of an unstable periodic orbit},
  Journal of the Astronautical Sciences 31 (1983) 63--76.

\bibitem{simo1986StationkeepingInvManif}
C.~{Simó}, G.~{Goméz}, J.~{Llibre}, R.~{Martinez}, {Stationkeeping of a
  quasiperiodic halo orbit using invariant manifolds}, in: ESA Special
  Publication, Vol. 255 of ESA Special Publication, 1986, pp. 65--70.

\bibitem{simo1987OptimalControlHaloOrbits}
C.~Simó, G.~Gómez, J.~Llibre, R.~Martínez, J.~Rodríguez, On the optimal
  station keeping control of halo orbits, Acta Astronautica 15~(6) (1987)
  391--397.

\bibitem{dwivedi1975}
N.~P. Dwivedi, {Deterministic optimal maneuver strategy for multi-target
  missions}, Journal of Optimization Theory and Applications 17~(1-2) (1975)
  133--153.

\bibitem{howell1993StationkeepingPointMethod}
K.~C. Howell, H.~J. Pernicka, {Station-keeping method for libration point
  trajectories}, Journal of Guidance, Control, and Dynamics 16~(1) (1993)
  151--159.

\bibitem{folta2010StationKeeping}
D.~Folta, T.~Pavlak, K.~Howell, M.~Woodard, D.~Woodfork, {Stationkeeping of
  Lissajous Trajectories in the Earth-Moon System with Applications to
  ARTEMIS}, in: AAS/AIAA Space Flight Mechanics Meeting, San Diego, California,
  2010.

\bibitem{pavlak2012Strategy}
T.~Pavlak, K.~Howell, {Strategy for Optimal, Long-Term Stationkeeping of
  Libration Point Orbits in the Earth-Moon System}, in: AIAA/AAS Astrodynamics
  Specialist Conference, Minneapolis, Minnesota, 2012.

\bibitem{folta2013StationKeeping}
D.~Folta, T.~Pavlak, A.~Haapala, K.~Howell, M.~Woodard, {Earth–Moon libration
  point orbit stationkeeping - Theory modeling, and operations}, Acta
  Astronautica 94~(1) (2013) 421--433.

\bibitem{zhang2022StationKeepingHFEM}
R.~Zhang, Y.~Wang, Y.~Shi, C.~Zhang, H.~Zhang, {Performance analysis of
  impulsive station-keeping strategies for cis-lunar orbits with the ephemeris
  model}, Acta Astronautica 198 (2022) 152--160.

\bibitem{muralidharan2020ControlSF}
V.~Muralidharan, A.~Weiss, U.~Kalabic, {Control Strategy for Long-Term
  Station-Keeping on Near-Rectilinear Halo Orbits}, in: AAS/AIAA Space Flight
  Mechanics Meeting, Orlando, Florida, 2020.

\bibitem{ghorbani2013ContinuousAndImpulsive}
M.~Ghorbani, N.~Assadian, {Optimal station-keeping near Earth–Moon collinear
  libration points using continuous and impulsive maneuvers}, Advances in Space
  Research 52~(12) (2013) 2067--2079.

\bibitem{petersen2019JWST}
J.~Petersen, {L2 Station Keeping Maneuver Strategy for the James Webb Space
  Telescope}, in: AAS/AIAA Astrodynamics Specialist Conference, Portland,
  Maine, 2019.

\bibitem{lian2014SlidingMode}
Y.~Lian, G.~Gómez, J.~J. Masdemont, G.~Tang, {Station-keeping of real
  Earth–Moon libration point orbits using discrete-time sliding mode
  control}, Communications in Nonlinear Science and Numerical Simulation
  19~(10) (2014) 3792--3807.

\bibitem{Guzzetti2017stationkeeping}
D.~Guzzetti, E.~M. Zimovan, K.~C. Howell, D.~C. Davis, {Stationkeeping Analysis
  for Spacecraft in lunear Near Rectilinear Halo Orbits}, in: AAS/AIAA Space
  Flight Mechanics Meeting, San Antonio, Texas, 2017, pp. 3199--3218.

\bibitem{muralidharan2022StretchingDirections}
V.~Muralidharan, K.~C. Howell, {Leveraging stretching directions for
  stationkeeping in Earth-Moon halo orbits}, Advances in Space Research 69~(1)
  (2022) 620--646.

\bibitem{shastry2025ElectricPropulsion}
R.~{Shastry}, H.~{Kamhawi}, J.~D. {Frieman}, G.~C. {Soulas}, T.~G. {Gray},
  T.~R. {Verhey}, C.~D. {Kachele}, G.~J. {Williams}, N.~A. {Simmons}, R.~B.
  {Lobbia}, S.~M. {Arestie}, V.~H. {Chaplin}, J.~{Fisher}, G.~{Blackner},
  E.~{Forbes}, J.~{Hondagneu}, N.~A. {Branch}, J.~M. {Zubair}, H.~{Watts},
  {12-kW advanced electric propulsion system hall current thruster
  qualification and production status}, Journal of Electric Propulsion 4~(1)
  (2025) 61.

\bibitem{Herman2016IonPropulsion}
D.~A. Herman, W.~Santiago, H.~Kamhawi, J.~E. Polk, J.~S. Snyder, R.~R. Hofer,
  M.~J. Sekerak, {The Ion Propulsion System for the Asteroid Redirect Robotic
  Mission}, in: 52nd AIAA/SAE/ASEE Joint Propulsion Conference, Salt Lake City,
  Utah, 2016.

\bibitem{qi2019ContinuousThrust}
Y.~Qi, A.~{de Ruiter}, {Station-keeping strategy for real translunar libration
  point orbits using continuous thrust}, Aerospace Science and Technology 94
  (2019) 105376.

\bibitem{qi2022ContinuousThrust}
Y.~Qi, A.~de~Ruiter, {Trajectory correction for lunar flyby transfers to
  libration point orbits using continuous thrust}, Astrodynamics 6~(3) (2022)
  285--300.

\bibitem{nazari2014LQR_Backstepping}
M.~Nazari, W.~M. Anthony, E.~Butcher, {Continuous Thrust Stationkeeping in
  Earth-Moon L1 Halo Orbits Based on LQR control and Floquet Theory}, in:
  AIAA/AAS Astrodynamics Specialist Conference, San Diego, California, 2014.

\bibitem{jones1993H2}
B.~L. Jones, R.~H. Bishop, {$H_2$ optimal halo orbit guidance}, Journal of
  Guidance, Control, and Dynamics 16~(6) (1993) 1118--1124.

\bibitem{kulkrani2006}
J.~Kulkarni, M.~Campbell, G.~Dullerud, {Stabilization of Spacecraft Flight in
  Halo Orbits: An $H_\infty$ Approach}, IEEE Transactions on Control Systems
  Technology 14~(3) (2006) 572--578.

\bibitem{akiyama2018outputRegulation}
Y.~Akiyama, M.~Bando, S.~Hokamoto, {Station-keeping and formation flying based
  on nonlinear output regulation theory}, Acta Astronautica 153 (2018)
  289--296.

\bibitem{Nunes2025backstepping}
A.~Nunes, P.~Batista, S.~Brás, {Orbital Station-Keeping in the Earth-Moon
  System via Nonlinear Backstepping}, in: IEEE 19th International Conference on
  Control \& Automation, Tallinn, Estonia, 2025, pp. 75--80.

\bibitem{ulybyshev2015LinearProgramming}
Y.~Ulybyshev, {Long-Term Station Keeping of Space Station in Lunar Halo
  Orbits}, Journal of Guidance, Control, and Dynamics 38~(6) (2015) 1063--1070.

\bibitem{Elobaid2022MPC}
M.~Elobaid, M.~Mattioni, S.~Monaco, D.~Normand-Cyrot, {Station-Keeping of L2
  Halo Orbits Under Sampled-Data Model Predictive Control}, Journal of
  Guidance, Control, and Dynamics 45~(7) (2022) 1337--1346.

\bibitem{aerospace9120798}
S.~Cuevas~del Valle, H.~Urrutxua, P.~Solano-López, R.~Gutierrez-Ramon, A.~K.
  Sugihara, {Relative Dynamics and Modern Control Strategies for Rendezvous in
  Libration Point Orbits}, Aerospace 9~(12) (2022) 798.

\bibitem{shirobokov2017Survey}
M.~Shirobokov, S.~Trofimov, M.~Ovchinnikov, {Survey of Station-Keeping
  Techniques for Libration Point Orbits}, Journal of Guidance, Control, and
  Dynamics 40~(5) (2017) 1085--1105.

\bibitem{Nunes2026TrajectoryDesign}
A.~Nunes, S.~Brás, P.~Batista, J.~Xavier, {Designing trajectories in the
  Earth-Moon system: a Levenberg-Marquardt approach}, preprint submitted to
  Acta Astronautica (2026).
\newblock \href {https://doi.org/10.48550/arXiv.2510.18474}
  {\path{doi:10.48550/arXiv.2510.18474}}.

\bibitem{szebehely1967TheoryOfOrbits}
V.~Szebehely, {Theory of Orbits: the Restricted Problem of Three Bodies},
  Academic Press, 1967.

\bibitem{ross3BPbook}
W.~S. Koon, M.~W. Lo, J.~E. Marsden, S.~D. Ross, {Dynamical Systems, the
  Three-Body Problem, and Space Mission Design}, 2022.

\bibitem{poincare1892MethodesNouvelles}
H.~Poincaré, {Les méthodes nouvelles de la mécanique céleste}, Vol.~1,
  Gauthier-Villars et files, 1892.

\bibitem{gardner2006JWST}
J.~P. {Gardner}, J.~C. {Mather}, M.~{Clampin}, R.~{Doyon}, M.~A. {Greenhouse},
  H.~B. {Hammel}, J.~B. {Hutchings}, P.~{Jakobsen}, S.~J. {Lilly}, K.~S.
  {Long}, J.~I. {Lunine}, M.~J. {McCaughrean}, M.~{Mountain}, J.~{Nella}, G.~H.
  {Rieke}, M.~J. {Rieke}, H.-W. {Rix}, E.~P. {Smith}, G.~{Sonneborn},
  M.~{Stiavelli}, H.~S. {Stockman}, R.~A. {Windhorst}, G.~S. {Wright}, {The
  James Webb Space Telescope}, Space Science Reviews 123~(4) (2006) 485--606.

\bibitem{McComas2025IMAP}
D.~J. McComas, E.~R. Christian, N.~A. Schwadron, M.~Gkioulidou, F.~Allegrini,
  D.~N. Baker, M.~Bzowski, G.~Clark, C.~M.~S. Cohen, I.~Cohen, C.~Collura,
  M.~J. Cully, S.~Dalla, M.~I. Desai, A.~Driesman, D.~Eng, N.~J. Fox, H.~O.
  Funsten, S.~A. Fuselier, A.~Galli, J.~Giacalone, J.~Hahn, K.~P. Hegarty,
  T.~Horbury, M.~Horanyi, L.~M. Kistler, M.~A. Kubiak, S.~Kubota, S.~Livi,
  N.~Lugaz, C.~O. Lee, J.~Luhmann, W.~Matthaeus, D.~G. Mitchell, J.~G.
  Mitchell, E.~Moebius, S.~Pope, E.~Provornikova, J.~S. Rankin, D.~B.
  Reisenfeld, C.~Reno, J.~D. Richardson, C.~T. Russell, M.~M. Shaw-Lecerf,
  J.~Scherrer, R.~M. Skoug, M.~M. Shen, H.~E. Spence, Z.~Sternovsky,
  M.~Strumik, J.~R. Szalay, M.~Tapley, M.~Tokumaru, D.~L. Turner, S.~Weidner,
  J.~Westlake, P.~Wurz, G.~P. Zank, {Interstellar Mapping and Acceleration
  Probe: The NASA IMAP Mission}, Space Science Reviews 221~(8) (2025) 100.

\bibitem{li2021ChangE}
C.~Li, W.~Zuo, W.~Wen, X.~Zeng, X.~Gao, Y.~Liu, Q.~Fu, Z.~Zhang, Y.~Su, X.~Ren,
  F.~Wang, J.~Liu, W.~Yan, X.~Tan, D.~Liu, B.~Liu, H.~Zhang, Z.~Ouyang,
  {Overview of the Chang'e-4 Mission: Opening the Frontier of Scientific
  Exploration of the Lunar Far Side}, Space Science Reviews 217~(2) (2021) 35.

\bibitem{gomez2002SolarSystemFrequencies}
G.~Gómez, J.~J. Masdemont, J.~M. Mondelo, Solar system models with a selected
  set of frequencies, Astronomy and Astrophysics 390~(2) (2002) 733--749.

\bibitem{AlmanzaSoto2025PersistenceOfStructures}
J.-P. Almanza-Soto, K.~Howell, {Persistence of Restricted Three-Body Problem
  Normal Form Structures in Higher-Fidelity Models}, in: AAS/AIAA Astrodynamics
  Specialist Conference, Boston, Massachusetts, 2025.

\bibitem{park2025CharacterizationOfL2Analogs}
B.~Park, K.~C. Howell, {Characterization of Earth-Moon L2 halo analogs in an
  ephemeris model using the elliptic restricted three-body problem}, Advances
  in Space Research 75~(6) (2025) 5078--5109.

\bibitem{sanaga2024ChallengingRegion}
R.~R. Sanaga, K.~Howell, {Analyzing the Challenging Region in the Earth-Moon L2
  Halo Family via Hill Restricted Four-Body Problem Dynamics}, in: AIAA SCITECH
  Forum, Orlando, Florida, 2024.

\bibitem{khalil2001NLSystems}
H.~K. Khalil, Nonlinear Systems, 3rd Edition, Prentice-Hall, 2002.

\bibitem{tudatspace}
D.~{Dirkx}, M.~{Fayolle}, G.~{Garrett}, M.~{Avillez}, K.~{Cowan}, S.~{Cowan},
  J.~{Encarnacao}, C.~{Fortuny Lombrana}, J.~{Gaffarel}, J.~{Hener}, X.~{Hu},
  M.~{van Nistelrooij}, F.~{Oggionni}, M.~{Plumaris}, {The open-source
  astrodynamics Tudatpy software - overview for planetary mission design and
  science analysis}, in: European Planetary Science Congress, Granada, Spain,
  2022.

\bibitem{jpl_3bp_orbits}
{NASA Jet Propulsion Laboratory},
  \href{https://ssd.jpl.nasa.gov/tools/periodic_orbits.html}{{Three-Body
  Periodic Orbits}}, {Accessed: 2025-12-11} (2025).
\newline\urlprefix\url{https://ssd.jpl.nasa.gov/tools/periodic_orbits.html}

\bibitem{williams2017NRHO}
J.~Williams, D.~Lee, R.~Whitley, K.~Bokelmann, D.~Davis, C.~Berry, {Targeting
  Cislunar Near Rectilinear Halo Orbits for Human Space Exploration}, in:
  AAS/AAIA Space Flight Mechanics Meeting, San Antonio, Texas, 2017.

\bibitem{davis2017Stationkeeping}
D.~Davis, S.~Phillips, K.~Howell, S.~Vutukuri, B.~McCarthy, {Stationkeeping and
  Transfer Trajectory Design for Spacecraft in Cislunar Space}, in: AAS/AAIA
  Astrodynamics Specialist Conference, Stevenson, Washington, 2017.

\bibitem{Spreen2023BaselineNRHO}
E.~M. Zimovan-Spreen, S.~T. Scheuerle, B.~P. Mccarthy, D.~C. Davis, K.~C.
  Howell, {Baseline Orbit Generation for Near Rectilinear Halo Orbits}, in:
  AAS/AAIA Astrodynamics Specialist Conference, Big Sky, Montana, 2023.

\bibitem{matlabOptimizationToolbox}
{The MathWorks Inc.},
  \href{https://www.mathworks.com/help/optim/index.html}{{Optimization Toolbox
  version: 25.2 (R2025b)}}, {Accessed: 2026-21-01} (2025).
\newline\urlprefix\url{https://www.mathworks.com/help/optim/index.html}

\bibitem{McGuire2017LowThrustElectric}
M.~L. McGuire, L.~M. Burke, S.~L. McCarty, K.~J. Hack, R.~J. Whitley, D.~C.
  Davis, C.~Ocampo, {Low Thrust Cis-Lunar Transfers Using a 40 kW-Class Solar
  Electric Propulsion Spacecraft}, in: AAS/AIAA Astrodynamics Specialist
  Conference, Stevenson, Washington, 2017.

\bibitem{du2023LowThrustER3BP}
C.~Du, O.~Starinova, Y.~Liu, {Low-thrust transfer trajectory planning and
  tracking in the Earth-Moon elliptic restricted three-body problem}, Nonlinear
  Dynamics 111~(11) (2023) 10201--10216.

\bibitem{Bounemeur2023Fuzzy}
A.~Bounemeur, M.~Chemachema, {General fuzzy adaptive fault-tolerant control
  based on Nussbaum-type function with additive and multiplicative sensor and
  state-dependent actuator faults}, Fuzzy Sets and Systems 468 (2023) 108616.

\bibitem{Bounemeur2025FuzzySat}
A.~Bounemeur, M.~Chemachema, {Finite-time output-feedback fault tolerant
  adaptive fuzzy control framework for a class of MIMO saturated nonlinear
  systems}, International Journal of Systems Science 56~(4) (2025) 733--752.

\bibitem{rugh1996LinearSystems}
W.~J. Rugh, {Linear system theory}, 2nd Edition, Prentice Hall, 1996.

\bibitem{lee2013introduction}
J.~Lee, Introduction to Smooth Manifolds, 2nd Edition, Graduate Texts in
  Mathematics, Springer New York, 2013.

\end{thebibliography}
\end{document}